\documentclass[pra,aps,nopacs,showkeys,onecolumn,twoside,superscriptaddress]{revtex4}

\usepackage{array}[=2016-10-06]

\usepackage{amsmath,amsfonts,amssymb,caption,color,epsfig,graphics,graphicx,hyperref,latexsym,mathrsfs,revsymb,theorem,url,verbatim,epstopdf,mathtools,enumerate}
\usepackage{subfigure}\usepackage{makecell}\usepackage{multirow}\usepackage{diagbox}
\usepackage{booktabs,tabularx,placeins}
\hypersetup{colorlinks,linkcolor={blue},citecolor={blue},urlcolor={blue}}

\usepackage[qm]{qcircuit}

\newtheorem{definition}{Definition}
\newtheorem{proposition}[definition]{Proposition}
\newtheorem{lemma}[definition]{Lemma}

\newtheorem{theorem}[definition]{Theorem}
\newtheorem{corollary}[definition]{Corollary}
\newtheorem{conjecture}[definition]{Conjecture}

\newtheorem{example}[definition]{Example}
\newtheorem{question}[definition]{Question}
\newtheorem{memo}[definition]{Memo}
\newtheorem{appendixlemma}{Lemma}[section]

\def\squareforqed{\hbox{\rlap{$\sqcap$}$\sqcup$}}
\def\qed{\ifmmode\squareforqed\else{\unskip\nobreak\hfil
\penalty50\hskip1em\null\nobreak\hfil\squareforqed
\parfillskip=0pt\finalhyphendemerits=0\endgraf}\fi}
\def\endenv{\ifmmode\;\else{\unskip\nobreak\hfil
\penalty50\hskip1em\null\nobreak\hfil\;
\parfillskip=0pt\finalhyphendemerits=0\endgraf}\fi}
\newenvironment{proof}{\noindent \textbf{{Proof.~} }}{\qed}
\newenvironment{resultproof}[1]{\noindent \textbf{Proof of #1.~}}{\qed}
\def\Dbar{\leavevmode\lower.6ex\hbox to 0pt
{\hskip-.23ex\accent"16\hss}D}
\makeatletter
\def\url@leostyle{%
  \@ifundefined{selectfont}{\def\UrlFont{\sf}}{\def\UrlFont{\small\ttfamily}}}
\makeatother
\def\bcj{\begin{conjecture}}
\def\ecj{\end{conjecture}}
\def\bcr{\begin{corollary}}
\def\ecr{\end{corollary}}
\def\bd{\begin{definition}}
\def\ed{\end{definition}}
\def\bea{\begin{eqnarray}}
\def\eea{\end{eqnarray}}
\def\beq{\begin{equation}}
\def\eeq{\end{equation}}
\def\bal{\begin{aligned}}
\def\eal{\end{aligned}}
\def\bem{\begin{enumerate}}
\def\eem{\end{enumerate}}
\def\bex{\begin{example}}
\def\eex{\end{example}}
\def\bim{\begin{itemize}}
\def\eim{\end{itemize}}
\def\bl{\begin{lemma}}
\def\el{\end{lemma}}
\def\bma{\begin{bmatrix}}
\def\ema{\end{bmatrix}}
\def\bpf{\begin{proof}}
\def\epf{\end{proof}}
\def\bpp{\begin{proposition}}
\def\epp{\end{proposition}}
\def\bqu{\begin{question}}
\def\equ{\end{question}}
\def\er{\end{remark}}
\def\bt{\begin{theorem}}
\def\et{\end{theorem}}
\def\bmm{\begin{memo}}
\def\emm{\end{memo}}

\def\btb{\begin{tabular}}
\def\etb{\end{tabular}}

\newcommand{\nc}{\newcommand}

\def\a{\alpha}
\def\b{\beta}

\def\t{\theta}

\def\r{\rho}
\def\s{\sigma}

\def\ps{\psi}

\nc{\bbA}{\mathbb{A}} \nc{\bbB}{\mathbb{B}} \nc{\bbC}{\mathbb{C}}
 \nc{\bbD}{\mathbb{D}} \nc{\bbE}{\mathbb{E}} \nc{\bbF}{\mathbb{F}}
 \nc{\bbG}{\mathbb{G}} \nc{\bbH}{\mathbb{H}} \nc{\bbI}{\mathbb{I}}
 \nc{\bbJ}{\mathbb{J}} \nc{\bbK}{\mathbb{K}} \nc{\bbL}{\mathbb{L}}
 \nc{\bbM}{\mathbb{M}} \nc{\bbN}{\mathbb{N}} \nc{\bbO}{\mathbb{O}}
 \nc{\bbP}{\mathbb{P}} \nc{\bbQ}{\mathbb{Q}} \nc{\bbR}{\mathbb{R}}
 \nc{\bbS}{\mathbb{S}} \nc{\bbT}{\mathbb{T}} \nc{\bbU}{\mathbb{U}}
 \nc{\bbV}{\mathbb{V}} \nc{\bbW}{\mathbb{W}} \nc{\bbX}{\mathbb{X}}
 \nc{\bbZ}{\mathbb{Z}}

 \nc{\bA}{{\bf A}} \nc{\bB}{{\bf B}} \nc{\bC}{{\bf C}}
 \nc{\bD}{{\bf D}} \nc{\bE}{{\bf E}} \nc{\bF}{{\bf F}}
 \nc{\bG}{{\bf G}} \nc{\bH}{{\bf H}} \nc{\bI}{{\bf I}}
 \nc{\bJ}{{\bf J}} \nc{\bK}{{\bf K}} \nc{\bL}{{\bf L}}
 \nc{\bM}{{\bf M}} \nc{\bN}{{\bf N}} \nc{\bO}{{\bf O}}
 \nc{\bP}{{\bf P}} \nc{\bQ}{{\bf Q}} \nc{\bR}{{\bf R}}
 \nc{\bS}{{\bf S}} \nc{\bT}{{\bf T}} \nc{\bU}{{\bf U}}
 \nc{\bV}{{\bf V}} \nc{\bW}{{\bf W}} \nc{\bX}{{\bf X}}
 \nc{\bZ}{{\bf Z}}

\nc{\cA}{{\cal A}} \nc{\cB}{{\cal B}} \nc{\cC}{{\cal C}}
\nc{\cD}{{\cal D}} \nc{\cE}{{\cal E}} \nc{\cF}{{\cal F}}
\nc{\cG}{{\cal G}} \nc{\cH}{{\cal H}} \nc{\cI}{{\cal I}}
\nc{\cJ}{{\cal J}} \nc{\cK}{{\cal K}} \nc{\cL}{{\cal L}}
\nc{\cM}{{\cal M}} \nc{\cN}{{\cal N}} \nc{\cO}{{\cal O}}
\nc{\cP}{{\cal P}} \nc{\cQ}{{\cal Q}} \nc{\cR}{{\cal R}}
\nc{\cS}{{\cal S}} \nc{\cT}{{\cal T}} \nc{\cU}{{\cal U}}
\nc{\cV}{{\cal V}} \nc{\cW}{{\cal W}} \nc{\cX}{{\cal X}}
\nc{\cZ}{{\cal Z}}

\nc{\hA}{{\hat{A}}} \nc{\hB}{{\hat{B}}} \nc{\hC}{{\hat{C}}}
\nc{\hD}{{\hat{D}}} \nc{\hE}{{\hat{E}}} \nc{\hF}{{\hat{F}}}
\nc{\hG}{{\hat{G}}} \nc{\hH}{{\hat{H}}} \nc{\hI}{{\hat{I}}}
\nc{\hJ}{{\hat{J}}} \nc{\hK}{{\hat{K}}} \nc{\hL}{{\hat{L}}}
\nc{\hM}{{\hat{M}}} \nc{\hN}{{\hat{N}}} \nc{\hO}{{\hat{O}}}
\nc{\hP}{{\hat{P}}} \nc{\hR}{{\hat{R}}} \nc{\hS}{{\hat{S}}}
\nc{\hT}{{\hat{T}}} \nc{\hU}{{\hat{U}}} \nc{\hV}{{\hat{V}}}
\nc{\hW}{{\hat{W}}} \nc{\hX}{{\hat{X}}} \nc{\hZ}{{\hat{Z}}}

\nc{\hn}{{\hat{n}}}

\def\Span{\mathop{\rm span}}
\def\dim{\mathop{\rm Dim}}

\def\max{\mathop{\rm max}}
\def\min{\mathop{\rm min}}

\def\rank{\mathop{\rm rank}}

\def\supp{\mathop{\rm supp}}
\def\tr{\mathop{\rm Tr}}

\def\dg{\dagger}

\def\ox{\otimes}

\def\ra{\rightarrow}

\newcommand{\bra}[1]{\langle#1|}
\newcommand{\ket}[1]{|#1\rangle}
\newcommand{\proj}[1]{| #1\rangle\!\langle #1 |}

\newcommand{\braket}[2]{\langle#1|#2\rangle}

\newcommand{\norm}[1]{\lVert#1\rVert}
\newcommand{\abs}[1]{\left|#1\right|}

\def\Dbar{\leavevmode\lower.6ex\hbox to 0pt
{\hskip-.23ex\accent"16\hss}D}

\begin{document}

\large

\title{
Rank and Range Criteria for Mixed-State Determination from Local Marginals}


\author{Xinyu Qiu}
\affiliation{LMIB(Beihang University), Ministry of Education, and School of Mathematical Sciences, Beihang University, Beijing 100191, China}

\author{Minglong Qin}
\affiliation{Centre for Quantum Technologies, National University of Singapore, Singapore 117543, Singapore}

\author{Caohan Cheng}
\affiliation{LMIB(Beihang University), Ministry of Education, and School of Mathematical Sciences, Beihang University, Beijing 100191, China}

\author{Lin Chen}\email[]{linchen@buaa.edu.cn (corresponding author)}
\affiliation{LMIB(Beihang University), Ministry of Education, and School of Mathematical Sciences, Beihang University, Beijing 100191, China} 

\author{Delin Chu} \affiliation{ Department of Mathematics, National University of Singapore, Singapore 119076, Singapore}

\begin{abstract}
Determining whether a mixed quantum state is uniquely determined among all states by its k-body marginals (k-UDA) is a fundamental problem in quantum system certification. We develop a range-based approach to this problem by analyzing the structure of the range of the global state. For three-qubit states, we show that states with GHZ-SLOCC-free ranges are 2-UDA at ranks one, three, and four. We derive a necessary and sufficient range criterion for rank-two 2-UDA states and reduce it to a finite quadratic-form test. 
To cover the remaining range configurations, we formulate an exact range-restricted semidefinite programming criterion and extend it to arbitrary finite-dimensional tripartite states. We also show that every three-qubit state of rank at least five is not 2-UDA, and further extend high-rank obstructions to multipartite systems. 
For a channel-based multipartite family, we characterize exactly when a state is $(n-1)$-UDA and show that lower-order marginals never suffice. 
Finally, we apply these results to the certification of genuine multipartite entanglement.
\end{abstract}

\keywords{quantum marginal problem, uniquely determined among all states,
mixed quantum states, semidefinite programming, genuine multipartite
entanglement}

\maketitle

\section{Introduction}
\label{sec:int}

Inferring global properties of a quantum system from local information
is a central task in quantum-system certification. The quantum marginal
problem first asks whether a prescribed
collection of reduced density matrices is compatible with a common
global state \cite{coleman1963structure,klyachko2006quantum}.
Spectral variants of the quantum marginal problem have been analyzed
using representation-theoretic and symplectic-geometric methods
\cite{christandl2006spectra,christandl2014eigenvalue}.
Important formulations of this problem, such as the
$N$-representability problem, are
QMA-complete \cite{liu2007nrepresentability}. We consider the stronger
question of uniqueness. A state is uniquely determined among all
states (UDA) by the chosen marginals if they are compatible with no
other global state. This uniqueness is the identifiability condition
that allows local data to certify the global state. It underlies
tomography from reduced density operators
\cite{swingle2014reconstructing,xin2017quantum,haah2017sample,gao2025optimal}, 
and also complements fidelity-estimation and
state-verification protocols based on local observables
\cite{flammia2011fidelity,pallister2018verification}. The UDA property is further
connected with certification by local Hamiltonians. A unique
ground state of a quasi-local Hamiltonian is UDA by the marginals on
the interaction supports, whereas the converse need not hold
\cite{from2012chen,karuvade2019uniquely}. Related uniqueness
conditions also govern whether a target state can be prepared as the
sole steady state of quasi-local dissipative dynamics
\cite{ticozzi2012stabilizing,karuvade2018generic}. Conversely, the
failure of small-subsystem marginals to distinguish different global
states is the local indistinguishability that protects quantum code
spaces and topologically ordered ground-state manifolds
\cite{knill1997theory,bravyi2010topological}. Finally, marginal
compatibility reflects the structure of multipartite correlations. It
can reveal irreducible higher-order information and certify global
entanglement, even when the individual marginals are separable
\cite{zhou2008irreducible,chen2014role,miklin2016emergent,
shi2025entangle,navascues2021entanglement}.

Pure states have a simpler structure and have therefore been
studied extensively since the UDA problem was first introduced,
although a complete characterization remains open in most settings.
Motivated by the characterization of three-qubit 2-UDA pure states
\cite{linden2002almost,diosi2004three}, generic multipartite pure-state uniqueness was
subsequently related to the size and number of the available marginals
\cite{linden2002parts,jones2005parts}. Complementary reducibility
results were obtained for $W$-type states and generic four-particle
pure states
\cite{parashar2009nqubit,rana2011optimal,wyderka2017almost}. For
$n$-qubit pure states, the generalized
GHZ states and their LU equivalents have been identified as the only exceptions, first
among pure states and then among arbitrary pure or mixed
states \cite{walck2008only,walck2009parts}. 
Technically, the collections of marginals that admit a global quantum-state
extension form a convex set. The boundary and facial structure of this
set, together with the fibers of the marginal map, provide a geometric
framework for determining whether local marginals uniquely specify the
global state \cite{chen2012erdahl}.

In realistic physical settings, quantum systems are typically described
by mixed states. It is therefore important to characterize mixed UDA
states, which have received less attention than their pure counterparts.
A bipartite mixed state is 1-UDA if and only if it has a pure one-body
marginal \cite{qiu2026mixed}.
As the next step, the 1-UDA problem
for three qubits is elementary. Up to a system permutation, a state is
1-UDA precisely when it has the form
$\proj{a}\ox\proj{b}\ox\tau_C$, where $\tau_C$ is an arbitrary
one-qubit state. Thus, the classification of three-qubit 2-UDA states is the first genuinely nontrivial mixed-state case and is the main focus of this work. A generic
low-rank tripartite result reduces to rank one when all three local
dimensions are two and therefore does not cover mixed three-qubit
states \cite{chen2013uniqueness}. 
Additivity constructions for mixed UDA states have been studied
\cite{shen2023additivity}, and several families of mixed UDA states
have been completely characterized. More
generally, a state cannot be $k$-UDA if the face determined by its
range contains a non-$k$-UDA state \cite{qiu2026mixed}. In particular,
the UDA status depends only on the range and not on the nonzero
spectrum. Together with the characterization of three-qubit pure
states \cite{linden2002almost,walck2009parts}, this face property shows
that a three-qubit state cannot be 2-UDA if its range contains a
GHZ-LU vector. The existing
mixed-state classifications \cite{qiu2026mixed}, however, rely on a prescribed product
structure and do not characterize a general three-qubit mixed state.
This gap motivates us to move toward a
complete characterization of the
three-qubit 2-UDA problem without assuming such a product factorization and,
more broadly, to extend the resulting criteria and methods to multipartite
systems.

In this work, we study mixed UDA states through the structure of their
ranges and then extend the resulting methods to multipartite systems.
First, we consider the 2-UDA problem for three-qubit mixed states. We
use rank as a coarse parameter and the structure of the state range as
the finer invariant. The known GHZ-LU obstruction directs attention to
ranges containing no GHZ-LU vectors, while its converse fails by  Example~\ref{exp:GHZ-LU_not2UDA}. We therefore
begin with the stronger condition that the range contains no
GHZ-SLOCC vector. Using projective geometry, we show that every
three- or four-dimensional range satisfying this condition is
contained, up to the relevant equivalences, in one of the canonical
subspaces $\cT$ in \eqref{eq:cT} and $\cC_\theta$ in
\eqref{eq:cC_t}. We then prove that every state whose range is
contained in either canonical subspace is 2-UDA. Consequently, every GHZ-SLOCC-free
three-qubit state of rank one, three, or four is 2-UDA.
For every rank-two state, we
derive a necessary and sufficient range criterion and reduce it to
a finite quadratic-form test. For states outside the structural classification, generalized state inversion restricts all compatible states to an enlarged range determined by the given state, within which the marginal-preserving directions form a compact spectrahedron. This yields an exact 2-UDA criterion in terms of finitely many range-restricted semidefinite programs (SDPs), covering in particular the remaining rank-three and rank-four cases. Finally, we show that every
five-dimensional three-qubit range contains a GHZ-LU vector. Hence
no three-qubit state of rank at least five is 2-UDA, and
$\rank\rho\leq4$ is the sharp universal rank bound. 

We next extend the preceding methods to multipartite systems, and establish general high-rank
obstructions for UDA states in arbitrary finite local dimensions. For $n$ qubits, we show that 
rank at least $2^n-n$ rules out $k$-UDA for every $k\leq n-2$. A
spin-flip refinement establishes the same threshold for
$(n-1)$-UDA when $n\geq6$ and obtains the near-threshold bound
$2^n-n+1$ when $n=4,5$. We complement these universal obstructions
with a channel-based family on
$(\bbC^d)^{\ox(n-1)}\ox\bbC^s$. Its $(n-1)$-UDA property is
characterized by the purity and pairwise nonorthogonality of the
channel outputs, while no lower-order marginals determine a state in
this family. 
Finally, we apply the UDA criteria to GME certification. We obtain GME detection length two for
three generally nonsymmetric three-qubit families and maximal
detection length $n$ for a coherent channel family. We also construct
an $n$-qubit rank-two family whose separable 2-marginals jointly
certify GME. For $n\geq4$, these marginals certify GME without uniquely
determining the global state. The main results of this work are summarized in Table~\ref{tab:main-contributions}.

\begin{table}[htbp]
\caption{Summary of the main results for three-qubit and multipartite UDA states and applications}
\label{tab:main-contributions}
\centering
\small
\renewcommand{\arraystretch}{1.12}
\renewcommand{\tabularxcolumn}[1]{m{#1}}
\setlength{\arrayrulewidth}{0.4pt}
\begin{tabularx}{0.98\textwidth}{|
>{\centering\arraybackslash}m{0.105\textwidth}|
>{\raggedright\arraybackslash}m{0.21\textwidth}|
>{\raggedright\arraybackslash}X|
>{\raggedright\arraybackslash}m{0.19\textwidth}|}
\hline
Role & Scope & Main conclusion & Location \\
\hline
\multirow[c]{3}{0.105\textwidth}[-4.5mm]{\centering Three-qubit 2-UDA criteria}
& GHZ-SLOCC-free states
& Canonical range inclusions establish 2-UDA for ranks one, three,
and four.
& Propositions~\ref{pro:GHZ_free_subspace_T_C}
and~\ref{pro:T_Ct_2UDA}, and Theorem~\ref{th:GHZ_free} \\
\cline{2-4}
& Rank-two states
& A complete range criterion and a finite quadratic-form test
characterize 2-UDA.
& Proposition~\ref{pro:rank_two_2UDA_parameter} \\
\cline{2-4}
& Remaining range configurations and high ranks
& Exact statewise SDP values characterize the remaining
rank-three and rank-four states. Every state of rank at least five is
not 2-UDA.
& Proposition~\ref{pro:support-reduced-SDP} and
Theorem~\ref{th:five_dim_contains_GHZ_LU} \\
\hline
Multipartite Extension
& Finite-dimensional multipartite systems
& Dimension and spin-flip arguments establish high-rank
obstructions. A channel construction has exact UDA order $n-1$.
& Lemma~\ref{le:dimension_non_UDA_criterion},
Theorem~\ref{th:high_rank_dimension_obstruction}, and
Proposition~\ref{pro:UDA_diagonal(n-1)} \\
\hline
Application
& GME certification
& UDA criteria determine exact GME detection lengths, and separable
2-marginals certify GME in an $n$-qubit family.
& Propositions~\ref{pro:GME_detection_length_families}
and~\ref{pro:GME-from-separable-marginals} \\
\hline
\end{tabularx}
\end{table}
\FloatBarrier

The rest of this paper is organized as follows. In
Sec.~\ref{sec:pre}, we introduce the notation and recall the
projective-geometric and characteristic-class results used in the
range analysis. In Sec.~\ref{sec:three-qubit-2UDA-characterization}, we 
develop the primary three-qubit line. We first analyze
GHZ-SLOCC-free ranges, then derive the rank-two criterion and the
range-restricted statewise criterion, and finally establish the sharp
rank bound. In Sec.~\ref{sec:multipartite-UDA-results}, we extend the
methods to high-rank multipartite states and to the channel-based
family. In Sec.~\ref{sec:applications-GME}, we apply the resulting UDA
criteria to GME certification.
We conclude in Sec.~\ref{sec:conclusion}.

\section{Preliminaries}
\label{sec:pre}

In this section, we introduce the notations, and 
recall the facts about three-qubit pure states used later. We also
introduce the Cayley hyperdeterminant and an LU criterion for
generalized GHZ states. We use the standard relative-invariance
properties of the hyperdeterminant from
\cite{gelfand1994discriminants,miyake2003classification}.
We finally recall the projective-geometric and characteristic-class
facts used in the range arguments.

Let $A_1,\ldots,A_n$ be quantum systems associated with
finite-dimensional Hilbert spaces
$\cH_{A_1},\ldots,\cH_{A_n}$, respectively. We set
$[n]:=\{1,\ldots,n\}$ and
$\cH:=\cH_{A_1}\ox\cdots\ox\cH_{A_n}$, and we write
$d_i:=\dim\cH_{A_i}$ for $i\in[n]$. For a subset
$\cS\subseteq[n]$, we write $\cS^c=[n]\setminus\cS$ and denote the
composite system $\ox_{i\in\cS}A_i$ by $A_{\cS}$, associated with the
Hilbert space $\cH_{A_{\cS}}:=\ox_{i\in\cS}\cH_{A_i}$. For any
operator $X$ on $\cH$, we write
$X_{A_{\cS}}:=\tr_{A_{\cS^c}}X$, and $I_{A_{\cS}}$ denotes the
identity operator on $\cH_{A_{\cS}}$. For the empty subsystem, we use
$\cH_{A_\varnothing}:=\bbC$, $I_{A_\varnothing}:=1$, and
$X_{A_\varnothing}:=\tr X$. For
$1\leq k\leq n-1$, we define the $k$-marginal map by
$$
\cM_k(X):=\bigl(X_{A_{\cS}}\bigr)_{\cS\subseteq[n],\,|\cS|=k}.
$$
A quantum state $\rho$ on $\cH$ is a positive semidefinite operator of trace
one. We denote its range and kernel by $\cR(\rho)$ and $\ker\rho$,
respectively. When $\rho$ is fixed, we write
$\cL:=\cR(\rho)=(\ker\rho)^\perp$. The entries of
$\cM_k(\rho)$ are the $k$-marginals of $\rho$.
We denote the real vector space of Hermitian operators on $\cH$ by
$\operatorname{Herm}(\cH)$. For a positive integer $d$, we write
$\bbM_d:=M_d(\bbC)$ for the algebra of $d\times d$ complex matrices.
We use $T_{A_i}$ to denote partial transposition on subsystem $A_i$
in a fixed local basis, and write $X^{T_{A_i}}:=T_{A_i}(X)$.

\begin{definition}
\label{def:k-UDA}
Let $1\leq k\leq n-1$. Two $n$-partite states $\rho$ and $\sigma$ are $k$-compatible if $\cM_k(\rho)=\cM_k(\sigma)$. A state $\rho$ is $k$-uniquely determined among all states, or $k$-UDA, if there is no other state $k$-compatible with $\rho$.
\end{definition}

For $1\leq k\leq n-1$, we define the real kernel of the
$k$-marginal map by
\begin{eqnarray}
\label{eq:N_k_general}
\cN_k
&:=&\ker\cM_k\cap\operatorname{Herm}(\cH)
\nonumber\\
&=&\Big\{
H\in\operatorname{Herm}(\cH):
\tr_{A_{\cS^c}}H=0,
\ \cS\subseteq[n],\ |\cS|=k
\Big\}.
\end{eqnarray}
Thus two states are $k$-compatible if and only if their difference
belongs to $\cN_k$. Every operator in $\cN_k$ is traceless. For a
subspace $\cU\subseteq\cH$, we write
$\operatorname{Herm}(\cU):=\{H\in\operatorname{Herm}(\cH):
\cR(H)\subseteq\cU\}$ for the real linear space of Hermitian
operators whose ranges are contained in $\cU$.

For $n$-qubit systems, we write
$\cH_n:=(\bbC^2)^{\ox n}$. Let $\sigma_0=I_2$,
$\sigma_1=\sigma_x$, $\sigma_2=\sigma_y$, and
$\sigma_3=\sigma_z$. For
$\boldsymbol{\alpha}=(\alpha_1,\ldots,\alpha_n)
\in\{0,1,2,3\}^n$, we define the Pauli string by
$P_{\boldsymbol{\alpha}}:=\sigma_{\alpha_1}\ox\cdots\ox
\sigma_{\alpha_n}$ and its weight by
$\operatorname{wt}(\boldsymbol{\alpha}):=
|\{i\in[n]:\alpha_i\neq0\}|$.
Let $\mathsf C$ denote complex conjugation in the computational basis.
We define the $n$-qubit spin flip by
\begin{eqnarray}
\label{eq:n-qubit-spin-flip}
\Theta_n:=(i\sigma_y)^{\ox n}\mathsf C,
\qquad
\Theta_n^2=(-1)^nI.
\end{eqnarray}
For a subspace $\cU\subseteq\cH_n$, we write
$\Theta_n\cU:=\{\Theta_nu:u\in\cU\}$. Every Pauli string satisfies
\begin{eqnarray}
\label{eq:n-qubit-spin-flip-Pauli}
\Theta_nP_{\boldsymbol{\alpha}}\Theta_n^{-1}
=(-1)^{\operatorname{wt}(\boldsymbol{\alpha})}
P_{\boldsymbol{\alpha}}.
\end{eqnarray}

We recall some basic properties for UDA states.  The $k$-UDA property is invariant under local unitary transformations and system permutations. Moreover, a $k$-UDA state is also $\ell$-UDA for every $k\leq\ell\leq n-1$. For a subspace $\cU$ of the global Hilbert space, its face is the set of states whose ranges are contained in $\cU$. 

\begin{lemma}[\cite{qiu2026mixed}]
\label{le:lr+1-ls}
Let $\rho$ and $\sigma$ be two $n$-partite states. If $\rho$ is not $k$-UDA, then $\lambda\rho+(1-\lambda)\sigma$ is not $k$-UDA for every $\lambda\in(0,1]$. Consequently, if the face determined by $\cR(\rho)$ contains a non-$k$-UDA state, then $\rho$ is not $k$-UDA. In particular, two states with the same range are either both $k$-UDA or both not $k$-UDA.
\end{lemma}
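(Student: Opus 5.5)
The plan is to prove the first assertion directly from the kernel description of $k$-compatibility, and then read off the two consequences.

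Suppose $\rho$ is not $k$-UDA. Then there is a state $\rho'\neq\rho$ with $\cM_k(\rho')=\cM_k(\rho)$, so $H:=\rho'-\rho$ is a nonzero element of $\cN_k$. Set $\omega:=\lambda\rho+(1-\lambda)\sigma$ with $\lambda\in(0,1]$. The natural candidate for a different compatible state is
$$
\omega':=\omega+\lambda H=\lambda\rho'+(1-\lambda)\sigma .
$$
This is a convex combination of the states $\rho'$ and $\sigma$, so it is positive semidefinite with unit trace. By linearity of partial traces, $\cM_k(\omega')=\cM_k(\omega)+\lambda\,\cM_k(H)=\cM_k(\omega)$. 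Finally, $\omega'-\omega=\lambda H\neq0$ because $\lambda>0$. Hence $\omega$ is not $k$-UDA. The case $\lambda=1$ is trivially included.

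For the face statement, let $\tau$ be a non-$k$-UDA state whose range satisfies $\cR(\tau)\subseteq\cR(\rho)$. The key step is to write $\rho$ as a mixture $\rho=\lambda\tau+(1-\lambda)\sigma$ with $\lambda\in(0,1]$ and $\sigma$ a state. On $\cR(\rho)$ the operator $\rho$ is positive definite, so its minimal nonzero eigenvalue $\mu$ satisfies $\mu>0$. Since $\tau\leq\|\tau\|\,P_{\cR(\rho)}$, we get $\rho-\lambda\tau\geq(\mu-\lambda\|\tau\|)P_{\cR(\rho)}\geq0$ for any sufficiently small $\lambda>0$. If $\rho=\tau$ there is nothing to prove. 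Otherwise choose $\lambda<1$ in this range and put
$$
\sigma:=\frac{\rho-\lambda\tau}{1-\lambda},
$$
which is a state. The first part then shows that $\rho$ is not $k$-UDA.

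For the last sentence, suppose $\rho_1$ and $\rho_2$ have the same range and $\rho_1$ is not $k$-UDA. Then $\rho_1$ is a non-$k$-UDA state in the face of $\cR(\rho_2)$, so $\rho_2$ is not $k$-UDA; the argument is symmetric.

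The only step needing care is the decomposition $\rho=\lambda\tau+(1-\lambda)\sigma$. It relies on range inclusion together with finite dimensionality, which guarantees a positive spectral gap $\mu$ on the support of $\rho$. Everything else is linearity of the marginal map.
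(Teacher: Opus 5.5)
Your proof is correct and complete. It is the standard convex-combination argument: the compatible partner $\lambda\rho'+(1-\lambda)\sigma$, plus the spectral-gap decomposition $\rho=\lambda\tau+(1-\lambda)\sigma$ whenever $\cR(\tau)\subseteq\cR(\rho)$. The paper gives no proof of its own, only citing \cite{qiu2026mixed}, but the same mechanism appears in its proof of Proposition~\ref{pro:support-reduced-SDP}(iii), where $\rho=(\rho-\epsilon\tau_{\cL})+\epsilon\tau_{\cL}$ with $\epsilon$ controlled by $\lambda_{\min}^{+}(\rho)$.
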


We next specialize to three-qubit systems. For
$(A_1,A_2,A_3)=(A,B,C)$, we write
$\cH_{ABC}:=\cH_A\ox\cH_B\ox\cH_C$. 
A GHZ-LU state is LU equivalent
to $a\ket{000}+b\ket{111}$ for some $ab\neq0$, and a GHZ-SLOCC state
is defined analogously. Local operators act on a subspace elementwise,
which defines LU and SLOCC equivalence for subspaces. We call a
subspace GHZ-SLOCC-free if it contains no nonzero GHZ-SLOCC vector.

\begin{lemma}[\cite{linden2002almost,walck2009parts}]
\label{le:3-qubit_UDA}
A three-qubit pure state is 2-UDA unless it is a GHZ-LU state.
\end{lemma}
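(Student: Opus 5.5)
The statement is Lemma~\ref{le:3-qubit_UDA}: a three-qubit pure state $\ket{\psi}$ that is not GHZ-LU is 2-UDA. The plan has three steps. First reduce to the case where no compatible state is pure, then show that any compatible state lies in a low-dimensional subspace, then rule out mixed states on that subspace by a purity argument.

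\emph{Step 1 (normal form).} By LU invariance we bring $\ket{\psi}$ into the Acín normal form
\[
\ket{\psi}=\lambda_0\ket{000}+\lambda_1e^{i\varphi}\ket{100}+\lambda_2\ket{101}+\lambda_3\ket{110}+\lambda_4\ket{111},
\qquad \lambda_i\ge0,\ \sum_i\lambda_i^2=1 .
\]
Being GHZ-LU means, up to LU, $\lambda_1=\lambda_2=\lambda_3=0$.

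\emph{Step 2 (reduce to pure competitors).} Let $\sigma$ be a state with $\cM_2(\sigma)=\cM_2(\proj{\psi})$. Then $\sigma_{AB}=\psi_{AB}$, and this operator has rank at most two. Hence $\cR(\sigma)\subseteq\cR(\psi_{AB})\ox\cH_C$. The same inclusion holds for the pairs $AC$ and $BC$. Together these confine $\cR(\sigma)$ to
\[
\cV:=\bigl(\cR(\psi_{AB})\ox\cH_C\bigr)\cap\bigl(\cR(\psi_{AC})\ox\cH_B\bigr)\cap\bigl(\cR(\psi_{BC})\ox\cH_A\bigr),
\]
which is a subspace of dimension at most four containing $\ket{\psi}$.

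\emph{Step 3 (purity argument).} The key identity is that for any three-qubit state, $\tr\sigma_{AB}^2=\tr\sigma_C^2$ is forced only when $\sigma$ is pure, since the equality holds for pure states by the Schmidt decomposition. Since $\sigma$ shares all one- and two-body marginals with $\psi$, we get $\tr\sigma_{AB}^2=\tr\psi_{AB}^2=\tr\psi_C^2=\tr\sigma_C^2$. We then aim to show this equality forces $\sigma$ to be pure, through an inequality of the form
\[
\tr\sigma_{AB}^2-\tr\sigma_C^2\le c\,(1-\tr\sigma^2)
\]
restricted to $\cV$. Once $\sigma$ is pure, we use the known pure-state result: two pure three-qubit states with the same 2-marginals either coincide or are both GHZ-type. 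In the normal form this follows by directly comparing the local invariants ($\lambda_i$, the phase $\varphi$, and the 2-marginal entries). With $\lambda_1,\lambda_2,\lambda_3$ not all zero, the invariants fix $\ket{\psi}$ up to a global phase.

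\emph{Main obstacle.} The hardest step is excluding mixed $\sigma$ supported on $\cV$. The fallback is more direct: write a candidate perturbation $H=\sigma-\proj{\psi}\in\cN_2\cap\operatorname{Herm}(\cV)$ and require $\proj{\psi}+tH\ge0$. Then compute $\cN_2\cap\operatorname{Herm}(\cV)$ explicitly in the normal form. Its elements must be traceless, have all 2-partial traces zero, and be supported on $\cV$. The expectation is that this space contains no nonzero $H$ with $\bra{\psi^\perp}H\ket{\psi^\perp}\ge0$ on the complement of $\ket{\psi}$ within $\cV$, unless $\lambda_1=\lambda_2=\lambda_3=0$. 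In that GHZ case, $H=\proj{000}\lambda_0^2-\dots$ type diagonal-versus-coherence perturbations survive, consistent with the exception. If the purity inequality cannot be made uniform, we would settle this by a case split on which $\lambda_i$ vanish.
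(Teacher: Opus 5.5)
Your Steps 1 and 2 are fine: the inclusion $\cR(\sigma)\subseteq\cV$ is correct. Step 3, however, carries the entire content of the lemma, and it does not work.

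\textbf{The purity identity is false.} The ``key identity'' fails already for $\sigma=\frac12(\proj{000}+\proj{111})$, which is mixed yet has $\tr\sigma_{AB}^2=\tr\sigma_C^2=\frac12$. This $\sigma$ is itself 2-compatible with the GHZ state, so equality of these purities cannot by itself separate the exceptional case from the rest.

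\textbf{Restricting to $\cV$ does not rescue it.} Take $\ket\psi=\frac1{\sqrt2}\ket{001}+\frac12\ket{010}+\frac12\ket{100}$, which is W-class and not GHZ-LU. A direct computation gives $\cV=\Span\{\ket{000},\ket\psi\}$. The state $\sigma=\frac12(\proj{000}+\proj\psi)$ satisfies $\tr\sigma_{AB}^2=\tr\sigma_C^2=\frac58$. On the Bloch ball of this two-dimensional $\cV$, $\tr\sigma_{AB}^2-\tr\sigma_C^2$ is a quadratic function vanishing on the boundary, so it is a constant multiple of $1-\tr\sigma^2$. This $\sigma$ is the centre of the ball, so the multiple is zero, and every state supported in $\cV$ satisfies the equality.

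\textbf{The inequality and the fallback give nothing.} With $c\ge0$, your displayed inequality is vacuous once its left side vanishes. What you would need is $|\tr\sigma_{AB}^2-\tr\sigma_C^2|\ge c\,(1-\tr\sigma^2)$ with $c>0$, and the example shows this fails for the cut $AB{:}C$. Switching cuts happens to help here, but nothing in the proposal guarantees that some scalar identity detects mixedness on $\cV$ for every non-GHZ $\ket\psi$. The fallback only predicts the outcome of an uncomputed calculation, so it restates the lemma rather than proving it.

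\textbf{The pure-competitor step is also misdescribed.} LU invariants fix $\ket\psi$ only up to LU. A pure $\sigma$ with $\sigma_{AB}=\psi_{AB}$ has the form $(I_{AB}\ox U_C)\ket\psi$, and what must be shown is that $U_C$ is a phase.

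\textbf{A route that closes the gap.} If some one-qubit marginal of $\psi$ is pure, then $\psi$ is a product across that cut. A compatible $\sigma$ inherits the pure marginal, and $\sigma=\psi$ follows.

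Otherwise $\psi_{AB}$ has rank two and $\psi$ is its minimal purification. Hence every compatible $\sigma$ equals $(\mathrm{id}_{AB}\ox\mathcal E)(\proj\psi)$ for some channel $\mathcal E$ on $C$.

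Equality of the $AC$ and $BC$ marginals says that $\mathcal E$ fixes every operator $\tr_{AB}[(X_A\ox I_{BC})\proj\psi]$ and every $\tr_{AB}[(I_A\ox X_B\ox I_C)\proj\psi]$. In particular it fixes the full-rank state $\psi_C$.

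Because $\mathcal E$ has a faithful fixed state, the Kadison--Schwarz inequality makes the fixed points of $\mathcal E^*$ a unital $*$-subalgebra of $\bbM_2$. So either $\mathcal E=\mathrm{id}$, or all fixed points of $\mathcal E$ are diagonal in a single basis. (For the abelian algebra generated by a projection $P$, note that $\mathcal E^*(P)=P$ forces $\mathcal E(P)=P$ and $\mathcal E(I-P)=I-P$.)

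In that basis write $\ket\psi=\ket{\chi_0}\ket0+\ket{\chi_1}\ket1$, with two-qubit coefficient matrices $M_0,M_1$. Both are nonzero because $\psi_C$ has full rank. The vanishing off-diagonal entries read $M_0M_1^\dagger=0$ and $M_1^\dagger M_0=0$. These force $\ket{\chi_0}=\ket{u}\ket{v}$ and $\ket{\chi_1}\propto\ket{u^\perp}\ket{v^\perp}$, so $\ket\psi$ is GHZ-LU, which is the excluded case.
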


For a three-qubit vector $\ket\psi=\sum_{i,j,k=0}^1a_{ijk}\ket{ijk}$, we denote the $2\times2\times2$ Cayley hyperdeterminant of its coefficient tensor by $\operatorname{Det}(\ket\psi)$. Its explicit quartic expression is standard and will not be needed here. We refer to Refs.~\cite{gelfand1994discriminants,miyake2003classification} for this expression. We use only its zero set and the following relative-invariance property.
\begin{lemma}[\cite{gelfand1994discriminants,miyake2003classification}]
\label{lem:relative-invariance}
For any $L_A,L_B,L_C\in GL(2,\bbC)$ and any three-qubit vector
$\ket\psi$, one has
\begin{eqnarray}
\label{eq:Det(ABC ps)}
\operatorname{Det}\bigl((L_A\ox L_B\ox L_C)\ket\psi\bigr)
=(\det L_A)^2(\det L_B)^2(\det L_C)^2
\operatorname{Det}(\ket\psi).
\end{eqnarray}
Consequently, $\ket\psi$ is a GHZ-SLOCC state if and only if $\operatorname{Det}(\ket\psi)\neq0$. Moreover, $|\operatorname{Det}(\ket\psi)|$ is invariant under local unitaries.
\end{lemma}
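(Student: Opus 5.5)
The argument has three parts: the transformation law \eqref{eq:Det(ABC ps)}, the characterization of GHZ-SLOCC vectors by $\operatorname{Det}\neq0$, and local-unitary invariance of $|\operatorname{Det}|$.

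\emph{Transformation law.} First note that $\operatorname{Det}$ is a homogeneous quartic polynomial in the eight coefficients $a_{ijk}$. I will use the classical description of it as the discriminant of the binary quadratic form in $(x_0,x_1)$ obtained by contracting the $A$-index:
\[
q(x_0,x_1)=\det\Bigl(\textstyle\sum_i x_i a_{i\cdot\cdot}\Bigr),
\]
where $a_{i\cdot\cdot}$ is the $2\times2$ slice matrix. This description is symmetric in the three parties up to sign conventions.

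Acting with $L_B\ox L_C$ sends each slice $M_i$ to $L_B M_i L_C^T$. Hence $q$ is multiplied by $\det L_B\det L_C$, and its discriminant, being quadratic in the coefficients of $q$, is multiplied by $(\det L_B)^2(\det L_C)^2$.

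Acting with $L_A$ is a linear change of the variables $x$ in $q$. The discriminant of a binary quadratic transforms under a change of variables by the square of the determinant of that change, which gives the factor $(\det L_A)^2$. Since $GL(2)^{\times3}$ is generated by the factors acting on each party separately, \eqref{eq:Det(ABC ps)} follows.

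\emph{GHZ-SLOCC characterization.} Next I apply the D\"ur--Vidal--Cirac SLOCC classification of three-qubit vectors. Every nonzero vector is SLOCC-equivalent to exactly one of the following normal forms:
\begin{itemize}
\item $\ket{000}+\ket{111}$ (GHZ class);
\item $\ket{001}+\ket{010}+\ket{100}$ (W class);
\item one of the three biseparable forms $\ket{0}(\ket{00}+\ket{11})$ and its permutations;
\item $\ket{000}$ (fully product).
\end{itemize}
A direct evaluation on these representatives gives $\operatorname{Det}=1/4$, up to the normalization convention, for the GHZ form, and $0$ for all the others. For the biseparable and product forms this is immediate, because the quadratic form $q$ is degenerate: it is a perfect square or identically zero. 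For the W form, $q=x_0^2\cdot0+\dots$ also turns out to be a perfect square.

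Because the factor $\prod(\det L)^2$ never vanishes for invertible operators, the zero set of $\operatorname{Det}$ is invariant under SLOCC. Therefore $\operatorname{Det}(\ket\psi)\neq0$ exactly on the GHZ-SLOCC orbit. The zero vector also has $\operatorname{Det}=0$, which is consistent with this.

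The main obstacle is this step, because it relies on the completeness of the SLOCC classification. To keep the argument self-contained, I would instead argue directly. If $\operatorname{Det}\neq0$, then $q$ has two distinct roots. Choose $L_A$ so that these roots become $x_0=0$ and $x_1=0$. Then both new slices $M_0$ and $M_1$ are rank one and linearly independent, so $L_B$ and $L_C$ can bring them to $\proj{0}$ and $\proj{1}$, which yields the GHZ form. Conversely, $\operatorname{Det}(\ket{000}+\ket{111})\neq0$ by direct evaluation, and SLOCC invariance of the nonvanishing transfers this to every GHZ-SLOCC vector.

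\emph{LU invariance.} Finally, for unitary $L_A$, $L_B$, $L_C$ each determinant has modulus one. Substituting into \eqref{eq:Det(ABC ps)} shows that $|\operatorname{Det}|$ is unchanged.
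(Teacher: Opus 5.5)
Your proof is correct, but it takes a different route from the paper, which gives no argument at all and imports the lemma from \cite{gelfand1994discriminants,miyake2003classification}. In those sources, relative invariance comes from viewing $\operatorname{Det}$ as the defining polynomial of the projective dual of the Segre variety $\mathbb{P}^1\times\mathbb{P}^1\times\mathbb{P}^1$. This hypersurface is stable under $GL(2,\bbC)^{\times3}$, so $\operatorname{Det}$ is a semi-invariant, and homogeneity of degree four forces the character $(\det L_A\det L_B\det L_C)^2$. The GHZ characterization is then read off from Miyake's description of the SLOCC orbits through $\operatorname{Det}$ and the singular locus of $\{\operatorname{Det}=0\}$, which is essentially your first, classification-based version.

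Your Schl\"afli-type route, $\operatorname{Det}=\operatorname{disc}\det(x_0M_0+x_1M_1)$, is elementary and self-contained. The transformation law reduces to two facts about binary quadratic forms. Your direct normal-form argument shows that $\{\operatorname{Det}\neq0\}$ is a single SLOCC orbit without appealing to the completeness of the D\"ur--Vidal--Cirac list. The cited approach buys generality (other tensor formats, and the full orbit stratification including the W class), none of which the paper needs here.

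A few small repairs are needed.

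\emph{Normal-form step.} Linear independence of two rank-one slices does not by itself let $L_B,L_C$ send them to $\proj{0},\proj{1}$; think of $uv_0^T$ and $uv_1^T$. The right reason is that $x_0u_0v_0^T+x_1u_1v_1^T=U\,\mathrm{diag}(x_0,x_1)V^T$ with $U=(u_0,u_1)$ and $V=(v_0,v_1)$. Hence $q'=c\,x_0x_1$ with $c\neq0$ forces $\det U\det V\neq0$, after which $L_B=U^{-1}$ and $L_C=V^{-1}$ work. The rank-one claim itself uses that neither slice vanishes, since otherwise $q'$ would be a multiple of $x_0^2$ or $x_1^2$.

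\emph{W representative.} For $\ket{001}+\ket{010}+\ket{100}$ one gets $q=-x_0^2$, not the garbled expression you wrote; the perfect-square conclusion stands.

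\emph{Normalization.} Rather than leaving the constant open, take $\operatorname{disc}(\alpha x_0^2+\beta x_0x_1+\gamma x_1^2)=\beta^2-4\alpha\gamma$. This reproduces the standard Cayley quartic with $\operatorname{Det}(a\ket{000}+b\ket{111})=a^2b^2$, which is the normalization the paper relies on in Lemma~\ref{le:LU-GHZ}. Your factor $1/4$ would break that lemma, though not this one.
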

Lemma~\ref{lem:relative-invariance} converts the absence of GHZ-SLOCC vectors in a subspace into the vanishing of the hyperdeterminant throughout that subspace. This polynomial criterion will be used repeatedly to constrain and classify the ranges considered in Sec.~\ref{subsec:ghz-slocc-free-range}.

The vanishing of the hyperdeterminant determines the GHZ-SLOCC class, but it does not determine the finer LU class. The following criterion distinguishes GHZ-LU states using the one-qubit marginals.
\begin{lemma}
\label{le:LU-GHZ}
Let $\ket\psi$ be a normalized three-qubit state, let $\rho=\proj{\psi}$,
and set $\Delta_X:=\det\rho_X$ for $X=A,B,C$. Then $\ket\psi$ is a
GHZ-LU state if and only if $\Delta_A=\Delta_B=\Delta_C>0$ and
$\Delta_A=|\operatorname{Det}(\ket\psi)|$.
\end{lemma}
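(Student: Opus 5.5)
The plan is to reduce to a canonical form via local unitaries and then compute the four quantities $\Delta_A,\Delta_B,\Delta_C,|\operatorname{Det}(\ket\psi)|$ explicitly there. All four are LU invariants: the $\Delta_X$ because local unitaries conjugate the one-qubit marginals, and $|\operatorname{Det}|$ by Lemma~\ref{lem:relative-invariance}. Both sides of the claimed equivalence are therefore LU invariant. By the generalized Schmidt (Ac\'in) decomposition, every normalized three-qubit state is LU equivalent to
\begin{eqnarray*}
\ket\psi=\lambda_0\ket{000}+\lambda_1e^{i\varphi}\ket{100}+\lambda_2\ket{101}+\lambda_3\ket{110}+\lambda_4\ket{111},
\end{eqnarray*}
with $\lambda_j\geq0$ and $\sum_j\lambda_j^2=1$. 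I will fix the normalization of $\operatorname{Det}$ so that $\operatorname{Det}(a\ket{000}+b\ket{111})=a^2b^2$. This is the normalization under which the statement can hold, since for the GHZ-LU state $a\ket{000}+b\ket{111}$ one has $\Delta_X=|a|^2|b|^2$.

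Next I evaluate the invariants on the canonical form. Since the only nonzero entries of the coefficient tensor with first index $0$ sit at $a_{000}$, a direct substitution into the quartic should give $|\operatorname{Det}(\ket\psi)|=\lambda_0^2\lambda_4^2$. Writing $\ket\psi=\lambda_0\ket0\ket{00}+\ket1\ket{v}$, the determinant of the $A$-marginal is the Gram determinant $\lambda_0^2(\|v\|^2-|\langle 00|v\rangle|^2)$, which gives $\Delta_A=\lambda_0^2(\lambda_2^2+\lambda_3^2+\lambda_4^2)$. Analogous Gram-determinant computations for $B$ and $C$ give
\begin{eqnarray*}
\Delta_B=\lambda_0^2(\lambda_3^2+\lambda_4^2)+|\lambda_1\lambda_4e^{i\varphi}-\lambda_2\lambda_3|^2,\qquad
\Delta_C=\lambda_0^2(\lambda_2^2+\lambda_4^2)+|\lambda_1\lambda_4e^{i\varphi}-\lambda_2\lambda_3|^2 .
\end{eqnarray*}
Checking these two formulas, and the normalization of $\operatorname{Det}$, is the main computational obstacle. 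As a cross-check I will use the Coffman--Kundu--Wootters relation $4\Delta_A=C_{AB}^2+C_{AC}^2+4|\operatorname{Det}|$. This relation shows more conceptually that $\Delta_X=|\operatorname{Det}|$ for every $X$ is equivalent to all two-qubit concurrences vanishing.

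For the forward direction, assume the conditions hold. Then $\Delta_A=|\operatorname{Det}|>0$, so $\lambda_0\lambda_4\neq0$. The equality $\Delta_A=|\operatorname{Det}|$ then reads $\lambda_0^2(\lambda_2^2+\lambda_3^2)=0$, which forces $\lambda_2=\lambda_3=0$. With this, $\Delta_B=\lambda_0^2\lambda_4^2+\lambda_1^2\lambda_4^2$, and the equality $\Delta_B=|\operatorname{Det}|$ forces $\lambda_1\lambda_4=0$, hence $\lambda_1=0$. So $\ket\psi=\lambda_0\ket{000}+\lambda_4\ket{111}$ with $\lambda_0\lambda_4\neq0$, which is a GHZ-LU state.

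For the converse, a GHZ-LU state is LU equivalent to $a\ket{000}+b\ket{111}$ with $ab\neq0$. Each one-qubit marginal of this state is $\operatorname{diag}(|a|^2,|b|^2)$, so $\Delta_X=|a|^2|b|^2>0$ for all $X$, and this equals $|\operatorname{Det}|=|a^2b^2|$. LU invariance of all four quantities then completes the proof.
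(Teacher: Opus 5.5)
Your proposal is correct and follows essentially the same route as the paper. Both arguments use LU invariance, reduce to the Ac\'in form, and evaluate $\operatorname{Det}=\lambda_0^2\lambda_4^2$, $\Delta_A=\lambda_0^2(\lambda_2^2+\lambda_3^2+\lambda_4^2)$, and $\Delta_B=(\lambda_0^2+\lambda_1^2)\lambda_4^2$ once $\lambda_2=\lambda_3=0$. Your general formulas for $\Delta_B$ and $\Delta_C$ are correct (they follow from the $2\times2$ minors), but only $\Delta_A$ and the reduced $\Delta_B$ are actually needed; the CKW remark is a valid but optional cross-check.
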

\begin{proof}
	All conditions in the lemma are LU invariant. For necessity, we use the
	representative $a\ket{000}+b\ket{111}$ and obtain the stated equalities
	directly. For sufficiency, we write $\ket\psi$ in the Acin form
	\cite{acin2000generalized}
	$$
	\ket{\psi_c}=\lambda_0\ket{000}
	+\lambda_1e^{i\varphi}\ket{100}
	+\lambda_2\ket{101}+\lambda_3\ket{110}+\lambda_4\ket{111},
	$$
	where $\lambda_i\geq0$. We assume that
	$\Delta_A=\Delta_B=\Delta_C>0$ and
	$\Delta_A=|\operatorname{Det}(\ket{\psi_c})|$. For the Acin form, we have
	$\operatorname{Det}(\ket{\psi_c})=\lambda_0^2\lambda_4^2$ and
	$\Delta_A=\lambda_0^2(\lambda_2^2+\lambda_3^2+\lambda_4^2)$. These
	equalities imply that $\lambda_0\lambda_4\neq0$ and 
	$\lambda_2=\lambda_3=0$. We then have
	$\Delta_A=\lambda_0^2\lambda_4^2$ and
	$\Delta_B=(\lambda_0^2+\lambda_1^2)\lambda_4^2$, so
	$\Delta_A=\Delta_B$ implies
	$\lambda_1=0$. Hence $\ket{\psi_c}$ is a generalized GHZ state, and
	we obtain the desired LU equivalence.
\end{proof}

We now introduce the structural tools used in the range analysis.
Vectors in the same range may be biseparable across different
bipartitions, which makes a direct range classification difficult.
The following result converts this pointwise condition into one fixed
bipartition. It also shows that a bipartite linear space consisting
entirely of product vectors must have one fixed local factor.

\begin{lemma}[\cite{harris1992algebraic}]
\label{lem:biseparable-product-loci}
(i) Let $\cV\subseteq\cH_A\ox\cH_B\ox\cH_C$ be a linear subspace.
If every nonzero vector in $\cV$ is product across at least one of the
three bipartitions, then there is one fixed bipartition across which
every vector in $\cV$ is product.

(ii) If every nonzero vector in a linear subspace
$\cX\subseteq\cH_1\ox\cH_2$ is a decomposable tensor, then either
$\cX=x_0\ox\cU_2$ or $\cX=\cU_1\ox y_0$ for suitable nonzero vectors
$x_0,y_0$ and suitable linear subspaces $\cU_1,\cU_2$.
\end{lemma}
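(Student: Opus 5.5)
For part (i) we would use an irreducibility argument from algebraic geometry. For each bipartition $X|YZ$ with $X\in\{A,B,C\}$, let $\Sigma_X\subseteq\cH_A\ox\cH_B\ox\cH_C$ be the set of vectors that are product across $X|YZ$, with the zero vector included. A vector $v$ lies in $\Sigma_X$ exactly when the flattening matrix $v_{X|YZ}\in\bbM_{d_X\times d_Yd_Z}$ has rank at most one. This holds if and only if all $2\times2$ minors of the flattening vanish. Hence $\Sigma_X$ is the common zero set of finitely many homogeneous quadratic polynomials, so it is Zariski closed.

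The hypothesis says $\cV=(\cV\cap\Sigma_A)\cup(\cV\cap\Sigma_B)\cup(\cV\cap\Sigma_C)$, which writes $\cV$ as a union of three Zariski-closed subsets. A linear subspace is an irreducible affine variety, because its coordinate ring is a polynomial ring and therefore an integral domain. An irreducible variety cannot be a finite union of proper closed subsets. Hence $\cV=\cV\cap\Sigma_X$ for some $X$, so one fixed bipartition works for every vector. To avoid invoking irreducibility abstractly, we would give the direct version. Suppose each $\cV\cap\Sigma_X$ is proper. Then for each $X$ there is a nonzero polynomial $f_X$, built from the minors of the $X|YZ$ flattening, that vanishes on $\cV\cap\Sigma_X$ but not identically on $\cV$. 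The product $f_Af_Bf_C$ vanishes on all of $\cV$. Restricted to $\cV\cong\bbC^m$, this contradicts the fact that a product of nonzero polynomials over $\bbC$ is nonzero.

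For part (ii), we identify $\cX$ with a linear space of $d_1\times d_2$ matrices all of rank at most one. The key step is the following observation. Let $x_1\ox y_1$ and $x_2\ox y_2$ be linearly independent elements of $\cX$. Their sum has rank one only if $x_1\parallel x_2$ or $y_1\parallel y_2$. Indeed, if both pairs were independent, then $x_1\ox y_1+x_2\ox y_2$ would have rank two.

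Now fix a nonzero $x_0\ox y_0\in\cX$. Every other element $x\ox y$ of $\cX$ then satisfies $x\parallel x_0$ or $y\parallel y_0$. We claim the same alternative holds uniformly for all elements, and this is where the main care is needed. Suppose $x\ox y$ has $x\parallel x_0$ and $y\not\parallel y_0$, while $x'\ox y'$ has $y'\parallel y_0$ and $x'\not\parallel x_0$. Then $x\ox y$ and $x'\ox y'$ share neither a parallel left factor nor a parallel right factor. Since two linearly independent rank-one elements of $\cX$ must share one of them, this is a contradiction. (The two elements are independent, since otherwise they would be scalar multiples and have parallel factors.)

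Therefore either every element of $\cX$ has left factor proportional to $x_0$, or every element has right factor proportional to $y_0$; the degenerate case $\dim\cX=1$ is covered by both. In the first case $\cX=x_0\ox\cU_2$ with $\cU_2:=\{y:x_0\ox y\in\cX\}$, which is a linear subspace because $\cX$ is. The second case gives $\cX=\cU_1\ox y_0$ in the same way.

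We expect the uniformity step in (ii) to be the only delicate point. In (i), the main point is justifying that the decomposition into closed sets forces equality, which the polynomial-product argument settles.
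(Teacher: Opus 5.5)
Your proof is correct in both parts. The paper gives no argument of its own here and only cites Harris; your proof supplies the standard reasoning behind that citation in elementary form. In (i), the product of nonvanishing minors $f_Af_Bf_C$ is the hands-on version of the irreducibility of a linear subspace. In (ii), the observation that two independent points of $\cX$ must share a parallel factor, followed by ruling out elements of mixed type, is the classical fact that a linear space contained in the Segre variety $\bbP(\cH_1)\times\bbP(\cH_2)$ lies in a single ruling.

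The only omission is the trivial case $\cX=\{0\}$, where there is no nonzero $x_0\ox y_0$ to fix. There one can simply write $\cX=x_0\ox\{0\}$ for any nonzero $x_0$.
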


We use Lemma~\ref{lem:biseparable-product-loci} to reduce pointwise
biseparability to one fixed bipartition and local factor. These
reductions underlie the range classification in
Lemmas~\ref{le:noSLOCC_dim<4=1} and~\ref{le:noSLOCC_dim<4} and the
canonical subspaces in Proposition~\ref{pro:GHZ_free_subspace_T_C}.
Proposition~\ref{pro:T_Ct_2UDA} then establishes 2-UDA on the
resulting canonical ranges, while Lemma~\ref{le:lr+1-ls} shows that
this property depends only on the range.

We next need a criterion ensuring that a prescribed range contains a
product vector. The following projective-dimension bound guarantees this
existence directly from the dimension of the range and the local
dimensions. It applies without any assumption on the vectors already
known to lie in the range.

\begin{lemma}[\cite{harris1992algebraic,parthasarathy2004maximal}]
\label{lem:product-vectors-from-dimension}
Let $d_i=\dim\cH_{A_i}$ and $d=\prod_{i=1}^n d_i$. Every subspace
$\cV\subseteq\cH$ satisfying
$
\dim\cV\geq d-\sum_{i=1}^n(d_i-1)
$
contains a nonzero fully product vector.
\end{lemma}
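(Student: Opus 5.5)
The plan is to argue through the Segre variety of fully product vectors and a projective dimension count. Let $N:=d-1$, so that $\bbP(\cH)\cong\bbP^{N}$. The fully product vectors, up to scalars, form the image $\Sigma$ of the Segre embedding
$$
\bbP^{d_1-1}\times\cdots\times\bbP^{d_n-1}\hookrightarrow\bbP^{N}.
$$
This is an irreducible closed projective variety of dimension $\sum_{i=1}^n(d_i-1)$. A subspace $\cV$ of dimension $m$ gives a linear projective subspace $\bbP(\cV)\cong\bbP^{m-1}$. The statement therefore reduces to showing that $\bbP(\cV)\cap\Sigma\neq\varnothing$ whenever
$$
(m-1)+\sum_{i=1}^n(d_i-1)\geq N=d-1,
$$
which is exactly the hypothesis $m\geq d-\sum_i(d_i-1)$.

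The key step is the projective dimension theorem. If $X,Y\subseteq\bbP^N$ are closed subvarieties with $\dim X+\dim Y\geq N$, then $X\cap Y\neq\varnothing$. Over $\bbC$ this is standard, and Ref.~\cite{harris1992algebraic} gives it via the affine-cone argument: the cones over $X$ and $Y$ in $\bbC^{N+1}$ have dimensions $\dim X+1$ and $\dim Y+1$. Both pass through the origin, so by Krull's principal ideal theorem each component of their intersection has dimension at least $\dim X+\dim Y+2-(N+1)\geq1$. The intersection of the cones therefore contains a nonzero vector, which projects to a common point of $X$ and $Y$.

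Applying this with $X=\Sigma$ and $Y=\bbP(\cV)$ gives a point of $\Sigma$ in $\bbP(\cV)$. Any representative of that point is a nonzero vector $\ket{a_1}\ox\cdots\ox\ket{a_n}\in\cV$, as required.

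The only real input is the dimension of the Segre variety together with the intersection theorem. Computing the dimension of $\Sigma$ is the most delicate point: one uses that the Segre map is an isomorphism onto its image, so the dimension is additive over the factors. An alternative, more hands-on route in the spirit of Ref.~\cite{parthasarathy2004maximal} is the following. Write $\cV=(\cV^\perp)^\perp$ with $\dim\cV^\perp\leq\sum_i(d_i-1)$, and seek a product vector orthogonal to a basis $\{w_j\}$ of $\cV^\perp$. Each condition $\braket{w_j}{a_1\ox\cdots\ox a_n}=0$ is a single multihomogeneous equation on $\prod_i\bbP^{d_i-1}$. A system of at most $\sum_i(d_i-1)=\dim\prod_i\bbP^{d_i-1}$ such equations always has a common zero, which follows from the same dimension theorem for multiprojective space, or from a nonvanishing top intersection number of the multidegree classes. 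This confirms the bound with equality being the extremal case.
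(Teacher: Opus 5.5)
Your proposal is correct and essentially the same as the paper's approach. The paper gives no proof of its own and only cites Harris and Parthasarathy, whose bound comes from intersecting the Segre variety (of dimension $\sum_i(d_i-1)$) with $\bbP(\cV)$ via the projective dimension theorem, exactly as you do; your multihomogeneous-equation variant is the same intersection, since those equations are hyperplane sections of the Segre variety.
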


We use Lemma~\ref{lem:product-vectors-from-dimension} to show that the
ranges under consideration intersect the relevant product varieties.
The resulting product vectors enable the local normalizations and
range reductions that lead to the canonical inclusions in
Proposition~\ref{pro:GHZ_free_subspace_T_C}.

To obtain the stronger GHZ-LU conclusion and the sharp high-rank
obstruction, we use characteristic classes to detect unavoidable zeros
of a vector bundle section. The following facts provide the
topological framework for that argument.

\begin{lemma}[\cite{bott1982differential}]
\label{lem:characteristic-class-tools}
Let $E$ be a complex rank-$q$ vector bundle over a closed oriented
manifold $M$.

(i) The zero locus of a transverse section of $E$ represents the
Poincar\'e dual of $c_q(E)$. If a section is nowhere zero outside a
closed neighborhood $N$, its relative Euler class localizes this
class to $H^{2q}(M,M\setminus\operatorname{int}N;\bbZ)$.

(ii) Chern and Stiefel--Whitney classes satisfy the Whitney product
formula and $w_2(E_{\bbR})=c_1(E)\pmod2$, where $E_{\bbR}$ denotes
the underlying real vector bundle of $E$.

(iii) Let $\pi:\bbP(E)\ra M$ be the projective bundle of lines in
$E$, let $\mathscr T$ be its tautological line bundle, and set
$\xi=c_1(\mathscr T^*)$. The tautological exact sequence determines
the Chern-class relation on $\bbP(E)$. Moreover, $\xi$ restricts to
the hyperplane class on every fiber and
$\pi_*(\xi^{q-1})=1$.
\end{lemma}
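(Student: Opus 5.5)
Since this lemma collects standard facts from \cite{bott1982differential}, the plan is to recall the constructions behind each item rather than re-derive them from scratch. The common tool is the Thom class. For a complex rank-$q$ bundle $E\ra M$, the complex structure orients $E_{\bbR}$, so there is a Thom class $\Phi_E\in H^{2q}(E,E\setminus M;\bbZ)$. The top Chern class $c_q(E)$ equals the Euler class $e(E_{\bbR})=s_0^*\Phi_E$, where $s_0$ is the zero section.

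For (i), we take any section $s$. Since $s$ is homotopic to $s_0$ through the linear homotopy $ts$, we have $c_q(E)=s^*\Phi_E$.
\begin{itemize}
\item If $s$ is transverse to the zero section, then $Z=s^{-1}(0)$ is a closed submanifold of real codimension $2q$. Its normal bundle is identified with $E|_Z$ by the derivative of $s$, and the orientation of $Z$ is induced from those of $M$ and $E$.
\item The tubular-neighbourhood description of $\Phi_E$ then shows that $s^*\Phi_E$ is the Thom class of the normal bundle of $Z$, extended by zero. This class is Poincar\'e dual to $[Z]$.
\item For the localized statement, suppose $s$ is nowhere zero on $M\setminus\operatorname{int}N$. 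Then $s$ is a map of pairs $(M,M\setminus\operatorname{int}N)\ra(E,E\setminus M)$. Hence $s^*\Phi_E$ is defined in $H^{2q}(M,M\setminus\operatorname{int}N;\bbZ)$, and it maps to $c_q(E)$ under the forgetful map.
\end{itemize}
The only real care needed here is orientation and sign bookkeeping. I expect this to be the main point to check.

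For (ii), I would use the splitting principle. After pulling back to a flag bundle, which is injective on cohomology, $E$ splits as a sum of line bundles.
\begin{itemize}
\item The Whitney formula for line bundles follows from the multiplicativity of Euler classes under direct sums, since Thom classes are multiplicative. The general case follows by splitting.
\item For $w_2(E_{\bbR})=c_1(E)\pmod2$, first treat a line bundle $L$. Then $L_{\bbR}$ is an oriented real plane bundle, so $w_2(L_{\bbR})$ is the mod-$2$ reduction of $e(L_{\bbR})=c_1(L)$, and $w_1(L_{\bbR})=0$.
\item For a split bundle, the Whitney formula for $w$ gives $w_2=\sum_{i<j}w_1w_1+\sum_i w_2(L_i)=\sum_i c_1(L_i)\pmod 2$, which is $c_1(E)\pmod 2$.
\end{itemize}

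For (iii), we use the tautological sequence $0\ra\mathscr T\ra\pi^*E\ra Q\ra0$ on $\bbP(E)$, where $Q$ has rank $q-1$.
\begin{itemize}
\item The Whitney formula gives $c(Q)=c(\pi^*E)\,(1-\xi)^{-1}$. Since $c_q(Q)=0$ by rank, the degree-$2q$ component yields the relation $\sum_{i=0}^{q}c_i(E)\,\xi^{q-i}=0$ (pullbacks suppressed).
\item On a fiber $\bbC\bbP^{q-1}$, the bundle $\mathscr T$ restricts to $\mathcal O(-1)$. Hence $\xi$ restricts to the hyperplane class $h$.
\item Pushforward is integration along the fiber. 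Its value on $\xi^{q-1}$ is computed fiberwise, and $\int_{\bbC\bbP^{q-1}}h^{q-1}=1$ gives $\pi_*(\xi^{q-1})=1$.
\item The Leray--Hirsch theorem, applied to the classes $1,\xi,\ldots,\xi^{q-1}$, confirms that these restrict to a basis on every fiber. This is consistent with the relation above.
\end{itemize}
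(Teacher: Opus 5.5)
Your proposal is correct and consistent with the paper, which gives no proof of this lemma and simply cites Bott--Tu; your sketch reconstructs the standard Thom-class, splitting-principle, and tautological-sequence arguments behind that citation. The one compressed step is the Whitney formula for a sum of line bundles $E=\bigoplus_i L_i$: multiplicativity of Euler classes by itself gives only the top class $c_q(E)=\prod_i c_1(L_i)$, so to obtain every $c_k$ you should apply it on $\bbP(E)$ to $\pi^*E\ox\mathscr T^*=\bigoplus_i\pi^*L_i\ox\mathscr T^*$, whose tautological section is nowhere zero, which gives $\prod_i(\xi+c_1(L_i))=0$, and then compare with the defining projective-bundle relation $\sum_i c_i(E)\xi^{q-i}=0$ (unique by Leray--Hirsch) to conclude $c(E)=\prod_i(1+c_1(L_i))$.
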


Lemma~\ref{lem:characteristic-class-tools} is applied in the proof of
Theorem~\ref{th:five_dim_contains_GHZ_LU}. It provides the relative
Euler-class framework for a zero set whose points away from the
product endpoints parametrize GHZ-LU vectors. The involution and
parity calculation  then force such a point. This proves that every
five-dimensional three-qubit range contains a GHZ-LU vector. Combining 
with Lemmas~\ref{le:lr+1-ls} and~\ref{le:3-qubit_UDA}, this range
statement rules out 2-UDA states of rank at least five.

\section{Range Structure and Criteria for Three-Qubit 2-UDA States}
\label{sec:three-qubit-2UDA-characterization}

In this section, we ananyze the three-qubit 2-UDA states from the perspective of range structure. We first
analyze the structure of GHZ-SLOCC-free ranges.
The resulting canonical inclusions settle the rank-one, rank-three,
and rank-four cases in this class. Rank-two states exhibit a finer
behavior, since a GHZ-SLOCC-free range may correspond to either a
2-UDA or a non-2-UDA state. We therefore derive a necessary and
sufficient range criterion for every rank-two state. For the remaining rank-three and rank-four states, 
we obtain 
range-restricted criterion for tripartite states through
finitely many exact SDP optimal values. We
finally prove that every three-qubit state of rank at least five is not
2-UDA. These results establish the sharp universal rank bound and
complete the statewise three-qubit framework.

\subsection{GHZ-SLOCC-Free Ranges and 2-UDA States of Rank at Most Four}
\label{subsec:ghz-slocc-free-range}

We begin with a structurally tractable class of three-qubit ranges.
If $\cR(\rho)$ contains a GHZ-LU state, then $\rho$ cannot be 2-UDA
\cite{qiu2026mixed}. Example~\ref{exp:GHZ-LU_not2UDA} shows that the
absence of GHZ-LU vectors from the range is not sufficient. We
therefore impose the stronger condition that $\cR(\rho)$ contains no
GHZ-SLOCC vector. We classify the resulting ranges up to local
unitary equivalence and system permutation, and then apply this
structure to the UDA problem.

We first consider states whose ranges contain a GHZ-LU state.
Lemma \ref{le:lr+1-ls} shows that if $\r$ is 2-UDA, then $\cR(\r)$ does
not contain any GHZ-LU states. The following example shows that the converse
does not hold. In particular, the absence of GHZ-LU states from the range does not guarantee
that a three-qubit state is 2-UDA.
\begin{example}
\label{exp:GHZ-LU_not2UDA}
Let $H = 2\s_x \ox \s_x \ox \s_y - 2\s_x \ox \s_z \ox \s_x + \s_y \ox \s_x \ox \s_x - \s_y \ox \s_y \ox \s_z - 2\s_z \ox \s_x \ox \s_z\in \cN_2$ be a Hermitian operator acting on $\cH_{ABC}$.  Suppose $H_+$ and $H_-$ are the positive and negative parts of $H$, respectively. Then $H_\pm \ge 0$ and $H=H_+-H_-$. Let
$\r=\frac{H_-}{12}$ and $ \s=\frac{H_+}{12}$.
Then $\s$ is 2-compatible with $\r$ and
$\s-\r=\frac{H}{12}\neq0$, so $\r$ is not 2-UDA.
One can verify that $\cR(\r)$ does not contain any GHZ-LU states by Lemma~\ref{le:LU-GHZ}.
  \qed
\end{example} 
Clearly, if $\cR(\rho)$ contains no GHZ-SLOCC states, then it contains no
GHZ-LU states. The converse does not generally hold. This motivates us to
consider the UDA property of states whose ranges contain no GHZ-SLOCC states. The range of a three-qubit state is naturally a subspace of $(\bbC^2)^{\ox 3}$. Since the UDA property of a state is invariant up to LU
equivalence and system permutations, 
we now classify GHZ-SLOCC-free three-qubit subspaces up to such operations, and then apply the classification of range 
to the UDA problem of quantum states.

First, we show that one fully product vector imposes a common coordinate
restriction on a GHZ-SLOCC-free subspace, while two linearly
independent fully product vectors further constrain its dimension and
canonical form.
\begin{lemma}
\label{le:noSLOCC_dim<4=1}
Let $\cV\subseteq\cH_{ABC}$ be a GHZ-SLOCC-free subspace.

(i) If $\cV$ contains the fully product vector $\ket{000}$, then $\cV$ is orthogonal to 
$\ket{111}$ and at least one of $\ket{011}$, $\ket{101}$ and $\ket{110}$.
In particular, up to a system permutation, $\cV$ is orthogonal to
$\ket{011}$ and $\ket{111}$.

(ii) If $\cV$ contains two linearly independent fully product vectors, then $\dim \cV\le4$.

(iii) If $\cV$ contains two linearly independent fully product vectors and $\dim \cV=4$, then up to SLOCC equivalence and a system permutation, $\cV=\Span\{\ket{000},\ket{001},\ket{100},\ket{101}\}$, $\Span\{\ket{000},\ket{001},\ket{010},\ket{100}\}$, or $\bbC^2\ox\Span\{\ket{00},\ket{01}+\ket{10}\}$.
\end{lemma}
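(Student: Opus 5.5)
The plan is to turn GHZ-SLOCC-freeness into polynomial identities using Lemma~\ref{lem:relative-invariance}. Since $\cV$ is GHZ-SLOCC-free, $\operatorname{Det}$ vanishes identically on $\cV$.

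For (i), fix any $\ket v=\sum v_{ijk}\ket{ijk}\in\cV$ and expand $\operatorname{Det}(s\ket{000}+t\ket v)$ as a binary quartic in $(s,t)$. In the explicit Cayley formula, the only monomial containing $a_{000}^2$ is $a_{000}^2a_{111}^2$. The monomials linear in $a_{000}$ are $-2a_{000}a_{111}(a_{011}a_{100}+a_{101}a_{010}+a_{110}a_{001})+4a_{000}a_{011}a_{101}a_{110}$. Hence the coefficient of $s^2t^2$ is $v_{111}^2$, and vanishing for all $s,t$ gives $v_{111}=0$ for every $\ket v\in\cV$, that is, $\cV\perp\ket{111}$. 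Substituting $v_{111}=0$ into the $st^3$ coefficient leaves $4v_{011}v_{101}v_{110}=0$ on all of $\cV$. A product of three linear forms vanishing on a vector space forces one factor to vanish identically, because a vector space is irreducible; equivalently, $\cV$ is not a union of three proper subspaces. So $\cV$ is orthogonal to one of $\ket{011},\ket{101},\ket{110}$. A system permutation then makes this $\ket{011}$.

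For (ii), I would normalize the two independent product vectors by SLOCC so that the first is $\ket{000}$. The second then differs from it in one, two, or three parties. If it differs in all three parties, a further SLOCC sends it to $\ket{111}$, and then $\ket{000}+\ket{111}\in\cV$ is GHZ-SLOCC, a contradiction. In the remaining cases the second vector is, up to permutation, $\ket{001}$ or $\ket{011}$. I would apply (i) at each product vector of $\cV$. For $\ket{011}$, this is done after the local flips $\sigma_x$ on $B,C$, which move it to $\ket{000}$. In the case $\ket{001}$, every $\ket{00}\ket c$ lies in $\cV$, so (i) can also be applied at generic points $\ket{00}\ket{c}$ after a change of basis on $C$. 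The vanishing conditions from (i) are dual linear functionals, so they transform correctly under SLOCC even though orthogonality itself is not preserved. Collecting them, I expect at least four linearly independent functionals annihilating $\cV$, which gives $\dim\cV\le 4$. In the $\ket{011}$ case these would be, for instance, the coefficients of $\ket{111}$ and $\ket{100}$, together with one weight-two coordinate from each application. The options excluded because the vectors themselves lie in $\cV$ must be tracked carefully.

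For (iii), the equality $\dim\cV=4$ means that exactly four independent constraints occur. I would enumerate which combinations of choices in (i) at the two points, and at the pencil $\ket{00}\ket c$ in the first case, leave a four-dimensional coordinate-type space. I would then use the remaining SLOCC freedom to reach one of the three listed forms. Lemma~\ref{lem:biseparable-product-loci} should help recognize the structures $\Span\{\ket 0\}\ox\ldots$ and $\bbC^2\ox\Span\{\ket{00},\ket{01}+\ket{10}\}$: the latter is the W-type symmetric subspace on $BC$ tensored with $\bbC^2$. The last step is to check that each candidate is GHZ-SLOCC-free, for example by verifying directly that $\operatorname{Det}\equiv0$ on it. I expect the main obstacle to be this case analysis in (iii), in particular the case where $\cV$ contains a whole pencil of product vectors, where the subspace is not a coordinate subspace and the non-coordinate form $\ket{01}+\ket{10}$ arises.
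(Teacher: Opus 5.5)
Your part (i) is correct and is the paper's argument. The gap is in (ii) and (iii). Collecting the linear constraints that (i) produces at product vectors of $\cV$ cannot yield four independent functionals.

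At a product vector $L\ket{000}$, (i) only produces product functionals of the form $\bra{ijk}L^{-1}$. In the case $\ket{000},\ket{001}\in\cV$, these force nothing beyond $u_{110}=u_{111}=0$. Take a pencil point $\ket{00}\ket{c}$ with $L_C\ket{c}=\ket{0}$. Both the $\ket{111}$-condition and the third alternative $\ket{110}$ pull back to functionals supported on $\bra{110},\bra{111}$. So the disjunction in (i) is always satisfied by a constraint you already have. At $\ket{000}$ itself, the factor $u_{110}$ of $u_{011}u_{101}u_{110}$ is already zero.

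This is structural, not a bookkeeping issue. The admissible space $\bbC^2\ox\Span\{\ket{00},\ket{01}+\ket{10}\}$ is annihilated only by the product functionals $\bra{a}\ox\bra{11}$, which form a two-dimensional family. For that space your method certifies at best $\dim\cV\le6$.

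In the case $\ket{000},\ket{011}\in\cV$ the problem is similar. The weight-two alternatives from the two applications of (i) both lie in $\{u_{101},u_{110}\}$, and they may coincide. You are then left with the five-dimensional space $\Span\{\ket{000},\ket{001},\ket{010},\ket{011},\ket{110}\}$, not four constraints.

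The missing step is to impose the full identity $\operatorname{Det}\equiv0$ on the coordinate subspace that remains after (i), as the paper does.

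In the first case it collapses to $(u_{010}u_{101}-u_{011}u_{100})^2$. So the projection $\cU$ of $\cV$ onto $\Span\{\ket{01},\ket{10}\}\ox\bbC^2$ consists only of decomposable tensors. Lemma~\ref{lem:biseparable-product-loci}(ii) then gives $\dim\cU\le2$, and hence $\dim\cV=2+\dim\cU\le4$. For (iii), the same lemma gives exactly two shapes: $\Span\{\alpha\ket{01}+\beta\ket{10}\}\ox\bbC^2$ and $\Span\{\ket{01},\ket{10}\}\ox\ket{\phi}$. These produce the coordinate forms, and the non-coordinate form with $\ket{01}+\ket{10}$ when $\alpha\beta\neq0$.

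In the second case $\operatorname{Det}$ collapses to $(u_{001}u_{110})^2$. The integral-domain argument then makes one of these two coordinates vanish identically. So a four-dimensional $\cV$ is one of two coordinate subspaces, which a bit flip or a system swap brings to the listed forms.

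Without this quadratic step, enumerating the choices in (i) has no mechanism that produces the third form in (iii). Your final check that the candidates are GHZ-SLOCC-free is not required by the statement.
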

The proof of Lemma~\ref{le:noSLOCC_dim<4=1} is given in
Appendix~\ref{app:proofs-ghz-slocc-free-range}.

We next present two consequences for a GHZ-SLOCC-free three-qubit
subspace. Such a subspace has dimension at most four. Moreover, in
dimensions three and four, it necessarily contains a fully product vector.
\begin{lemma}
\label{le:noSLOCC_dim<4}
Let $\cV\subseteq\cH_{ABC}$ be a GHZ-SLOCC-free subspace. Then 

(i) $\dim \cV\leq 4$.

(ii)\label{le:noSLOCC_dim<4_product}
If $\dim\cV=3$ or $4$, then $\cV$ contains a fully product vector.
\end{lemma}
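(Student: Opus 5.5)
We treat the two parts separately. Part (i) reduces to Lemma~\ref{le:noSLOCC_dim<4=1}(ii). Part (ii) splits according to whether $\cV$ contains a W-type vector.

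\textbf{Part (i).} Suppose to the contrary that $\dim\cV\geq5$. Since subspaces of a GHZ-SLOCC-free subspace are GHZ-SLOCC-free, we may pass to a five-dimensional subspace and assume $\dim\cV=5$.
\begin{enumerate}[1.]
\item Here $d=8$ and $\sum_i(d_i-1)=3$, so $5\geq d-3$ and Lemma~\ref{lem:product-vectors-from-dimension} gives a nonzero fully product vector in $\cV$.
\item Local invertible operators preserve the GHZ-SLOCC-free property by Lemma~\ref{lem:relative-invariance}. So after an SLOCC transformation this vector is $\ket{000}$.
\item By Lemma~\ref{le:noSLOCC_dim<4=1}(i), up to a system permutation, $\cV\perp\ket{011},\ket{111}$. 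Hence $\cV$ is a hyperplane in the six-dimensional space $\cW:=\Span\{\ket{000},\ket{001},\ket{010},\ket{100},\ket{101},\ket{110}\}$.
\item The subspace $\cV\cap\bigl(\ket1\ox\Span\{\ket{00},\ket{01},\ket{10}\}\bigr)$ has dimension at least $5+3-6=2$.
\item On $\Span\{\ket{00},\ket{01},\ket{10}\}\subseteq\bbC^2\ox\bbC^2$, decomposability means $a_{00}a_{11}-a_{01}a_{10}=0$, which reduces to $a_{01}a_{10}=0$. This is a binary quadratic form on a two-dimensional space, so it has a nontrivial zero over $\bbC$. This yields a product vector $\ket1\ox\ket b\ox\ket c\in\cV$.
\item This vector is linearly independent of $\ket{000}$, so Lemma~\ref{le:noSLOCC_dim<4=1}(ii) gives $\dim\cV\leq4$, a contradiction.
\end{enumerate}

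\textbf{Part (ii), biseparable case.} Let $\dim\cV\in\{3,4\}$. By Lemma~\ref{lem:relative-invariance}, every vector of $\cV$ satisfies $\operatorname{Det}=0$, so it is fully product, biseparable, or W-SLOCC.

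Suppose first that $\cV$ contains no W-SLOCC vector. Then every nonzero vector is product across some bipartition. Lemma~\ref{lem:biseparable-product-loci}(i) gives a fixed bipartition, say $A|BC$, across which every vector is product. Lemma~\ref{lem:biseparable-product-loci}(ii) then gives two possibilities.
\begin{itemize}
\item $\cV=\cU\ox y_0$ with $\cU\subseteq\bbC^2$. This forces $\dim\cV\leq2$, which is impossible.
\item $\cV=x_0\ox\cU$ with $\dim\cU\geq3$ in $\bbC^2\ox\bbC^2$. The determinant quadric on $\cU$ has a nontrivial zero, since $\dim\cU\geq2$ already suffices. This gives a product vector $\ket{b}\ox\ket{c}\in\cU$, and hence the fully product vector $x_0\ox\ket b\ox\ket c\in\cV$.
\end{itemize}

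\textbf{Part (ii), W-type case.} Otherwise $\cV$ contains a W-SLOCC vector, which we normalize to $\ket W=\ket{001}+\ket{010}+\ket{100}$. It suffices to treat a three-dimensional subspace $\cV\ni\ket W$.
\begin{enumerate}[1.]
\item For every $u\in\cV$, $\operatorname{Det}(\ket W+tu)$ vanishes identically in $t$, so each coefficient of this quartic polynomial gives a constraint on $u$.
\item The linear coefficient is $\langle\nabla\operatorname{Det}(W),u\rangle$. The only monomial of $\operatorname{Det}$ that is linear in the coordinates outside $\mathrm{supp}(W)$ and nonvanishing at $W$ is $4a_{001}a_{010}a_{100}a_{111}$. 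Hence this coefficient forces $u_{111}=0$.
\item The quadratic and cubic coefficients give further polynomial constraints on the remaining coordinates of $u$. We use them, together with residual SLOCC transformations stabilizing $\ket W$ (upper-triangular local maps), to bring a complementary basis $\{u_1,u_2\}$ of $\cV$ into a short list of normal forms.
\item For each normal form, we exhibit a product vector directly. Typically this is $\ket{000}$, or a vector $\ket{000}+\ldots$ found by solving a quadratic in one variable.
\end{enumerate}

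\textbf{Main obstacle.} The Segre variety $\bbP^1\times\bbP^1\times\bbP^1$ has codimension four in $\bbP^7$, so a generic projective plane misses it. Any proof must therefore use the $\operatorname{Det}\equiv0$ constraints in an essential way, and Step 3 above is where this happens. What I would try first is to show that these constraints force $\cV\subseteq\Span\{\ket{000},\ket{001},\ket{010},\ket{100}\}$ up to stabilizer moves, or else that $\cV$ falls into one of the structured configurations of Lemma~\ref{le:noSLOCC_dim<4=1}(iii). In either case a three-dimensional such $\cV$ meets the product variety, and this gives the fully product vector.
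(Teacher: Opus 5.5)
Your part (i) is correct and is essentially the paper's argument. The paper intersects $\cV\cap W$ directly with the fully product plane $\Span\{\ket{100},\ket{101}\}$, which avoids your quadric step, but both work. Your treatment of the case where $\cV$ has no W-SLOCC vector is also sound. So is reducing $\dim\cV=4$ to a three-dimensional subspace containing $\ket W$; the paper instead treats dimension four separately via an $A:BC$ product vector and Lemma~\ref{le:bilinear_form_dim<=2}.

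The genuine gap is the W-type case of (ii), which is the entire substance of that part. Steps 3--4 only assert that the $\operatorname{Det}\equiv0$ constraints, together with stabilizer moves, yield ``a short list of normal forms'' with visible product vectors. No such list is derived. Your fallback of landing in a configuration of Lemma~\ref{le:noSLOCC_dim<4=1}(iii) cannot be invoked directly either, since that lemma presupposes two independent product vectors in $\cV$, which is exactly what you are trying to produce. As you note yourself, a generic plane misses the Segre variety, so nothing is proved until the constraints are actually exploited.

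The missing idea is that the quadratic coefficient does almost all the work, through an isotropy bound rather than normal forms.
\begin{enumerate}
\item With $u_{111}=0$, the $t^2$-coefficient of $\operatorname{Det}(\ket W+tu)$ is $q=u_{011}^2+u_{101}^2+u_{110}^2-2u_{011}u_{101}-2u_{011}u_{110}-2u_{101}u_{110}$. Its Gram matrix has determinant $-4$, so $q$ is nondegenerate.
\item Let $\pi(\cV)$ be the projection of $\cV$ onto $\Span\{\ket{011},\ket{101},\ket{110}\}$. Since $q$ vanishes on this linear space, polarization makes it totally isotropic for a nondegenerate form on $\bbC^3$. Hence $\dim\pi(\cV)\leq1$.
\item If $\pi(\cV)=0$, then $\cV\subseteq\cT$, and a dimension count shows that $\cV$ meets $\bbC^2\ox\ket{00}$.
\item Otherwise $\cV_0:=\cV\cap\cT$ is two-dimensional. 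Fix $w\in\cV\setminus\cT$ with $(D,F,G):=(w_{011},w_{101},w_{110})\neq0$. For $u=a\ket{000}+b\ket{001}+c\ket{010}+e\ket{100}\in\cV_0$, the $t^2$-coefficient of $\operatorname{Det}(u+tw)$ gives $(De-Fc-Gb)^2-4FGbc=0$.
\item If $\cV_0$ contains a multiple of $\ket{000}$, you are done. Otherwise its image in the $(b,c,e)$-coordinates is two-dimensional. The conic above then contains a projective line, so its discriminant $-4D^2F^2G^2$ must vanish.
\item Combined with $q(D,F,G)=0$, this gives $G=0$ and $D=F\neq0$ after a system permutation, all of which fix $\ket W$. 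The conic becomes $(e-c)^2=0$, so $\cV_0\subseteq\Span\{\ket{000},\ket{001},\ket{010}+\ket{100}\}$. This space meets the fully product plane $\ket{00}\ox\bbC^2$ nontrivially.
\end{enumerate}
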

The proof of Lemma~\ref{le:noSLOCC_dim<4} is given in
Appendix~\ref{app:proofs-ghz-slocc-free-range}. 

Lemma~\ref{le:noSLOCC_dim<4} (ii) provides the starting point
for the canonical classification below. We first map the fully product
vector to $\ket{000}$ by local unitaries, after which
Lemma~\ref{le:noSLOCC_dim<4=1} (i) places the subspace in a smaller
coordinate subspace. Under this restriction, the vanishing of the
hyperdeterminant forces a relevant projection of the subspace to
consist entirely of decomposable two-qubit tensors.
Lemma~\ref{lem:biseparable-product-loci}(ii) then yields a fixed local
factor, which leads to the canonical inclusions into $\cT$ or
$\cC_\theta$ in Proposition~\ref{pro:GHZ_free_subspace_T_C}.
Besides, the product-vector conclusion does not extend to dimensions one and
two. This is immediate in the one-dimensional case. The following is
a counterexample for $\dim \cV=2$.
\begin{example}
\label{exp:two_dim_no_GHZ_product}
Let $\cV=\Span\{u,v\}\subseteq\cH_{ABC}$, where
\begin{eqnarray}
\label{eq:u=ket001}
&&u=|001\rangle+|010\rangle+|100\rangle,
   \\
   && 
\label{eq:v=ket011}v=|011\rangle+|101\rangle+4|110\rangle.
\end{eqnarray}
We claim that $\cV$ contains neither GHZ-SLOCC states nor fully
product states. For any
$|\phi\rangle=\alpha u+\beta v\in\cV$, direct substitution into
the Cayley hyperdeterminant shows
$\operatorname{Det}(\ket\phi)=0$. Hence, by
Lemma~\ref{lem:relative-invariance}, $\cV$ contains no
GHZ-SLOCC states.

If a nonzero $|\phi\rangle=\alpha u+\beta v$ were fully product, then
its coefficient matrix across the bipartition $A:BC$,
$$
M_{A:BC}(\ket\phi)=
\bma
0 & \alpha & \alpha & \beta \\
\alpha & \beta & 4\beta & 0
\ema
$$
would have rank one. However, its minors in columns $(1,2)$ and $(2,4)$
are $-\alpha^2$ and $-\beta^2$, respectively. Their vanishing forces
$\alpha=\beta=0$, a contradiction. Thus $\cV$ contains no fully
product states.
\end{example}

We next identify the canonical subspaces that govern the remaining
classification. Lemma~\ref{le:noSLOCC_dim<4=1} (i) constrains a
GHZ-SLOCC-free subspace once a fully product vector is fixed, while
Lemma~\ref{le:noSLOCC_dim<4_product} (ii) guarantees such a vector in
dimensions three and four. We define
\begin{eqnarray}
\label{eq:cT}
&&\cT:=\Span\{\ket{000}, \ket{100},
\ket{010}, \ket{001}\},
\\
\label{eq:cC_t}
&&\cC_\t:=\bbC^2\ox \Span\{\ket{00}, 
\cos\t\ket{01}+\sin\t\ket{10}\},\; \t\in [0,\frac{\pi}{4}].
\end{eqnarray}
\begin{proposition}
\label{pro:GHZ_free_subspace_T_C}
The canonical subspaces $\cT$ and $\cC_\theta$ in
\eqref{eq:cT} and \eqref{eq:cC_t} are GHZ-SLOCC-free.
Moreover, let $\cV\subseteq\cH_{ABC}$ be a GHZ-SLOCC-free subspace.
Then, up to LU equivalence and a system permutation,

(i) If $\cV$ contains a fully product vector, then
$\cV\subseteq\cT$ or $\cV\subseteq\cC_\theta$.

(ii) If $\dim\cV=3$ or $4$, then
$\cV\subseteq\cT$ or $\cV\subseteq\cC_\theta$. If
$\dim\cV=4$, the corresponding inclusion is an equality.
\end{proposition}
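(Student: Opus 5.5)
The plan is to prove the three assertions in order: first that $\cT$ and $\cC_\theta$ are GHZ-SLOCC-free, then part (i) through a direct computation of the hyperdeterminant on a coordinate subspace, and finally part (ii) as a corollary. Throughout, I will use the fact that for $\ket\psi=\ket0\ox X+\ket1\ox Y$, with $X,Y$ viewed as $2\times2$ coefficient matrices on $BC$, $\operatorname{Det}(\ket\psi)$ equals, up to a nonzero constant, the discriminant of the binary quadratic form $(s,t)\mapsto\det(sX+tY)$.

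\emph{Freeness of the canonical subspaces.} For $\cT$, every monomial of the Cayley hyperdeterminant contains a coefficient $a_{ijk}$ with $i+j+k\ge2$. All such coefficients vanish on $\cT$, so $\operatorname{Det}\equiv0$ on $\cT$. For $\cC_\theta$, both $X$ and $Y$ have the form $\bigl[\begin{smallmatrix}p & q\cos\theta\\ q\sin\theta & 0\end{smallmatrix}\bigr]$. Then
\[
\det(sX+tY)=-\cos\theta\sin\theta\,(sq_X+tq_Y)^2 .
\]
This is a perfect square, so its discriminant vanishes. Lemma~\ref{lem:relative-invariance} then gives that both subspaces are GHZ-SLOCC-free.

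\emph{Part (i).} Map the fully product vector to $\ket{000}$ by local unitaries. By Lemma~\ref{le:noSLOCC_dim<4=1}(i), after a system permutation we have $\cV\perp\ket{011},\ket{111}$. Thus every $v\in\cV$ has $X,Y$ with vanishing $(1,1)$ entry, and
\[
\det(sX+tY)=-(sx_{01}+ty_{01})(sx_{10}+ty_{10}).
\]
Its discriminant is, up to a constant, $(x_{01}y_{10}-x_{10}y_{01})^2$. Hence GHZ-SLOCC-freeness is equivalent to the following condition: the linear map
\[
\Phi:\cV\to\bbC^2_A\ox\bbC^2,\qquad v\mapsto \ket0\ox(x_{01},x_{10})+\ket1\ox(y_{01},y_{10}),
\]
has image consisting only of decomposable tensors. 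Lemma~\ref{lem:biseparable-product-loci}(ii) then gives two cases, where $\Phi(\cV)$ denotes the image of $\Phi$.
\begin{itemize}
\item[(a)] $\Phi(\cV)=\cU_1\ox(\alpha,\beta)$. Every $v$ has $BC$-part in $\Span\{\ket{00},\alpha\ket{01}+\beta\ket{10}\}$. Diagonal unitaries on $B$ and $C$ make $\alpha,\beta\ge0$ without changing $\ket{00}$, which gives a $\theta\in[0,\pi/2]$. A swap of $B$ and $C$ then brings $\theta$ into $[0,\pi/4]$, so $\cV\subseteq\cC_\theta$.
\item[(b)] $\Phi(\cV)=a_0\ox\cU_2$. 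Then $\cV\subseteq\bbC^2\ox\ket{00}+a_0\ox\Span\{\ket{01},\ket{10}\}$. A unitary on $A$ sending $a_0$ to a multiple of $\ket0$ leaves $\bbC^2\ox\ket{00}$ invariant, so $\cV\subseteq\cT$.
\end{itemize}

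\emph{Part (ii).} If $\dim\cV\in\{3,4\}$, Lemma~\ref{le:noSLOCC_dim<4}(ii) supplies a fully product vector, and part (i) applies. Since $\dim\cT=\dim\cC_\theta=4$, the inclusion is an equality when $\dim\cV=4$.

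The main obstacle I anticipate is the bookkeeping in part (i). First, I must check that the hyperdeterminant reduces exactly to the rank-one condition on $\Phi$, with no hidden contribution from the coefficients $a_{000}$ and $a_{100}$; the formula for $\det(sX+tY)$ above shows that $x_{00},y_{00}$ drop out. Second, I must check that the normalizing local unitaries, namely the diagonal phases in case (a) and the $A$-rotation in case (b), do not destroy the coordinate restrictions obtained earlier. When $\Phi(\cV)$ is at most one-dimensional, both cases apply, and either conclusion suffices.
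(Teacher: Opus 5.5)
Your proposal is correct and follows the paper's proof essentially step for step. Both arguments normalize the product vector to $\ket{000}$ and use Lemma~\ref{le:noSLOCC_dim<4=1}(i) to remove the $\ket{011},\ket{111}$ coordinates. Both then note that $\operatorname{Det}$ collapses to $(x_{01}y_{10}-x_{10}y_{01})^2$, so the projection onto $\cH_A\ox\Span\{\ket{01},\ket{10}\}$ consists of decomposable tensors, and split via Lemma~\ref{lem:biseparable-product-loci}(ii) into the $\cT$ and $\cC_\theta$ cases, with (ii) following from Lemma~\ref{le:noSLOCC_dim<4}(ii). Your discriminant description of $\operatorname{Det}$ and your monomial argument for $\cT$ just make explicit the ``direct evaluation'' that the paper leaves to the reader.
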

The proof of Proposition~\ref{pro:GHZ_free_subspace_T_C} is presented
in Appendix~\ref{app:proofs-ghz-slocc-free-range}.

Proposition \ref{pro:GHZ_free_subspace_T_C} (i) will be used below to treat two-dimensional subspaces that
contain a fully product vector. Part (ii) reduces the study of
three- and four-dimensional GHZ-SLOCC-free ranges to states whose
ranges are contained in $\cT$ or $\cC_\theta$. We combine this reduction with
Proposition~\ref{pro:T_Ct_2UDA} to obtain their 2-UDA property.
By Lemma~\ref{le:noSLOCC_dim<4} (i), no higher-dimensional
GHZ-SLOCC-free subspace exists. The one-dimensional case is immediate,
whereas the two-dimensional case is not classified here.

We apply these canonical inclusion results to the range of three-qubit
mixed states to analyze their UDA property.
We show that every state whose range is a subspace of $\cT$ or
$\cC_\t$ is 2-UDA.
\begin{proposition}
\label{pro:T_Ct_2UDA}
If a three-qubit state $\rho$ satisfies $\cR(\rho)\subseteq \cT$ or $\cR(\rho)\subseteq \cC_\t$, then $\rho$ is 2-UDA.
\end{proposition}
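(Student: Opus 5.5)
The plan is to show that any state $\sigma$ that is 2-compatible with $\rho$ must first have its range in the same canonical subspace. Then we show that the real kernel $\cN_2$ meets $\operatorname{Herm}(\cT)$, respectively $\operatorname{Herm}(\cC_\theta)$, only in $0$. Since $\sigma-\rho\in\cN_2$, the second step forces $\sigma=\rho$. Both steps use only diagonal and off-diagonal entries of the two-body marginals, together with positivity of $\sigma$.

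\emph{Case $\cR(\rho)\subseteq\cT$.} The vectors $\ket{011},\ket{101},\ket{110},\ket{111}$ lie in $\ker\rho$. Hence $\bra{11}\rho_{BC}\ket{11}=\bra{011}\rho\ket{011}+\bra{111}\rho\ket{111}=0$, and similarly $\bra{11}\rho_{AC}\ket{11}=0$ and $\bra{11}\rho_{AB}\ket{11}=0$. These values are shared by $\sigma$, and $\sigma\geq0$. So the diagonal entries of $\sigma$ at $\ket{011},\ket{101},\ket{110},\ket{111}$ vanish, these vectors lie in $\ker\sigma$, and $\cR(\sigma)\subseteq\cT$. Now let $H\in\operatorname{Herm}(\cT)\cap\cN_2$. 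Because $H$ vanishes outside $\cT$, each entry of $H$ can be read off from a single two-body marginal:
\begin{itemize}
\item $H_{100,010}=\bra{10}H_{AB}\ket{01}$, and similarly for the other off-diagonal pairs among the weight-one vectors;
\item $H_{000,100}=\bra{00}H_{AB}\ket{10}$, and similarly for $H_{000,010}$ and $H_{000,001}$;
\item $H_{001,001}=\bra{01}H_{BC}\ket{01}$ (the contribution from $\ket{101}$ is zero), and similarly for the other weight-one diagonals;
\item $H_{000,000}=\bra{00}H_{AB}\ket{00}-H_{001,001}$.
\end{itemize}
All these marginal entries vanish, so $H=0$.

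\emph{Case $\cR(\rho)\subseteq\cC_\theta$.} Write $W=\Span\{\ket{00},\ket\psi\}$ with $\ket\psi=\cos\theta\ket{01}+\sin\theta\ket{10}$. Then $\cR(\rho_{BC})\subseteq W$, so $\bra{x}\rho_{BC}\ket{x}=0$ for every $x\in W^\perp=\Span\{\ket{11},\sin\theta\ket{01}-\cos\theta\ket{10}\}$. The same holds for $\sigma_{BC}$, and positivity gives $\cR(\sigma_{BC})\subseteq W$. Since $\cR(\sigma)\subseteq\cH_A\ox\cR(\sigma_{BC})$, we get $\cR(\sigma)\subseteq\cC_\theta$.

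For the kernel, write $H=\sum_{i,j\in\{0,1\}}X_{ij}\ox\ket{w_i}\!\bra{w_j}$ with $w_0=\ket{00}$, $w_1=\ket\psi$ and $X_{ji}=X_{ij}^\dg$. Tracing out $B$ gives $\tr_B\ket{00}\!\bra{00}=\proj0$, $\tr_B\ket{00}\!\bra{\psi}=\cos\theta\ket0\!\bra1$, and $\tr_B\proj\psi=\cos^2\theta\proj1+\sin^2\theta\proj0$. Therefore $H_{AC}=0$ reads
$$
(X_{00}+\sin^2\theta X_{11})\ox\proj0+\cos\theta\,(X_{01}\ox\ket0\!\bra1+X_{10}\ox\ket1\!\bra0)+\cos^2\theta\,X_{11}\ox\proj1=0.
$$
Since $\theta\in[0,\pi/4]$, we have $\cos\theta\geq1/\sqrt2\neq0$. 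This yields $X_{11}=0$, then $X_{01}=X_{10}=0$, and finally $X_{00}=0$. Hence $H=0$ and $\sigma=\rho$.

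The only delicate point is the range-transfer step. It relies on the fact that a zero diagonal entry of a positive semidefinite operator forces the corresponding vector into its kernel. In the $\cC_\theta$ case this must be applied to the entangled vector $\sin\theta\ket{01}-\cos\theta\ket{10}$ in $W^\perp$, and then lifted from $\sigma_{BC}$ to $\sigma$ via $\cR(\sigma)\subseteq\cH_A\ox\cR(\sigma_{BC})$. A second check is that the chosen marginal ($AC$) does not degenerate at the endpoint $\theta=0$. This is exactly why $AC$ is used rather than $AB$, since the $AB$ coefficients involve $\sin\theta$.
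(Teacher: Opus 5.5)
Your proposal is correct and follows the paper's proof essentially step for step. In both cases you first transfer the range condition to $\sigma$: via the vanishing 2-marginal entries at $\ket{11}$ for $\cT$, and via $\cR(\sigma_{BC})\subseteq W$ together with $\cR(\sigma)\subseteq\cH_A\ox\cR(\sigma_{BC})$ for $\cC_\theta$. You then show that the 2-marginal map is injective on $\operatorname{Herm}(\cT)$ by reading off matrix entries, and on $\operatorname{Herm}(\cC_\theta)$ by the same $\tr_B$ computation with coefficient $\cos\theta>0$. The only cosmetic difference is that the paper phrases the $\cT$ range transfer through the projectors $P_A,P_B,P_C$ with $\tr(P_X\sigma)=0$, which is equivalent to your diagonal-entry argument.
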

The proof of Proposition~\ref{pro:T_Ct_2UDA} is shown in
Appendix~\ref{app:proofs-ghz-slocc-free-range}.

This proof illustrates a general range-restriction and injectivity
criterion for arbitrary finite-dimensional multipartite systems. Let
$\cU$ be a subspace of the global Hilbert space and suppose
$\cR(\rho)\subseteq\cU$. Suppose the ranges of the prescribed
$k$-marginals, together with positivity, force
$\cR(\sigma)\subseteq\cU$ for every state $\sigma$ that is
$k$-compatible with $\rho$. If the
restriction of the marginal map $\cM_k$ to $\operatorname{Herm}(\cU)$ is
injective, then $\rho$ is $k$-UDA. Indeed, for every $k$-compatible
state $\sigma$, the difference $H=\sigma-\rho$ belongs to
$\operatorname{Herm}(\cU)\cap\ker\cM_k$ and hence vanishes. In
Proposition~\ref{pro:T_Ct_2UDA}, we verify the range condition from
the kernels of the 2-marginals and the injectivity condition by direct
partial traces.

Proposition~\ref{pro:T_Ct_2UDA} covers rank-two ranges contained in the
canonical subspaces $\cT$ and $\cC_\theta$, but not all
two-dimensional GHZ-SLOCC-free ranges. The following examples show
that the absence of GHZ-SLOCC states from a rank-two range alone does
not determine the UDA property.
\begin{example}
\label{exp:rank_two_GHZ_free_UDA}
Let $\omega=\exp(2\pi i/3)$, 
$\ket{e_1}=\ket{001}+\ket{010}+\ket{100}$,
and
$\ket{e_2}=\omega\ket{011}+\ket{101}+\omega^2\ket{110}$.
We consider the two rank-two states
$$
\rho_1=\frac{1}{6}\bigl(\ket{e_1}\bra{e_1}
+\ket{e_2}\bra{e_2}\bigr),
\qquad
\rho_2=\frac{1}{6}\ket u\bra u+\frac{1}{36}\ket v\bra v,
$$
where $u$ and $v$ are shown in \eqref{eq:u=ket001} and
\eqref{eq:v=ket011}. We have 
$\operatorname{Det}(a\ket{e_1}+b\ket{e_2})=0$ for all
$a,b\in\bbC$, while Example~\ref{exp:two_dim_no_GHZ_product} shows
the same conclusion for $\Span\{u,v\}$. Thus, by
Lemma~\ref{lem:relative-invariance}, both ranges are GHZ-SLOCC-free. By 
Proposition~\ref{pro:rank_two_2UDA_parameter}, the corresponding test
admits a nonzero compatible perturbation for $\rho_1$ but none for
$\rho_2$. Consequently, $\rho_1$ is not 2-UDA whereas $\rho_2$ is
2-UDA.
\end{example}
Example~\ref{exp:rank_two_GHZ_free_UDA} shows that the absence of
GHZ-SLOCC vectors does not determine the UDA property at rank two. The
preceding structural results nevertheless settle all other possible
ranks in the GHZ-SLOCC-free class. Lemma~\ref{le:noSLOCC_dim<4} (i)
first restricts the rank to at most four. At rank one, the range
contains no GHZ-LU vector, so Lemma~\ref{le:3-qubit_UDA} establishes
2-UDA. At ranks three and four,
Lemma~\ref{le:noSLOCC_dim<4} (ii) and
Proposition~\ref{pro:GHZ_free_subspace_T_C} reduce the range to a
subspace of $\cT$ or $\cC_\theta$. Proposition~\ref{pro:T_Ct_2UDA}
then establishes 2-UDA. To conclude, we have the following fact.
\begin{theorem}
\label{th:GHZ_free}
Let $\r$ be a three-qubit mixed state whose range contains no GHZ-SLOCC
states. If $\rank \r=1,3$ or $4$, then $\r$ is 2-UDA.
\end{theorem}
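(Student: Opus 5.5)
The plan is to split Theorem~\ref{th:GHZ_free} by rank and, in each case, chain results already stated in the paper. Throughout, the $k$-UDA property is invariant under local unitaries and system permutations. By Lemma~\ref{le:lr+1-ls} it also depends only on $\cR(\rho)$. So we may freely replace $\rho$ by an LU- and permutation-equivalent state, and reason purely about the subspace $\cV=\cR(\rho)\subseteq\cH_{ABC}$, which by hypothesis is GHZ-SLOCC-free.

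\emph{Rank one.} Here $\rho=\proj{\psi}$ for a normalized $\ket\psi\in\cV$. Every GHZ-LU vector is in particular a GHZ-SLOCC vector, since LU maps are SLOCC maps. Equivalently, by Lemma~\ref{lem:relative-invariance}, $\operatorname{Det}(\ket\psi)=0$, so $\ket\psi$ lies outside the GHZ-SLOCC orbit and hence outside the GHZ-LU class. Lemma~\ref{le:3-qubit_UDA} then gives that $\ket\psi$ is 2-UDA among all states, pure or mixed. That lemma is quoted for pure states against arbitrary competitors, consistent with the Walck--Lyons result cited in the introduction.

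\emph{Ranks three and four.} Now $\dim\cV\in\{3,4\}$, and $\cV$ is GHZ-SLOCC-free. Proposition~\ref{pro:GHZ_free_subspace_T_C}(ii) yields a local unitary $U_A\ox U_B\ox U_C$ and a system permutation $\pi$ that carry $\cV$ into $\cT$ or into some $\cC_\theta$. Internally this step uses Lemma~\ref{le:noSLOCC_dim<4}(ii) to produce a fully product vector, followed by part (i) of the proposition. Let $\rho'$ be the state obtained by applying this LU and permutation to $\rho$. Then $\cR(\rho')\subseteq\cT$ or $\cR(\rho')\subseteq\cC_\theta$, so Proposition~\ref{pro:T_Ct_2UDA} shows $\rho'$ is 2-UDA. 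Since the 2-UDA property is preserved under LU equivalence and system permutations, $\rho$ is 2-UDA.

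\emph{Remaining points.} By Lemma~\ref{le:noSLOCC_dim<4}(i), no GHZ-SLOCC-free range has rank above four, so these cases exhaust the ranks in the statement. The only point needing care is that applying the LU-plus-permutation map commutes with taking marginals, up to relabeling the 2-body subsystems. This ensures that a state compatible with $\rho'$ pulls back to one compatible with $\rho$, which is routine. All the substantive work, namely the range classification and the injectivity of $\cM_2$ on $\operatorname{Herm}(\cT)$ and $\operatorname{Herm}(\cC_\theta)$, is contained in the cited propositions. The proof therefore reduces to this assembly, and I expect no real obstacle beyond correctly invoking the equivalence invariance.
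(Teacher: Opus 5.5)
Your proposal is correct and matches the paper's proof. The rank bound comes from Lemma~\ref{le:noSLOCC_dim<4}(i), and rank one follows from Lemma~\ref{le:3-qubit_UDA} because every GHZ-LU vector is GHZ-SLOCC. Ranks three and four follow from Proposition~\ref{pro:GHZ_free_subspace_T_C}(ii), Proposition~\ref{pro:T_Ct_2UDA}, and the invariance of 2-UDA under local unitaries and system permutations.
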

The proof of Theorem~\ref{th:GHZ_free} is given in
Appendix~\ref{app:proofs-ghz-slocc-free-range}.

Theorem~\ref{th:GHZ_free} therefore isolates rank two as the only
GHZ-SLOCC-free case in which the range condition alone does not settle
the UDA property. Since Example~\ref{exp:rank_two_GHZ_free_UDA} shows
that both behaviors occur, we analyze rank-two states separately.

\subsection{A Range Criterion for Rank-Two Three-Qubit 2-UDA States}
\label{subsec:rank-two-three-qubit-states}

For a rank-two state $\r$, if $\cR(\r)\subset \cT$ in \eqref{eq:cT} or
$\cR(\r)\subset \cC_\t$ in \eqref{eq:cC_t} up to LU equivalence and system permutation, then
$\r$ is 2-UDA. Outside these canonical cases, the UDA property is not
determined by the absence of GHZ-SLOCC states from the range alone. We
next derive a necessary and sufficient condition for a rank-two state
to be 2-UDA.

We introduce some necessary notations. The criterion below can be stated without fixing coordinates for the
two-dimensional range. Let
$
\rho=\mathsf E\mathsf E^\dagger$,
 $\mathsf E=(\ket{x},\ket{y})$, and
$
\cL=\cR(\rho),
$
where $\ket{x}$ and $\ket{y}$ are linearly independent vectors. We
use the three-qubit spin flip $\Theta_3$ in
\eqref{eq:n-qubit-spin-flip}. For a vector $\ket v$, a Hermitian
operator $X$, and a subspace $\cU$, we write
$\ket{\widetilde v}:=\Theta_3\ket v$,
$\widetilde X:=\Theta_3X\Theta_3^{-1}$, and
$\widetilde{\cU}:=\Theta_3\cU$. We set
$$
\widetilde{\mathsf E}
=(\ket{\widetilde x},\ket{\widetilde y}).
$$
The scalar needed in our main result is
\begin{equation}
	\label{eq:delta_E_rank_two}
	\delta_{\mathsf E}
	:=
	\norm{x}^{2}\norm{y}^{2}
	-|\braket{x}{y}|^{2}
	-|\braket{\widetilde x}{y}|^{2}.
\end{equation}
Only the distinction between $\delta_{\mathsf E}=0$ and
$\delta_{\mathsf E}>0$ will matter. Its proof below shows that this
distinction depends only on $\cL$, although the value of
$\delta_{\mathsf E}$ is not determined by $\cL$ alone.
When $\delta_{\mathsf E}>0$, for $F=F^\dagger\in\bbC^{2\times2}$ and
$G=G^T\in\bbC^{2\times2}$, we set
\begin{equation}
	\label{eq:Phi_rank_two}
	\Phi_{\mathsf E}(F,G)
	:=
	\mathsf E F\mathsf E^\dagger
	-\widetilde{\mathsf E}\,\overline F\,\widetilde{\mathsf E}^{\dagger}
	+\mathsf E G\widetilde{\mathsf E}^{\dagger}
	+\widetilde{\mathsf E}G^\dagger\mathsf E^\dagger
\end{equation}
and define the real linear space 
\begin{equation}
	\label{eq:F_space_rank_two}
	\cF_{\mathsf E}
	:=
	\left\{
	F=F^\dagger:
	\ \Phi_{\mathsf E}(F,G)\in\cN_2
	\text{ for some }G=G^T
	\right\}.
\end{equation}
Thus $\cF_{\mathsf E}$ consists precisely of the upper-left
$2\times2$ blocks of the marginal-preserving perturbations relevant to
$\rho$. Each rank-two state admits such a factorization, for example,
by its spectral decomposition. Moreover, any two factorizations of the
same state differ by a unitary transformation on the right. Hence the
choice of $\mathsf E$ involves no loss of generality. We verify that the conditions below are independent not only of this
factorization, but also of the choice of the full-rank state with range
$\cL$.

\begin{proposition}
	\label{pro:rank_two_2UDA_parameter}
	Let $\rho$ be a rank-two three-qubit state, and let
	$\rho=\mathsf E\mathsf E^\dagger$ be any factorization as above. Then
	$\rho$ is 2-UDA if and only if both of the following conditions hold.
	
	(i) $\cL=\cR(\rho)$ contains no GHZ-LU state.
	
	(ii) either $\delta_{\mathsf E}=0$, or
	$
	\det F\leq0
	\text{ for every }F\in\cF_{\mathsf E}.
	$
	Both conditions are independent of the chosen factorization and depend
	only on the range $\cL$.
	\end{proposition}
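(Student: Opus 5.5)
The plan is to exploit the spin-flip symmetry of $\cN_2$ for three qubits. By \eqref{eq:n-qubit-spin-flip-Pauli}, every $H\in\cN_2$ is a real combination of weight-three Pauli strings. Hence $\widetilde H=-H$, and $\cN_2$ is exactly the set of spin-flip anti-invariant Hermitian operators whose one- and two-body partial traces vanish. If $\sigma=\rho+H$ is a state $2$-compatible with $\rho$, then $\widetilde\sigma=\widetilde\rho-H\geq0$ as well. This gives the sandwich $-\rho\leq H\leq\widetilde\rho$, so $\cR(H)\subseteq\cL+\widetilde\cL$. This is the range-restriction step. By Lemma~\ref{le:lr+1-ls}, non-2-UDA is then equivalent to the existence of a nonzero $H\in\cN_2\cap\operatorname{Herm}(\cL+\widetilde\cL)$ with $\rho+tH\geq0$ for some $t>0$. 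I would first record two facts. First, since $\Theta_3^2=-I$ is antiunitary, $\braket{\widetilde v}{v}=0$ for every $v$. Second, $\delta_{\mathsf E}$ is the Gram determinant controlling whether $\ket{\widetilde x}$ (equivalently $\widetilde\cL$) meets $\cL$. Together these show that $\delta_{\mathsf E}=0$ iff $\cL\cap\widetilde\cL\neq0$, which happens iff $\widetilde\cL=\cL$ for a two-dimensional $\cL$. They also show that $\delta_{\mathsf E}>0$ iff $\cL\oplus\widetilde\cL$ is four-dimensional. Factorization independence follows because $\mathsf E\mapsto\mathsf EU$ sends $\widetilde{\mathsf E}\mapsto\widetilde{\mathsf E}\overline U$, $F\mapsto U^\dagger FU$, and $G\mapsto U^\dagger G\overline U$. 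This preserves the sign of $\det F$ and the vanishing of $\delta$. Replacing $\rho$ by another full-rank state on $\cL$ amounts to $\mathsf E\mapsto\mathsf EA$ with $A$ invertible, which multiplies $\det F$ by $|\det A|^2>0$.

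For the case $\delta_{\mathsf E}>0$, I would write a general Hermitian operator on $\cL\oplus\widetilde\cL$ in block form $\mathsf EF\mathsf E^\dagger+\mathsf EG\widetilde{\mathsf E}^\dagger+\widetilde{\mathsf E}G^\dagger\mathsf E^\dagger+\widetilde{\mathsf E}K\widetilde{\mathsf E}^\dagger$. Imposing $\widetilde H=-H$ and using $\widetilde{\widetilde v}=-v$ should force $K=-\overline F$ and $G=G^T$, which is exactly \eqref{eq:Phi_rank_two}. Thus the admissible directions are $\Phi_{\mathsf E}(F,G)\in\cN_2$, and their $F$-blocks form $\cF_{\mathsf E}$. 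Positivity of $\rho+tH$ for small $t$, with $\rho$ positive definite on $\cL$, reduces via a Schur complement to positive semidefiniteness of the $\widetilde\cL$-block $-t\overline F$, together with the off-diagonal block vanishing wherever that block is singular. For sufficiency of non-UDA, suppose some $F\in\cF_{\mathsf E}$ has $\det F>0$. Then $F$ is definite, and after replacing $H$ by $-H$ we may take $F<0$. So $-\overline F>0$, and $\rho+tH\geq0$ for small $t>0$ gives a distinct compatible state. Conversely, suppose every $F$ has $\det F\leq0$. If $F$ is indefinite, no sign of $t$ works. If $F$ is singular, a nonzero $H$ would require $F\leq0$ of rank one with the off-diagonal $G$ supported accordingly. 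I would show that this forces the existence of a vector in $\cL$ whose pure state is non-2-UDA, that is, by Lemma~\ref{le:3-qubit_UDA}, a GHZ-LU vector. This is where condition (i) enters. Lemma~\ref{le:LU-GHZ} would be used to check it.

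For the case $\delta_{\mathsf E}=0$, we have $\widetilde\cL=\cL$, so $\cR(H)\subseteq\cL$. The question becomes whether $\cN_2\cap\operatorname{Herm}(\cL)$ is nonzero. Since $\rho$ is positive definite on $\cL$, any nonzero element gives non-UDA. I would argue that a nonzero anti-invariant $H$ on a spin-flip-invariant two-dimensional $\cL$ with vanishing two-body marginals has an eigenvector that is a GHZ-LU vector. Conversely, if $\cL$ contains a GHZ-LU vector, $\rho$ fails by Lemma~\ref{le:lr+1-ls}. So in this case 2-UDA is equivalent to (i). Necessity of (i) in general is always immediate from Lemmas~\ref{le:lr+1-ls} and~\ref{le:3-qubit_UDA}.

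The main obstacle is the boundary analysis. This covers both the singular $\det F=0$ directions when $\delta_{\mathsf E}>0$ and the entire $\delta_{\mathsf E}=0$ case, where one must show that the only surviving perturbations are those produced by a GHZ-LU vector in $\cL$. My first attempt would reduce a rank-one $F=-\proj{f}$ to an $H$ supported on $\Span\{\mathsf Ef,\widetilde{\mathsf E}\overline f\}$. I would then use $H\in\cN_2$ to show that $\ket{v}=\mathsf Ef$ and $\ket{\widetilde v}$ have equal two-body marginals. That is precisely the GHZ-LU characterization of Lemma~\ref{le:LU-GHZ}, via $\Delta_A=\Delta_B=\Delta_C=|\operatorname{Det}|$.
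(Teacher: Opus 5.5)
Most of your plan follows the paper's proof: $\widetilde H=-H$ on $\cN_2$, the range restriction $\cR(\sigma)\subseteq\cL+\widetilde{\cL}$, the block form $\Phi_{\mathsf E}(F,G)$ with $K=-\overline F$ and $G=G^T$, the Schur-complement reduction to $F\leq0$, and the $\delta_{\mathsf E}=0$ case via $H=\lambda(\proj{\phi_0}-\proj{\phi_1})$ with $\ket{\phi_0},\ket{\phi_1}\in\cL$ 2-compatible. The genuine gap is in the rank-one case for $\delta_{\mathsf E}>0$. You correctly flag it as the crux, but the step you propose there is false.

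With $F=-uu^\dagger$, positivity forces $G=\beta uu^T$, so $H=-\proj{v}+\proj{\widetilde v}+\beta\ket{v}\bra{\widetilde v}+\bar\beta\ket{\widetilde v}\bra{v}$ with $\ket{v}=\mathsf Eu$. The cross terms generally have nonzero one- and two-body marginals and can offset the difference between $\proj{v}$ and $\proj{\widetilde v}$. So $H\in\cN_2$ does not give $\ket{v}$ and $\ket{\widetilde v}$ equal 2-marginals. For instance, if $\ket{v}=a\ket{000}+b\ket{111}$, then $\ket{\widetilde v}=\bar b\ket{000}-\bar a\ket{111}$. In that case $H\in\cN_2$ if and only if $|a|^2-|b|^2=2\,\mathrm{Re}(\beta ab)$. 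For $\beta\neq0$ this allows $|a|\neq|b|$, whereas equal $AB$-marginals, $\operatorname{diag}(|a|^2,0,0,|b|^2)=\operatorname{diag}(|b|^2,0,0,|a|^2)$, would force $|a|=|b|$. Hence the intermediate claim you want to feed into Lemma~\ref{le:LU-GHZ} cannot be derived from $H\in\cN_2$.

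The missing idea is to work with the eigenvectors of $H$ instead of with $\ket{v}$ and $\ket{\widetilde v}$.
\begin{itemize}
\item The coefficient matrix of $H$ on $\Span\{\ket{v},\ket{\widetilde v}\}$ has determinant $-1-|\beta|^2$. So $H$ is traceless of rank two, $H=\lambda(\proj{\phi_0}-\proj{\phi_1})$, and its two orthogonal eigenvectors are 2-compatible pure states.
\item By Lemma~\ref{le:3-qubit_UDA} both eigenvectors are GHZ-LU. After a common local unitary they span $\Span\{\ket{000},\ket{111}\}$, which therefore contains $\ket{v}$.
\item Consequently $\ket{v}\in\cL$ is either GHZ-LU, which contradicts (i), or a product endpoint such as $\ket{000}$.
\end{itemize}
Condition (i) does not exclude the product-endpoint case. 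Your plan never reaches it, because it aims to show that $\ket{v}$ itself is GHZ-LU. That case needs its own argument. If $\ket{v}\propto\ket{000}$, then $\ket{\widetilde v}\propto\ket{111}$. The cross terms $\ket{000}\bra{111}$ and its adjoint have vanishing 2-marginals, but $-\proj{v}+\proj{\widetilde v}$ does not, which contradicts $H\in\cN_2$. Once this route is taken, Lemma~\ref{le:LU-GHZ} plays no role.
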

The proof of Proposition~\ref{pro:rank_two_2UDA_parameter} is
presented in Appendix~\ref{app:proofs-rank-two-three-qubit-states}.

Proposition \ref{pro:rank_two_2UDA_parameter} can be used to verify the
2-UDA property of a rank-two state as follows. If
$\delta_{\mathsf E}=0$, no further test is required. Suppose that
$\delta_{\mathsf E}>0$, and 
$$
F(z)=
\bma
z_1&z_3-iz_4\\
z_3+iz_4&z_2
\ema,
\qquad
G(w)=
\bma
w_1+iw_2&w_5+iw_6\\
w_5+iw_6&w_3+iw_4
\ema,
$$
where $z=(z_1,\ldots,z_4)^T\in\bbR^4$ and
$w=(w_1,\ldots,w_6)^T\in\bbR^6$. Let $P_1,\ldots,P_9$ be the nine
weight-one three-qubit Pauli strings. 
By Eq.~\eqref{eq:n-qubit-spin-flip-Pauli},
$\Phi_{\mathsf E}(F(z),G(w))$ is odd under the spin flip, so its
Pauli expansion contains only terms of weights one and three. Hence the
condition
$\Phi_{\mathsf E}(F(z),G(w))\in\cN_2$ is equivalent to
\begin{eqnarray}
	\label{eq:tr(PiPhi_E)}
	\tr\!\left(P_i\Phi_{\mathsf E}(F(z),G(w))\right)=0,
	\qquad i=1,\ldots,9.
\end{eqnarray}
Let $e_\alpha$ and $f_\beta$ denote the standard basis vectors of
$\bbR^4$ and $\bbR^6$, respectively. We define
$M_F\in\bbR^{9\times4}$ and $M_G\in\bbR^{9\times6}$ by
$$
(M_F)_{i\alpha}
:=
\tr\!\left(P_i\Phi_{\mathsf E}(F(e_\alpha),0)\right),
\qquad
(M_G)_{i\beta}
:=
\tr\!\left(P_i\Phi_{\mathsf E}(0,G(f_\beta))\right).
$$
These entries are real because the operators inside each trace are
Hermitian.
The equations in \eqref{eq:tr(PiPhi_E)} are equivalent to
$M_Fz+M_Gw=0$. Let 
$$
\cZ_{\mathsf E}
:=
\left\{
z\in\bbR^4:
\ M_Fz+M_Gw=0
\text{ for some }w\in\bbR^6
\right\}.
$$
Equivalently, $\cZ_{\mathsf E}$ is the projection of
$\ker(M_F\;M_G)$ onto the $z$-coordinates. Let
$Z\in\bbR^{4\times\dim\cZ_{\mathsf E}}$ be any matrix whose columns
form a basis of $\cZ_{\mathsf E}$. Since
$$
\det F(z)=z_1z_2-z_3^2-z_4^2=z^TJ_Fz,
\qquad
J_F=\tfrac12\sigma_x\oplus(-I_2),
$$
condition (ii) in Proposition \ref{pro:rank_two_2UDA_parameter} is
equivalent to
$
Z^TJ_FZ\leq0.
$
Thus we can directly verify the coordinate-free criterion from the
matrices $M_F$, $M_G$, and $Z$.

\subsection{An SDP Criterion for Tripartite 2-UDA States}
\label{subsec:support-reduced-SDP}

The preceding analytic criteria rely on special canonical forms or on
structural properties specific to two-dimensional ranges. They do not
directly cover rank-three and rank-four ranges that contain GHZ-SLOCC vectors but no
GHZ-LU vector. We now replace structural classification by a statewise
convex criterion. Generalized state inversion confines every compatible
state to an enlarged range, while the remaining marginal-preserving
directions form a compact spectrahedron. The resulting criterion is
stated for arbitrary finite-dimensional tripartite states and, in
particular, decides every prescribed state in the remaining
three-qubit regime. Any positive optimum produces an explicit
compatible state.

Let $\rho$ be a tripartite state on $\cH$. For each
$Y\in\{A,B,C\}$ and every linear operator $Z$ on $\cH_Y$, we define
the local universal inverter \cite{rungta2001universal} by
$\mathcal I_Y(Z):=\tr(Z)I_Y-Z$. For every linear operator $X$ on
$\cH$, we define the generalized tripartite state inversion by
\begin{eqnarray}
\label{eq:generalized-tripartite-state-inversion}
\widetilde X
:=
(\mathcal I_A\ox\mathcal I_B\ox\mathcal I_C)(X).
\end{eqnarray} 
For three qubits and Hermitian $X$, Eq.~
\eqref{eq:n-qubit-spin-flip} implies
$\widetilde X=\Theta_3X\Theta_3^{-1}$. We retain the inverter formulation here
because Proposition~\ref{pro:support-reduced-SDP} applies to arbitrary
finite local dimensions.
In the proof of Proposition~\ref{pro:support-reduced-SDP}, we will show
that this transformation preserves positive semidefiniteness. Besides, one can verify that 
$H\in\cN_2$ in \eqref{eq:N_k_general} implies that $\widetilde H=-H$. 
We set $r:=\rank\rho$ and $\cL:=\cR(\rho)$, and denote the orthogonal
projection onto $\cL$ by $P_{\cL}$. We define the maximally mixed
state on $\cL$ and the state-inversion-enlarged range by
\begin{eqnarray}
\label{eq:state-inversion-reduced-support}
\tau_{\cL}
:=
\frac1rP_{\cL},
\qquad
\widehat{\cL}
:=
\cR\bigl(\tau_{\cL}+\widetilde{\tau_{\cL}}\bigr),
\qquad
d_{\cL}:=\dim\widehat{\cL}.
\end{eqnarray}
We choose a real basis $H_1,\ldots,H_m$ of
$\cN_2\cap\operatorname{Herm}(\widehat{\cL})$ and an isometry
\begin{eqnarray}
\label{eq:support-reduced-isometry}
U_{\cL}:\bbC^{d_{\cL}}\ra\cH,
\qquad
\cR(U_{\cL})=\widehat{\cL},
\qquad
U_{\cL}^{\dagger}U_{\cL}=I_{d_{\cL}},
\qquad
U_{\cL}U_{\cL}^{\dagger}=P_{\widehat{\cL}}.
\end{eqnarray}
Here $P_{\widehat{\cL}}$ is the orthogonal projection onto
$\widehat{\cL}$. Equivalently, the columns of $U_{\cL}$ form an
orthonormal basis of $\widehat{\cL}$. Let
$R_{\cL}:=U_{\cL}^{\dagger}\tau_{\cL}U_{\cL}$ and 
$G_j:=U_{\cL}^{\dagger}H_jU_{\cL}$. We define the spectrahedron
\begin{eqnarray}
\label{eq:state-inversion-spectrahedron}
\Omega_{\cL}
:=
\Big\{
x\in\bbR^m:
R_{\cL}+\sum_{j=1}^m x_jG_j\geq0
\Big\}.
\end{eqnarray}

\begin{proposition}
\label{pro:support-reduced-SDP}
Let $\rho$ be an arbitrary finite-dimensional tripartite state, and
let the preceding objects be defined from $\cL=\cR(\rho)$. Then 

(i) Every state $\sigma$ that is 2-compatible with $\rho$ satisfies
$\cR(\sigma)\subseteq\widehat{\cL}$.

(ii) The spectrahedron $\Omega_{\cL}$ is compact, and $\rho$ is 2-UDA
if and only if $\Omega_{\cL}=\{0\}$.

(iii) Suppose $x^*\in\Omega_{\cL}\setminus\{0\}$ and set
$H^*:=\sum_{j=1}^m x_j^*H_j$. Then
$\rho^*:=\rho+\epsilon H^*$, where
$\epsilon:=\frac r2\lambda_{\min}^{+}(\rho)$ and
$\lambda_{\min}^{+}(\rho)$ is the smallest strictly positive
eigenvalue of $\rho$, is a state distinct from and 2-compatible with
$\rho$.
\end{proposition}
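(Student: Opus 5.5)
The plan is to base everything on two properties of the generalized inversion $X\mapsto\widetilde X$ in \eqref{eq:generalized-tripartite-state-inversion}: it only sees marginals of size at most two, except for a single term $-X$, and it is a positive map.

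\textbf{Expansion.} Expanding $\mathcal I_A\ox\mathcal I_B\ox\mathcal I_C$ as a product of $\tr(\cdot)I_Y-\mathrm{id}$ gives
\[
\widetilde X=\sum_{\cS\subseteq\{A,B,C\}}(-1)^{|\cS|}\,X_{A_{\cS}}\ox I_{A_{\cS^c}} ,
\]
where the term with $|\cS|=3$ is $-X$. Hence $X+\widetilde X$ depends only on marginals of size at most two. In particular, $H\in\cN_2$ implies $\widetilde H=-H$. For any state $\sigma$ that is 2-compatible with $\rho$, the same expansion gives
\[
\sigma+\widetilde\sigma=\rho+\widetilde\rho .
\]

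\textbf{Positivity.} I will write $\mathcal I_Y(Z)=\Lambda_Y(Z^T)$ with $\Lambda_Y(W)=\tr(W)I-W^T$. The Choi matrix of $\Lambda_Y$ is $I-F=2P_{\rm antisym}\ge0$, where $F$ is the swap, so $\Lambda_Y$ is completely positive. Therefore $\mathcal I_A\ox\mathcal I_B\ox\mathcal I_C=(\Lambda_A\ox\Lambda_B\ox\Lambda_C)\circ T$, where $T$ is the full transpose. The full transpose preserves positivity, so the inversion is a positive map.

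\textbf{Part (i).} From positivity, $\widetilde\sigma\ge0$, hence $\sigma\le\rho+\widetilde\rho$ and $\cR(\sigma)\subseteq\cR(\rho+\widetilde\rho)$. Since $c_1\tau_{\cL}\le\rho\le c_2\tau_{\cL}$ for suitable $c_1,c_2>0$, positivity and linearity of the inversion give the same sandwich for $\rho+\widetilde\rho$ against $\tau_{\cL}+\widetilde{\tau_{\cL}}$. Therefore $\cR(\rho+\widetilde\rho)=\widehat\cL$, which proves (i).

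\textbf{Compactness in (ii).} Every $H_j$ lies in $\operatorname{Herm}(\widehat\cL)$, so positivity of $\tau_{\cL}+\sum_j x_jH_j$ on $\cH$ is equivalent to $R_{\cL}+\sum_j x_jG_j\ge0$. Suppose $\Omega_{\cL}$ were unbounded. Then there would be a recession direction $x\neq0$ with $H=\sum_j x_jH_j\ge0$. But $H\in\cN_2$ is traceless, so $H=0$, and linear independence of the $H_j$ forces $x=0$, a contradiction. Since $\Omega_{\cL}$ is also closed, it is compact.

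\textbf{The equivalence in (ii).} By Lemma~\ref{le:lr+1-ls}, $\rho$ is 2-UDA if and only if $\tau_{\cL}$ is, since the two states have the same range. If $0\neq x\in\Omega_{\cL}$, then $\tau_{\cL}+\sum_j x_jH_j$ is a state, because the $H_j$ are traceless. It is 2-compatible with $\tau_{\cL}$ and distinct from it, so $\tau_{\cL}$, and hence $\rho$, is not 2-UDA. Conversely, suppose $\sigma\neq\tau_{\cL}$ is 2-compatible with $\tau_{\cL}$. Part (i), applied to $\tau_{\cL}$, gives $H=\sigma-\tau_{\cL}\in\cN_2\cap\operatorname{Herm}(\widehat\cL)$ with $H\neq0$. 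Then $\tau_{\cL}+H=\sigma\ge0$, so the coordinates of $H$ give a nonzero point of $\Omega_{\cL}$.

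\textbf{Part (iii).} The condition $\tau_{\cL}+H^*\ge0$ means $H^*\ge-P_{\cL}/r$. Write $\lambda=\lambda^+_{\min}(\rho)$, so that $\rho\ge\lambda P_{\cL}$. Then
\[
\rho+\epsilon H^*\ \ge\ \bigl(\lambda-\tfrac{\epsilon}{r}\bigr)P_{\cL}+\epsilon\bigl(H^*+P_{\cL}/r\bigr)\ \ge\ \tfrac{\lambda}{2}P_{\cL}\ \ge\ 0 .
\]
The trace and the 2-marginals are unchanged because $H^*\in\cN_2$. Finally $H^*\neq0$ by linear independence of the $H_j$, so $\rho^*$ is a state distinct from and 2-compatible with $\rho$.

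\textbf{Main obstacle.} I expect the delicate step to be the positivity of the inversion in arbitrary local dimension. Here the complete copositivity of $\mathcal I_Y$ must be used, since mere positivity of each factor would not suffice for the tensor product. The identification of $\cR(\rho+\widetilde\rho)$ with $\widehat\cL$ is a secondary point.
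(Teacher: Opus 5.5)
Your proposal is correct and follows essentially the same route as the paper. The shared steps are: the inverter expansion giving $\widetilde H=-H$ on $\cN_2$; positivity of the inversion as a completely positive map composed with the global transpose (your Choi matrix $I-F$ is the paper's Kraus form $\sum_{p<q}A_{pq}^{(Y)}(\cdot)^T(A_{pq}^{(Y)})^\dagger$ in another guise); the sandwich $a\tau_{\cL}\le\rho\le b\tau_{\cL}$ for (i); the reduction to $\tau_{\cL}$ via Lemma~\ref{le:lr+1-ls} for (ii); and the splitting $\rho^*=(\rho-\epsilon\tau_{\cL})+\epsilon\sigma^*$ for (iii).

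The only deviation is in the compactness step. Your recession-direction argument (a nonzero $x$ with $\sum_jx_jH_j\ge0$ is impossible because the $H_j$ are traceless and linearly independent) replaces the paper's explicit bound $\gamma_{\cL}\|x\|_2\le2$, which comes from injectivity of $x\mapsto\sum_jx_jG_j$. Both are valid; the paper's gives a quantitative radius for $\Omega_{\cL}$, while yours avoids the norm estimate.
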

The proof of Proposition~\ref{pro:support-reduced-SDP} is presented in
Appendix~\ref{app:proofs-support-reduced-SDP}.

We next formulate Proposition~\ref{pro:support-reduced-SDP} (ii) in
terms of finitely many SDP optimal values. Suppose $m\geq1$, and
let $e_1,\ldots,e_m$ be the standard basis of $\bbR^m$. We set
$c^{(0)}:=-(1,\ldots,1)^T$ and $c^{(j)}:=e_j$ for $j=1,\ldots,m$.
For $a=0,\ldots,m$, we consider the semidefinite program
\begin{eqnarray}
\label{eq:support-reduced-SDP-objectives}
\delta_a
:=
\max_{x\in\bbR^m}
\Big\{
(c^{(a)})^Tx:
R_{\cL}+\sum_{j=1}^m x_jG_j\geq0
\Big\}.
\end{eqnarray}
Since $0\in\Omega_{\cL}$, every $\delta_a$ is nonnegative. If
$\Omega_{\cL}$ contains a nonzero $x$, then either $x$ has a positive
coordinate or all its coordinates are nonpositive and
$-(1,\ldots,1)^Tx>0$. Hence
$$
\rho\text{ is 2-UDA}
\quad\Longleftrightarrow\quad
\max_{0\leq a\leq m}\delta_a=0.
$$
Thus, for exact input data and exact optimal values, $\rho$ is 2-UDA
if and only if all $m+1$ optimal values vanish. A positive optimum and
its optimizer produce the competing state in
Proposition~\ref{pro:support-reduced-SDP} (iii).

We briefly record the size of the criterion. Let
$d_X:=\dim\cH_X$ for $X=A,B,C$. We have
$
\dim_{\bbR}\cN_2=(d_A^2-1)(d_B^2-1)(d_C^2-1)$ and
$m\leq
\min\left\{
(d_A^2-1)(d_B^2-1)(d_C^2-1),
d_{\cL}^2-1
\right\}.
$
Each SDP in \eqref{eq:support-reduced-SDP-objectives} has $m$ real
variables and one $d_{\cL}\times d_{\cL}$ linear matrix inequality.
For three qubits, this inversion agrees with the spin flip in
\eqref{eq:n-qubit-spin-flip} and preserves rank, so
$d_{\cL}\leq\min\{2r,8\}$ and $m\leq27$. Consequently, at most $28$
SDPs occur in the characterization, each with a linear matrix
inequality of order at most $\min\{2r,8\}$. 
The range restriction in \eqref{eq:state-inversion-reduced-support} is the
main structural content of this criterion. It confines all compatible
states to a subspace determined only by the range, while the variables
in \eqref{eq:state-inversion-spectrahedron} describe genuine three-body
directions invisible to the 2-marginals. In particular, it provides an exact
characterization of individual rank-three or rank-four states in the
structurally open regime.

The range-restricted criterion therefore completes the statewise
analysis of the remaining low-rank three-qubit regime without requiring
a canonical classification of their ranges. It remains to determine
whether rank alone eventually precludes 2-UDA. We now return to
the structure of three-qubit ranges and establish the sharp universal rank
bound.

\subsection{The Sharp Rank Bound for Three-Qubit 2-UDA States}
\label{subsec:three-qubit-rank-bound}

The preceding results treat all three-qubit states of rank at most
four, either analytically by their range structure or statewise
through the range-restricted criterion. We now ask whether a state of
higher rank can be 2-UDA. A generalized GHZ-LU vector determines a
non-2-UDA pure state by Lemma~\ref{le:3-qubit_UDA}. If such a vector
lies in the range of a mixed state, Lemma~\ref{le:lr+1-ls} excludes
2-UDA for that mixed state. We therefore determine how large a
three-qubit subspace can be without containing a generalized GHZ-LU
vector.

We previously showed that three-qubit states of rank seven or eight are
not 2-UDA \cite{qiu2026mixed}. The following fact settles the remaining
ranks five and six and shows the sharp rank bound for 2-UDA states. We
combine projective geometry with a characteristic-class argument to
establish the required existence of a GHZ-LU vector.
\begin{theorem}
\label{th:five_dim_contains_GHZ_LU}
Let $\cV\subseteq\cH_{ABC}$ be a five-dimensional three-qubit
subspace. Then there exist $a,b\in\bbC$ with $ab\neq0$ such that
$\cV$ contains a vector LU equivalent to
$a\ket{000}+b\ket{111}$. Consequently, every three-qubit state $\rho$
of rank at least five is not 2-UDA.
\end{theorem}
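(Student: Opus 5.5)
The consequence is the easy part, so I would dispose of it first. If $\rank\rho\geq5$, pick any five-dimensional subspace $\cV\subseteq\cR(\rho)$. The first statement gives a GHZ-LU vector $\ket\psi\in\cV$. By Lemma~\ref{le:3-qubit_UDA}, $\proj\psi$ is not 2-UDA. Since $\proj\psi$ lies in the face of $\cR(\rho)$, Lemma~\ref{le:lr+1-ls} shows $\rho$ is not 2-UDA. The whole difficulty is therefore the existence statement, which I would prove by a characteristic-class count.

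\emph{Setup.} A GHZ-LU vector is exactly a vector $a\ket{e_0f_0g_0}+b\ket{e_1f_1g_1}$ with $ab\neq0$, where $\{e_0,e_1\}$, $\{f_0,f_1\}$, $\{g_0,g_1\}$ are orthonormal bases of the three qubits. I would parametrize the triple of lines $(e_0,f_0,g_0)$ by $M=(\bbC P^1)^3$, setting $e_1=e_0^\perp$ and similarly for the other qubits. Over $M$ this defines a smooth rank-two subbundle
$$W_p=\Span\{\ket{e_0f_0g_0},\ket{e_1f_1g_1}\}\subseteq\cH_{ABC}.$$
The two summands are orthogonal, so $W=L_0\oplus L_1$, with $L_0=\cO(-1,-1,-1)$. $L_1$ is the pullback of $L_0$ under the fixed-point-free antipodal involution $\iota$ of $M$, so $c_1(L_1)=\iota^*c_1(L_0)$. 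In cohomology $\iota^*$ acts by $h_i\mapsto -h_i$ on the hyperplane classes, and $L_1$ is topologically $\cO(1,1,1)$. On the projective bundle $\pi:\bbP(W)\ra M$, which has complex dimension $4$, the projection $\cH_{ABC}\ra\cH_{ABC}/\cV\cong\bbC^3$ composed with the tautological inclusion gives a section $s$ of $\mathscr T^*\ox\bbC^3$. Its zeros are exactly the lines of $W_p$ lying in $\cV$. By Lemma~\ref{lem:characteristic-class-tools}(i),(iii), the zero locus represents the Poincar\'e dual of $\xi^3$, a real two-dimensional class. Using the Chern-class relation $\xi^2+c_1(W)\xi+c_2(W)=0$ and $\pi_*\xi=1$, I would compute its pairing with $\pi^*h_1$, $\pi^*h_2$, $\pi^*h_3$ explicitly and check that it is nonzero.

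\emph{Endpoint zeros.} The only zeros of $s$ that are not GHZ-LU are the endpoints $[\ket{e_0f_0g_0}]$ and $[\ket{e_1f_1g_1}]$, that is, fully product vectors of $\cV$. By Lemma~\ref{lem:product-vectors-from-dimension} these do exist, since $5\geq8-3$, so they cannot be ignored. Their locus $Z_{\rm prod}$ comes in two copies: the copy where $e_0f_0g_0\in\cV$ and the copy where $e_1f_1g_1\in\cV$. The lift of $\iota$ to $\bbP(W)$, which swaps $L_0$ and $L_1$, is fixed-point free and exchanges these two copies. Take a closed $\iota$-invariant neighborhood $N$ of $Z_{\rm prod}$ and suppose, for contradiction, that $s$ has no zero outside $N$. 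Then $\xi^3$ localizes to the relative Euler class in $H^{6}(\bbP(W),\bbP(W)\setminus\operatorname{int}N)$. Because $N$ splits into two $\iota$-swapped halves carrying equal contributions, the localized class is twice a class coming from one half.

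\emph{Parity contradiction.} Next I would pass to the quotient $\bbP(W)/\iota$, or equivalently track the class mod $2$ using Lemma~\ref{lem:characteristic-class-tools}(ii) and $w_2=c_1\pmod2$. The aim is to show that the global class, paired against a suitable $\iota$-invariant cycle (for instance a fiber class times a hyperplane class), is odd. That contradicts the evenness of the localized class, so $s$ must vanish at a point outside $N$, which is a GHZ-LU vector in $\cV$. As a sanity check in the generic case: $\cV$ then meets the Segre variety in $6$ isolated points, so $Z_{\rm prod}$ is finite. A finite set cannot carry the nonzero two-dimensional class $\xi^3$, which already forces a positive-dimensional zero component through GHZ-LU points. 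The parity argument is what handles the degenerate cases, where $\cV$ contains whole curves of product vectors, for example $x\ox y\ox\bbC^2$.

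\emph{Main obstacle.} I expect the hardest step to be the parity computation. One must choose the test class so that the global pairing is odd; a priori the pairing with $\pi^*h_i$ may be even, in which case I would try pairing with $\xi\cdot\pi^*h_i$ or work on the quotient bundle. One must also rigorously justify that the local contribution near a possibly singular, positive-dimensional product locus is genuinely doubled by the involution. If this fails, my fallback is a perturbation argument on $\cV$ combined with closedness: take limits of GHZ-LU vectors, and show directly that $|a|,|b|$ stay bounded away from $0$ along the zero curve, using Lemma~\ref{le:LU-GHZ}.
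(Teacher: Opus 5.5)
Your setup matches the paper's: the bundle $L_0\oplus L_1$ over $(\mathbb{CP}^1)^3$, the section $s$ of $\mathscr T^*\otimes\bbC^3$ on its projectivization $M$, localization of $\xi^3$ near the two endpoint sections, and the free involution $\widehat\iota$ exchanging them. The consequence part is correct. So is your generic-case remark that a finite product locus cannot carry $\xi^3$.

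The gap is the parity step, which cannot work in the form you propose. By Leray--Hirsch, $H^2(M;\bbZ)$ is spanned by $\pi^*h_1,\pi^*h_2,\pi^*h_3,\xi$. The relations $c_1(W)=0$ and $c_2(W)=-h^2$ give $\xi^2=\pi^*h^2$, hence $\int_M\pi^*h_i\,\xi^3=2$ and $\int_M\xi^4=0$. Your alternative $\xi\,\pi^*h_i$ has the wrong degree. Thus every class you could test against, integral or mod $2$, pairs evenly with $\xi^3$, and there is no odd global number to contradict.

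Knowing that the two endpoint contributions agree gives nothing new. For the test class $\pi^*h$ it yields $6=2d$, which is perfectly consistent. For the test class $\xi$ the two contributions are in fact opposite, since $\widehat\iota^*\xi=\xi$ while $\widehat\iota$ reverses the orientation of $M$.

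The missing idea is that each endpoint contribution is itself even, so the total is $\equiv0\pmod 4$. The contradiction is then $6\not\equiv0\pmod 4$, i.e., $d=3$ would be odd. The paper proves this evenness in three steps:
\begin{enumerate}
\item It makes $s$ transverse inside a tubular neighborhood $N_{\mathcal A}$ of one endpoint section and transports the perturbation equivariantly to $N_{\mathcal B}=\widehat\iota(N_{\mathcal A})$, producing closed zero surfaces $\Sigma_{\mathcal A},\Sigma_{\mathcal B}$.
\item $M$ is spin, because $w_2(TM)\equiv\pi^*(2h)+2\xi\equiv0$. The normal bundle of $\Sigma_{\mathcal A}$ is $\mathcal F|_{\Sigma_{\mathcal A}}$ with $w_2(\mathcal F)\equiv3\xi\equiv\xi$, and $w_2(T\Sigma_{\mathcal A})=0$. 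Hence $\xi|_{\Sigma_{\mathcal A}}$ is even.
\item Since $\xi=\pi^*h$ away from the other endpoint section, $d=\int_{\Sigma_{\mathcal A}}\pi^*h$ is even.
\end{enumerate}

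The equality $I_{\mathcal B}=I_{\mathcal A}$, which you assert, also needs sign bookkeeping that you did not supply. One has $[\Sigma_{\mathcal B}]=-\widehat\iota_*[\Sigma_{\mathcal A}]$, because $\widehat\iota$ reverses the orientation of $M$ but acts complex-linearly on $\mathcal F$. Together with $\widehat\iota^*\pi^*h=-\pi^*h$, the two signs cancel.

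Your fallback of taking limits of GHZ-LU vectors from generic perturbations of $\cV$ does not substitute for this. Keeping $|a|,|b|$ bounded away from zero in the degenerate cases is exactly what the topological argument is needed for.
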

The proof of Theorem~\ref{th:five_dim_contains_GHZ_LU} is presented in
Appendix~\ref{app:proofs-three-qubit-rank-bound}.

We associate with $\cV$ a section of a rank-three vector bundle over a
projective bundle that parametrizes complementary product directions.
A zero away from the two product endpoint sections determines a GHZ-LU
vector in $\cV$. If every zero were confined to the endpoints, the
orthogonal-complement involution and a parity argument would force the
relevant characteristic number to be divisible by four. A direct
Chern-class computation shows that this number equals six, which is
impossible. We then obtain the rank consequence by applying the
existence result to a five-dimensional subspace of $\cR(\rho)$ and
using Lemmas~\ref{le:lr+1-ls} and~\ref{le:3-qubit_UDA}. 

Theorem~\ref{th:five_dim_contains_GHZ_LU} shows that every 2-UDA
three-qubit state has rank at most four. This obstruction depends only
on the range and not on the nonzero eigenvalues of the state. In this
sense, the 2-UDA property is highly restrictive and therefore rare
among mixed states. The set of rank-deficient density matrices has
Lebesgue measure zero in the affine space of density operators and
hence has measure zero under every full-dimensional unitarily invariant
measure that is absolutely continuous with respect to Lebesgue measure.

\section{Extensions to Multipartite Systems}
\label{sec:multipartite-UDA-results}

Having completed the statewise three-qubit analysis, we now extend its
methods to multipartite systems. We first compare the space of
range-constrained Hermitian directions with the kernel of the
marginal map. This comparison yields general high-rank obstructions,
and the multiqubit spin flip sharpens them near the rank threshold
considered below. We then complement these universal obstructions with an
explicit channel-based family whose exact UDA order can be
characterized.

\subsection{High-Rank Obstructions to Multipartite $k$-UDA}
\label{subsec:high-rank-non-2UDA}

We first derive a dimension criterion that applies to arbitrary
finite-dimensional multipartite systems. It requires no canonical form
for the state range. The criterion detects non-UDA states whenever
the range carries more Hermitian directions than the prescribed
marginals can distinguish. For multiqubit systems, we further use the
spin flip to incorporate the position of the kernel and obtain sharper
rank thresholds.
\begin{lemma}
\label{le:dimension_non_UDA_criterion}
Let $\rho$ be an $n$-partite state on
$\cH=\cH_{A_1}\ox\cdots\ox\cH_{A_n}$ of rank $r$, set
$\cL:=\cR(\rho)$, and let
$d_i:=\dim\cH_{A_i}$ for $i\in[n]$. We use the convention that the
product over the empty set is one. Then

(i) Suppose $1\leq k\leq n-1$. If
\begin{eqnarray}
\label{eq:dimension_non_UDA_criterion}
r^2>
\sum_{\substack{\cS\subseteq[n]\\|\cS|\leq k}}
\prod_{i\in\cS}(d_i^2-1),
\end{eqnarray}
then $\rho$ is not $k$-UDA. 
In particular, if all local dimensions are equal to $d$, then the
condition in \eqref{eq:dimension_non_UDA_criterion} becomes
$r^2>\sum_{j=0}^{k}\binom{n}{j}(d^2-1)^j$.

(ii) Suppose that $n\geq2$ and $\rho$ is an $n$-qubit state on $\cH_n$. We set
$\cK:=\ker\rho$, $\widehat{\cK}:=\cK+\Theta_n\cK$, and
$t:=\dim\widehat{\cK}$, where $\Theta_n$ is defined in
\eqref{eq:n-qubit-spin-flip}. If
\begin{eqnarray}
\label{eq:spin-flip-refined-obstruction}
3^n>
\frac{t(2^{n+1}-t+1)}{2},
\end{eqnarray}
then $\rho$ is not $(n-1)$-UDA. When $n$ is even, the same conclusion
holds if equality occurs in
\eqref{eq:spin-flip-refined-obstruction}.
\end{lemma}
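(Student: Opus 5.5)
For part (i), the plan is a real-dimension count for the marginal map restricted to $\operatorname{Herm}(\cL)$. Fix, for each $i$, an orthonormal (Hilbert--Schmidt) Hermitian basis of $\operatorname{Herm}(\cH_{A_i})$ consisting of $I_{A_i}/\sqrt{d_i}$ and $d_i^2-1$ traceless operators. Products of these form a basis of $\operatorname{Herm}(\cH)$; call the set of factors that are traceless the \emph{support} of a basis element. The partial trace onto $A_{\cS}$ kills every basis element whose support is not contained in $\cS$, and it is injective on the span of those supported inside $\cS$. Hence $\cN_k$ is exactly the real span of the basis elements with support of size $>k$, and
\[
\operatorname{codim}_{\bbR}\cN_k=\sum_{|\cS|\leq k}\prod_{i\in\cS}(d_i^2-1).
\]
Since $\dim_{\bbR}\operatorname{Herm}(\cL)=r^2$, the hypothesis \eqref{eq:dimension_non_UDA_criterion} forces a nonzero $H\in\operatorname{Herm}(\cL)\cap\cN_k$. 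Because $\rho$ is positive definite on $\cL$, the operator $\rho+\epsilon H$ is a state for small $\epsilon>0$. It is distinct from $\rho$ and $k$-compatible with it, so $\rho$ is not $k$-UDA. The equal-dimension formula then follows from counting subsets by size.

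For part (ii), I would run the same argument inside a smaller real space adapted to $\Theta_n$. By the count above, $\cN_{n-1}$ is the real span of the weight-$n$ Pauli strings, so $\dim_{\bbR}\cN_{n-1}=3^n$. By \eqref{eq:n-qubit-spin-flip-Pauli}, every $H\in\cN_{n-1}$ satisfies $\Theta_nH\Theta_n^{-1}=(-1)^nH$. Consequently, if $H\cK=0$ then $H\Theta_n\cK=0$, so the requirement $\cR(H)\subseteq\cL$ is equivalent to $H\widehat{\cK}=0$. The key step is to count these conditions inside the real space $\cW$ of Hermitian operators with $\Theta_nX\Theta_n^{-1}=(-1)^nX$. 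This count should give half the naive number of conditions. For $n$ even, $\Theta_n$ is an antiunitary involution, so there is a $\Theta_n$-fixed orthonormal basis adapted to the $\Theta_n$-invariant subspace $\widehat{\cK}$. In that basis $\cW$ becomes the real symmetric $2^n\times2^n$ matrices, and $H\widehat{\cK}=0$ amounts to
\[
\frac{t(t+1)}{2}+t(2^n-t)=\frac{t(2^{n+1}-t+1)}{2}
\]
real linear conditions. For $n$ odd, $\Theta_n^2=-I$ gives a quaternionic structure: $\widehat{\cK}$ is then even-dimensional, and $\cW$ consists of Hermitian operators anticommuting with the quaternionic conjugation. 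I would check, by choosing a symplectic (Kramers-pair) basis adapted to $\widehat{\cK}$, that the number of conditions is again at most $t(2^{n+1}-t+1)/2$. In both cases, the strict inequality \eqref{eq:spin-flip-refined-obstruction} leaves a nonzero $H\in\cN_{n-1}$ with $H\widehat{\cK}=0$, hence $\cR(H)\subseteq\cL$, and the perturbation $\rho+\epsilon H$ from part (i) finishes the argument.

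The main obstacle is the equality case for even $n$. There I need to show that the linear map from $\cN_{n-1}$ to the constraint space, $H\mapsto(\text{the blocks of }H\text{ meeting }\widehat{\cK})$, is not surjective. Equivalently, I need $\cN_{n-1}+\operatorname{Herm}_{\bbR}(\widehat{\cK}^{\perp})$ to be a proper subspace of the real symmetric matrices. To do this I would look for a nonzero real-symmetric $Y$ that is orthogonal to $\cN_{n-1}$ (so $Y$ is a combination of even-weight strings of weight $<n$, including $I$) and satisfies $Y=P_{\widehat{\cK}}YP_{\widehat{\cK}}$. The first candidate is built from the identity: since every element of $\cN_{n-1}$ is traceless while $I$ is real symmetric, one extra functional should be available. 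I would try to exhibit it as a compression of $I$ to $\widehat{\cK}$, corrected by lower-weight terms. If that fails, I would treat $t=0$ and $t=2^n$ directly and handle the intermediate cases through this dual witness.
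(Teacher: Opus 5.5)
The gap is in the equality case of part (ii) for even $n$. Part (i) and the strict-inequality case of (ii) are correct and essentially follow the paper. For odd $n$, the Kramers-pair count gives exactly $t(2^{n+1}-t+1)/2$ conditions, not merely an upper bound.

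\textbf{Where the plan fails.} For even $n$ at equality, your strategy is to find a nonzero $H\in\cN_{n-1}$ with $H\widehat{\cK}=0$, i.e.\ a perturbation whose range stays inside $\cL$. You reduce this to the non-surjectivity of $H\mapsto$ (blocks of $H$ meeting $\widehat{\cK}$), certified by a witness $Y\perp\cN_{n-1}$ with $Y=P_{\widehat{\cK}}YP_{\widehat{\cK}}$. At equality this intersection can be zero, so no such witness need exist.

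The simplest instance is $n=2$ with $\rho=\proj{\Phi^+}$, where $\ket{\Phi^+}=(\ket{00}+\ket{11})/\sqrt2$.
\begin{itemize}
\item Since $\Theta_2\ket{\Phi^+}=\ket{\Phi^+}$, the kernel $\cK=\ket{\Phi^+}^{\perp}$ is $\Theta_2$-invariant. Hence $t=3$ and $3^2=9=t(2^3-t+1)/2$, so equality holds.
\item Every Hermitian $H$ with $H\cK=0$ has the form $c\proj{\Phi^+}$, and $\tr H=0$ forces $c=0$. So no range-preserving perturbation exists.
\item Dually, the only even-weight operators orthogonal to $\cN_1$ are multiples of $I$. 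These are not supported on the three-dimensional $\widehat{\cK}$, so no witness $Y$ exists.
\end{itemize}
Yet $\rho$ is not 1-UDA: $(1-\epsilon)\rho+\epsilon I/4$ has the same 1-marginals. That competitor has a strictly larger range, which your approach never considers.

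\textbf{The missing idea.} The paper splits the equality case according to the compressions $\{P_{\cK}HP_{\cK}:H\in\cN_{n-1}\}$.
\begin{itemize}
\item \emph{Some compression is positive definite.} Write $H=\bigl[\begin{smallmatrix}A&B^\dagger\\ B&C\end{smallmatrix}\bigr]$ relative to $\cL\oplus\cK$ with $C>0$. The Schur complement $\rho_{\cL}+\epsilon(A-B^\dagger C^{-1}B)$ shows $\rho+\epsilon H\geq0$ for small $\epsilon>0$. The example above falls in this branch, with $H=I/4-\proj{\Phi^+}$.
\item \emph{No compression is positive definite.} Semidefinite separation gives $0\neq\tau\geq0$ with $\cR(\tau)\subseteq\cK$ and $\tr(\tau H)=0$ for all $H\in\cN_{n-1}$. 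Set $\Xi:=\tau+\Theta_n\tau\Theta_n^{-1}$, which is spin-flip symmetric because $n$ is even. Then $\Xi$ is a nonzero witness supported on $\widehat{\cK}$ and orthogonal both to $\cN_{n-1}$ and to the operators annihilating $\widehat{\cK}$. These two subspaces have complementary dimensions at equality and lie in the hyperplane $\Xi^\perp$, so they intersect nontrivially.
\end{itemize}

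Your dual-witness idea is right in the second branch. There, however, the witness must come from positivity via separation, not from correcting the compression of $I$ by lower-weight terms. The first branch, which needs a range-enlarging perturbation, is absent from your plan.
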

The proof of Lemma~\ref{le:dimension_non_UDA_criterion} is given in
Appendix~\ref{app:proofs-high-rank-non-2UDA}. Part (i) converts the multipartite UDA problem
into a comparison between the dimensions of the range-constrained
Hermitian directions and the marginal kernel. We obtain the bound by comparing the
$r^2$-dimensional space $\operatorname{Herm}(\cL)$ with the space of
operator coefficients detected by the $k$-marginals. For $(n-1)$-marginals, the relevant marginal-preserving operators are
the full-weight Pauli directions. Eqs.~\eqref{eq:n-qubit-spin-flip} and
\eqref{eq:n-qubit-spin-flip-Pauli} show that if such a direction
annihilates a subspace $\cK$, then it also annihilates
$\Theta_n\cK$. We use this observation to replace
$\cK$ by the spin-flip invariant space
$\cK+\Theta_n\cK$ in part (ii) and sharpen the dimension comparison.
We now apply both criteria to multiqubit systems.

\begin{theorem}
	\label{th:high_rank_dimension_obstruction}
	Let $n\ge3$, and let $\rho$ be an $n$-qubit state on $\cH_n$. Then
	
(i)  If $\rank\rho\ge 2^n-n$, then $\rho$ is not $k$-UDA for every 
		$1\le k\le n-2$.

(ii) If either $n\ge6$ and $\rank\rho\ge 2^n-n$, or
$n\in\{4,5\}$ and $\rank\rho\ge 2^n-n+1$, then $\rho$ is not
$k$-UDA for every $1\le k\le n-1$.
\end{theorem}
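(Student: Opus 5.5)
The plan is to derive both parts from the dimension criteria of Lemma~\ref{le:dimension_non_UDA_criterion}, using the rank hypothesis only through bounds on $r=\rank\rho$ and on $\dim\ker\rho=2^n-r$. For part (i), recall that a $k$-UDA state is also $\ell$-UDA for every $\ell\geq k$. Hence it suffices to show that $\rho$ is not $(n-2)$-UDA. For qubits, $d^2-1=3$, so Lemma~\ref{le:dimension_non_UDA_criterion}(i) with $k=n-2$ requires
$$
r^2>\sum_{j=0}^{n-2}\binom{n}{j}3^j=4^n-3^n-n3^{n-1}.
$$
If $r\geq 2^n-n$, then $r^2\geq 4^n-n2^{n+1}+n^2$. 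It therefore suffices to prove the elementary inequality
$$
3^n+n3^{n-1}+n^2>n2^{n+1}\qquad(n\geq3).
$$
For $n=3$ this reads $63>48$. For the induction step, raising $n$ by one multiplies the left side by at least $3$. The right side is multiplied by $2(n+1)/n\leq 8/3<3$. So the inequality propagates.

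For part (ii), the cases $k\leq n-2$ are already covered by part (i). This uses the fact that $2^n-n+1\geq2^n-n$, so the rank hypothesis of part (i) holds in both subcases. It remains to treat $k=n-1$ with Lemma~\ref{le:dimension_non_UDA_criterion}(ii). With $\cK=\ker\rho$, we have $t=\dim(\cK+\Theta_n\cK)\leq 2\dim\cK=2(2^n-r)$. The function $f(t)=t(2^{n+1}-t+1)/2$ is increasing for $0\leq t\leq 2^n$, and the relevant values of $t$ lie in this range. Hence it is enough to check the inequality \eqref{eq:spin-flip-refined-obstruction} at the largest admissible $t$.

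For $n\geq6$ and $r\geq 2^n-n$, we have $t\leq 2n$. So we need $3^n>f(2n)=n(2^{n+1}-2n+1)$. At $n=6$ this reads $729>702$. For larger $n$, induct as before: the left side triples, while the right side grows by a factor of at most roughly $2(n+1)/n<3$.

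For $n=5$ and $r\geq 2^5-4$, we have $t\leq8$. Then $f(8)=8\cdot57/2=228<243=3^5$.

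For $n=4$ and $r\geq 2^4-3$, we have $t\leq6$. Then $f(6)=6\cdot27/2=81=3^4$, which is the equality case. Since $n=4$ is even, the equality clause of Lemma~\ref{le:dimension_non_UDA_criterion}(ii) still yields the conclusion.

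The only delicate points are the monotonicity of $f$ and the boundary case $n=4$. The case $n=4$ is exactly where the even-$n$ equality clause is needed, and it explains why the threshold there cannot be lowered to $2^n-n$ by this argument: $t\leq8$ would give $f(8)=100>81$. I expect the main care to go into verifying the two numerical inductions uniformly. Their base cases $n=3$ and $n=6$ are tight enough that the growth-ratio comparison should be written out explicitly.
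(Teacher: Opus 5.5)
Your proposal is correct and matches the paper's proof essentially step for step. In part (i) you apply Lemma~\ref{le:dimension_non_UDA_criterion}(i) at the worst case $k=n-2$, which reduces to $3^n+n3^{n-1}+n^2>n2^{n+1}$. In part (ii) you use the bound $t\le 2\dim\ker\rho$, the monotonicity of $t(2^{n+1}-t+1)/2$, the induction from $729>702$ for $n\ge 6$, and the values $228<243$ for $n=5$ and $81=3^4$ for $n=4$, the latter via the even-$n$ equality clause.

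The differences are cosmetic. In part (i) the paper uses $3^{n-1}\ge 2^n$ and $3^n\ge n2^n$ instead of an induction. In part (ii) it obtains all $k\le n-1$ directly from non-$(n-1)$-UDA, since $(n-1)$-compatibility implies $k$-compatibility, rather than citing part (i).

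One correction for when you write out the $n\ge 6$ induction: the right side's growth factor actually slightly exceeds $2(n+1)/n$, so "at most roughly $2(n+1)/n$" is not a valid bound. Instead, verify explicitly, as the paper does, that $(n-2)2^{n+1}-4n^2+6n+1>0$ for $n\ge 6$.
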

The proof of Theorem~\ref{th:high_rank_dimension_obstruction} is given
in Appendix~\ref{app:proofs-high-rank-non-2UDA}.

From a physical perspective, Theorem~\ref{th:high_rank_dimension_obstruction}
shows that a sufficiently large state range necessarily contains global
perturbation directions invisible to the prescribed marginals. This
conclusion improves Proposition 5 in \cite{qiu2026mixed}. The
spin-flip refinement further shows that the position of the kernel
relative to its spin flip controls the boundary cases. Hence UDA is
intrinsically a low-rank phenomenon in this regime. Local data cannot uniquely
recover a global state once the dimension of its range reaches the stated threshold. 
Motivated by this high-rank obstruction, we ask whether every
$n$-qubit state with $\rank\rho\ge2^n-n$ is not $k$-UDA for any
$1\le k\le n-1$. Theorem~\ref{th:high_rank_dimension_obstruction}
answers this threshold question affirmatively for all $k\le n-2$.
For $k=n-1$, part (ii) gives an affirmative answer when $n\ge6$ and
yields the near-threshold bound
$\rank\rho\ge2^n-n+1$ when $n=4,5$. For $n=3$,
Theorem~\ref{th:five_dim_contains_GHZ_LU} shows that every
three-qubit state of rank at least $5=2^3-3$ is not 2-UDA. Together
with Theorem~\ref{th:high_rank_dimension_obstruction}~(i), this
resolves the threshold question for $n=3$. At the threshold rank
$\rank\rho=2^n-n$, we set $\cK:=\ker\rho$. The same spin-flip
criterion also rules out 3-UDA for $n=4$ when
$\dim(\cK\cap\Theta_4\cK)\ge2$, and it rules out 4-UDA for $n=5$ when
$\cK\cap\Theta_5\cK\neq\{0\}$. Indeed,
$\dim(\cK+\Theta_4\cK)\le6$ in the first case. In the second case,
$\Theta_5^2=-I$ implies that the nonzero intersection has even
dimension, and hence $\dim(\cK+\Theta_5\cK)\le8$.
Lemma~\ref{le:dimension_non_UDA_criterion}(ii) then applies. Thus the
remaining threshold cases are
\begin{eqnarray}
\label{eq:remaining_rank_threshold_cases}
&&n=4,\quad \rank\rho=12,\quad
\dim(\cK\cap\Theta_4\cK)\in\{0,1\},
\nonumber\\
&&n=5,\quad \rank\rho=27,\quad
\cK\cap\Theta_5\cK=\{0\}.
\end{eqnarray}

\subsection{A Channel-Based Family with Exact UDA Order}
\label{subsec:multipartite-non-2UDA}

The preceding results are universal but primarily obstructive. They
identify ranks above which the marginals cannot determine a state, but
they do not show how the exact UDA order depends on finer multipartite
structure. We now complement them with an explicit channel-based
family. The first $n-1$ systems share a classical label, while the last
system carries a channel output determined by that label. We show that
the purity and pairwise overlap of these outputs determine whether the
state is fixed by its $(n-1)$-marginals. We also prove that marginals of
every lower order fail throughout this family.

We first show that a channel maps the coherent and incoherent versions
of a generalized GHZ state to states with identical
$(n-1)$-marginals. From this observation, we obtain the basic compatible
pair used throughout this subsection. We denote the standard basis of
$\bbC^d$ by $\{\ket j\}_{j=1}^d$ and the identity map on $\bbM_d$ by
$\operatorname{id}_{\bbM_d}$. Here a channel is a completely positive
trace-preserving linear map between matrix algebras.
\begin{lemma}
\label{le:rho_sigma_compatible}
Let $n\geq 3$, $d,s\geq 2$, and let $p_j>0$ satisfy
$\sum_{j=1}^d p_j=1$. We set $p:=(p_1,\ldots,p_d)$ and define
$\ket{\psi_p}:=\sum_{j=1}^d\sqrt{p_j}\ket{j}^{\ox n}$. For a channel
$\Lambda:\bbM_d\ra\bbM_s$, we define two $n$-partite states
\begin{eqnarray}
\label{eq:rho_sigma_channel_states}
\rho_{\Lambda,p}^{(n)}
&:=&\sum_{j=1}^d p_j
\big(\ket{j}\bra{j}\big)^{\ox(n-1)}
\ox\Lambda(\ket{j}\bra{j}),\nonumber\\
\sigma_{\Lambda,p}^{(n)}
&:=&(\operatorname{id}_{\bbM_d}^{\ox(n-1)}\ox\Lambda)
(\proj{\psi_p})\nonumber\\
&=&\sum_{j,k=1}^d\sqrt{p_jp_k}
\big(\ket{j}\bra{k}\big)^{\ox(n-1)}
\ox\Lambda(\ket{j}\bra{k}).
\end{eqnarray}
Then $\rho_{\Lambda,p}^{(n)}$ and $\sigma_{\Lambda,p}^{(n)}$ have the same $(n-1)$-marginals. Consequently, they are also 2-compatible.
\end{lemma}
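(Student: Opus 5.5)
The plan is a direct computation. Write $\sigma:=\sigma_{\Lambda,p}^{(n)}$ and $\rho:=\rho_{\Lambda,p}^{(n)}$, and split the double sum defining $\sigma$ into its diagonal part ($j=k$) and its off-diagonal part ($j\neq k$). The diagonal part is exactly $\rho$. Hence
\begin{eqnarray*}
\sigma-\rho=\sum_{j\neq k}\sqrt{p_jp_k}\,\big(\ket{j}\bra{k}\big)^{\ox(n-1)}\ox\Lambda(\ket{j}\bra{k}),
\end{eqnarray*}
and it suffices to show that every $(n-1)$-marginal of $\sigma-\rho$ vanishes. That is, we need $\tr_{A_i}(\sigma-\rho)=0$ for each $i\in[n]$, where $\cS=[n]\setminus\{i\}$ ranges over all subsets of size $n-1$.

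We take each $i$ in turn and handle the off-diagonal terms one at a time.

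\textbf{Case $i\leq n-1$.} Tracing out $A_i$ removes one factor $\ket{j}\bra{k}$ from the first $n-1$ tensor slots. This contributes the scalar $\tr\ket{j}\bra{k}=\braket{k}{j}=\delta_{jk}$, which is zero for $j\neq k$.

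\textbf{Case $i=n$.} Tracing out $A_n$ produces the scalar $\tr\Lambda(\ket{j}\bra{k})$. Since $\Lambda$ is trace preserving and linear, this equals $\tr\ket{j}\bra{k}=\delta_{jk}$, which again vanishes for $j\neq k$.

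In both cases every off-diagonal term has zero partial trace, so $\tr_{A_i}\sigma=\tr_{A_i}\rho$ for all $i$. This is precisely $\cM_{n-1}(\sigma)=\cM_{n-1}(\rho)$. Both operators are states: $\rho$ is a convex combination of product states, and $\sigma$ is the image of a pure state under the channel $\operatorname{id}^{\ox(n-1)}\ox\Lambda$.

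For the last claim, note that every $2$-marginal is a partial trace of some $(n-1)$-marginal, because $n\geq3$ gives $2\leq n-1$. Thus equality of the $(n-1)$-marginals implies equality of the $2$-marginals, i.e.\ $\sigma-\rho\in\cN_{n-1}\subseteq\cN_2$.

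None of these steps is a real obstacle. The only point that needs care is the case where the traced system is the last one, since there one must use that $\Lambda$ preserves the trace of non-Hermitian inputs such as $\ket{j}\bra{k}$. This holds because trace preservation together with linearity extends from density matrices to all of $\bbM_d$.
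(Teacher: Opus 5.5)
Your proposal is correct and follows essentially the same route as the paper: isolate the off-diagonal terms of $\sigma_{\Lambda,p}^{(n)}-\rho_{\Lambda,p}^{(n)}$ and note that tracing out any $A_i$ with $i\le n-1$ kills them by orthogonality, while tracing out $A_n$ kills them because $\tr\Lambda(\ket{j}\bra{k})=\delta_{jk}$ by trace preservation. Your additional remarks (extension of trace preservation to non-Hermitian inputs by linearity, and the inclusion $\cN_{n-1}\subseteq\cN_2$ for $n\ge3$) are accurate details that the paper leaves implicit.
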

The proof of Lemma~\ref{le:rho_sigma_compatible} is shown in
Appendix~\ref{app:proofs-multipartite-non-2UDA}.

Lemma~\ref{le:rho_sigma_compatible} converts any nonzero off-diagonal
action of $\Lambda$ into a state distinct from
$\rho_{\Lambda,p}^{(n)}$ but with the same $(n-1)$-marginals.
It therefore reduces the UDA problem for this construction to the
freedom in choosing the off-diagonal action of a channel.

Next, we determine when the prescribed images of the diagonal matrix
units fix a channel uniquely. The following lemma shows a complete
dichotomy in terms of the output states. 
\begin{lemma}
\label{lem:unique_channel_from_diagonal}
Let $d,s\geq 2$, and let $\alpha_1,\ldots,\alpha_d\in\bbM_s$ be states. We define the admissible channel set
\begin{eqnarray}
\label{eq:channel_set_diagonal}
\cA(\alpha_1,\ldots,\alpha_d)
&:=&\{\Lambda:\bbM_d\ra\bbM_s:\ \Lambda\text{ is a channel}, 
\Lambda(\ket{j}\bra{j})=\alpha_j,\ j=1,\ldots,d\}.
\end{eqnarray}
Then  
(i) The set $\cA(\alpha_1,\ldots,\alpha_d)$ is nonempty.

(ii) The set $\cA(\alpha_1,\ldots,\alpha_d)$ contains a unique channel if and only if every $\alpha_j$ is pure and $\tr(\alpha_j\alpha_k)>0$ for any $j\neq k$. In this case, the unique channel is
\begin{eqnarray}
\label{eq:Lambda(X)=}
\Lambda_0(X)=\sum_{j=1}^d X_{j,j}\alpha_j,\qquad
X=(X_{j,k})\in\bbM_d.
\end{eqnarray}

(iii) If the condition in (ii) fails, then
$\cA(\alpha_1,\ldots,\alpha_d)$ contains infinitely
many channels. Moreover, there exist
$\Lambda\in\cA(\alpha_1,\ldots,\alpha_d)$ and
$r\neq t$ such that $\Lambda(\ket{r}\bra{t})\neq 0$.
\end{lemma}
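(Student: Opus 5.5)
We will work throughout with the Choi matrix. A linear map $\Lambda:\bbM_d\ra\bbM_s$ is a channel if and only if its Choi matrix
\[
C_\Lambda=\sum_{j,k}\ket j\bra k\ox\Lambda(\ket j\bra k)
\]
is positive semidefinite and satisfies $\tr_{s}C_\Lambda=I_d$. The condition $\Lambda(\ket j\bra j)=\alpha_j$ fixes the diagonal blocks of $C_\Lambda$ to be $\alpha_j$, and it already implies the trace-preserving condition, because the off-diagonal blocks $B_{jk}=\Lambda(\ket j\bra k)$ must have $\tr B_{jk}=0$ for $j\neq k$. So $\cA(\alpha_1,\ldots,\alpha_d)$ is the set of positive block matrices with diagonal blocks $\alpha_j$ and traceless off-diagonal blocks.

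\textbf{Part (i).} The map $\Lambda_0$ in \eqref{eq:Lambda(X)=} has Choi matrix $\bigoplus_j\alpha_j$. This is positive with the right diagonal blocks, so $\Lambda_0\in\cA$ and the set is nonempty. We will also note that $\cA$ is convex.

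\textbf{Part (ii), sufficiency.} Suppose every $\alpha_j=\proj{a_j}$ is pure. For a positive block matrix, each off-diagonal block satisfies $B_{jk}=\alpha_j^{1/2}K_{jk}\alpha_k^{1/2}$ with $\|K_{jk}\|\le1$. Hence
\[
B_{jk}=c_{jk}\ket{a_j}\bra{a_k}, \qquad c_{jk}\in\bbC .
\]
Tracelessness then gives $c_{jk}\braket{a_k}{a_j}=0$. Since $\tr(\alpha_j\alpha_k)=|\braket{a_j}{a_k}|^2>0$, we get $c_{jk}=0$, so $\Lambda=\Lambda_0$ and the channel is unique.

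\textbf{Part (ii), necessity, and part (iii).} Because $\cA$ is convex, producing one $\Lambda\neq\Lambda_0$ already gives infinitely many channels, namely the segment between $\Lambda$ and $\Lambda_0$. Any such $\Lambda$ has some nonzero off-diagonal block $\Lambda(\ket r\bra t)$, which is the additional claim in (iii). It therefore suffices to construct, whenever the condition fails, a perturbation of $\bigoplus_j\alpha_j$ in a single off-diagonal pair of blocks $(r,t)$: set $B_{rt}=\epsilon X$, $B_{tr}=\epsilon X^\dagger$, with $X$ traceless and nonzero, and keep positivity. The $2\times2$ block matrix
\[
\begin{pmatrix}\alpha_r & \epsilon X\\ \epsilon X^\dagger & \alpha_t\end{pmatrix}
\]
is positive for small $\epsilon>0$ whenever $X=\ket u\bra v$ with $u\in\cR(\alpha_r)$ and $v\in\cR(\alpha_t)$. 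Indeed, $\alpha_r\ge\lambda P_{\ket u}$ and $\alpha_t\ge\lambda P_{\ket v}$ for some $\lambda>0$, and the matrix $\begin{pmatrix}\lambda&\epsilon\\ \epsilon&\lambda\end{pmatrix}$ on $\Span\{u\oplus0,0\oplus v\}$ is positive for small $\epsilon$. Tracelessness of $X$ becomes the requirement $\braket vu=0$. We split into two cases.
\begin{enumerate}
\item Some $\alpha_j$ is mixed, say $\alpha_r$ with $\dim\cR(\alpha_r)\ge2$, and let $t\neq r$ be arbitrary (possible since $d\ge2$). Pick any nonzero $v\in\cR(\alpha_t)$. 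The orthogonal complement of $v$ is a hyperplane, so it meets the at least two-dimensional space $\cR(\alpha_r)$ nontrivially; take a nonzero $u$ in this intersection.
\item All $\alpha_j$ are pure but $\tr(\alpha_r\alpha_t)=0$ for some $r\neq t$. Take $u=a_r$ and $v=a_t$; these are orthogonal, so $\braket vu=0$ holds directly.
\end{enumerate}

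\textbf{Main obstacle.} The only delicate point is ensuring that the rank-one off-diagonal perturbation preserves positivity. This is handled by the lower bounds $\alpha_r\ge\lambda P_{\ket u}$ and $\alpha_t\ge\lambda P_{\ket v}$ together with the Schur-complement positivity of the $2\times2$ block; the bookkeeping that reduces $\cA$ to a Choi-matrix condition is routine.
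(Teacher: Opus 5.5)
Your proof is correct and follows essentially the paper's route: you use the Choi-matrix block structure with off-diagonal blocks confined between the ranges of $\alpha_j$ and $\alpha_k$ and required to be traceless, and then perturb $\bigoplus_j\alpha_j$ in a single off-diagonal pair $(r,t)$; the only differences are that you take a rank-one $X=\ket u\bra v$ and use convexity of $\cA$, where the paper takes a general $X\in\cG_{r,t}$ and varies the parameter $z$. One wording fix: the diagonal condition $\Lambda(\ket j\bra j)=\alpha_j$ does not by itself imply trace preservation, since trace preservation is equivalent to $\tr\alpha_j=1$ together with tracelessness of the off-diagonal blocks, and this combined condition is exactly what your resulting description of $\cA$ correctly records.
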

The proof of Lemma~\ref{lem:unique_channel_from_diagonal} is presented
in Appendix~\ref{app:proofs-multipartite-non-2UDA}.

Lemma~\ref{lem:unique_channel_from_diagonal} shows that uniqueness occurs
exactly when the diagonal outputs are pure and pairwise nonorthogonal.
If this condition fails, we obtain not merely a second compatible
channel, but infinitely many choices, including one that preserves a
nonzero off-diagonal matrix unit.

We now combine Lemmas~\ref{le:rho_sigma_compatible}
and~\ref{lem:unique_channel_from_diagonal} to characterize the
$k$-UDA property of the corresponding multipartite states.
\begin{proposition}
\label{pro:UDA_diagonal(n-1)}
Let $n\geq 3$, $d,s\geq 2$, and let $p_j>0$ satisfy
$\sum_{j=1}^d p_j=1$. Suppose
$\alpha_1,\ldots,\alpha_d\in\bbM_s$ are states. We set
$\alpha:=(\alpha_1,\ldots,\alpha_d)$ and
$p:=(p_1,\ldots,p_d)$. We define the state
$\rho_{\alpha,p}^{(n)}$ on $(\bbC^d)^{\ox(n-1)}\ox\bbC^s$ by
\begin{eqnarray}
\label{eq:rho_alpha_p_n}
\rho_{\alpha,p}^{(n)}
=\sum_{j=1}^d p_j
\big(\ket{j\cdots j}\bra{j\cdots j}\big)_{A_1\cdots A_{n-1}}
\ox\alpha_j.
\end{eqnarray}
Then

(i) The state $\rho_{\alpha,p}^{(n)}$ is $(n-1)$-UDA if and only if every $\alpha_j$ is pure and $\tr(\alpha_j\alpha_k)>0$ for any $j\neq k$.

(ii) For every $1\leq k\leq n-1$, the state $\rho_{\alpha,p}^{(n)}$ is $k$-UDA if and only if $k=n-1$ and the condition in (i) holds.
\end{proposition}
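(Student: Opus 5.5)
The plan is to identify $\rho_{\alpha,p}^{(n)}$ with the state $\rho_{\Lambda,p}^{(n)}$ of Lemma~\ref{le:rho_sigma_compatible}, taking for $\Lambda$ any channel in $\cA(\alpha_1,\ldots,\alpha_d)$. Indeed $\rho_{\Lambda,p}^{(n)}$ depends only on the diagonal outputs $\Lambda(\ket j\bra j)=\alpha_j$. Part (i) then splits into two directions handled by Lemmas~\ref{le:rho_sigma_compatible} and~\ref{lem:unique_channel_from_diagonal}. Part (ii) needs an explicit perturbation for $k\le n-2$.

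\emph{Necessity in (i).} Suppose some $\alpha_j$ is mixed or some pair is orthogonal.
\begin{itemize}
\item Lemma~\ref{lem:unique_channel_from_diagonal}(iii) gives $\Lambda\in\cA(\alpha)$ and $r\neq t$ with $\Lambda(\ket r\bra t)\neq0$.
\item Lemma~\ref{le:rho_sigma_compatible} shows that $\sigma_{\Lambda,p}^{(n)}$ has the same $(n-1)$-marginals as $\rho_{\Lambda,p}^{(n)}=\rho_{\alpha,p}^{(n)}$.
\item Its block $(\ket r\bra t)^{\ox(n-1)}\ox\sqrt{p_rp_t}\,\Lambda(\ket r\bra t)$ is nonzero, whereas $\rho_{\alpha,p}^{(n)}$ has no such block. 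So the two states differ and $\rho_{\alpha,p}^{(n)}$ is not $(n-1)$-UDA.
\end{itemize}

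\emph{Sufficiency in (i).} Let $\sigma$ be $(n-1)$-compatible with $\rho_{\alpha,p}^{(n)}$.
\begin{itemize}
\item Its $A_1\cdots A_{n-1}$ marginal is $\sum_jp_j\proj{j\cdots j}$. Positivity forces $\cR(\sigma)\subseteq\Span\{\ket{j\cdots j}\}\ox\bbC^s$. Hence
$$\sigma=\sum_{j,k}\ket{j\cdots j}\bra{k\cdots k}\ox X_{jk}.$$
\item Matching the $A_1\cdots A_{n-1}$ marginal off the diagonal gives $\tr X_{jk}=0$ for $j\neq k$.
\item Tracing out $A_{n-1}$ leaves at least one party among $A_1,\ldots,A_{n-2}$, since $n\ge3$. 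This kills the off-diagonal blocks and leaves $\sum_j\proj{j\cdots j}\ox X_{jj}$. Matching marginals then gives $X_{jj}=p_j\alpha_j$.
\item Define $\Lambda(\ket j\bra k):=X_{jk}/\sqrt{p_jp_k}$. Its Choi matrix $\sum_{j,k}\ket j\bra k\ox\Lambda(\ket j\bra k)$ is congruent to $\sigma$ via $D^{-1/2}$ composed with the isometry $\ket j\mapsto\ket{j\cdots j}$, where $D=\operatorname{diag}(p)$. So $\Lambda$ is completely positive.
\item The trace conditions make $\Lambda$ trace preserving, and $\Lambda(\ket j\bra j)=\alpha_j$. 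Thus $\Lambda\in\cA(\alpha)$.
\item By Lemma~\ref{lem:unique_channel_from_diagonal}(ii), $\Lambda=\Lambda_0$ of \eqref{eq:Lambda(X)=}. Hence $X_{jk}=0$ for $j\neq k$ and $\sigma=\rho_{\alpha,p}^{(n)}$.
\end{itemize}

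\emph{Part (ii).} Given (i), it remains to show that $\rho_{\alpha,p}^{(n)}$ is never $k$-UDA for $k\le n-2$; the case $k=n-1$ is exactly (i). The step needing care is building a positive semidefinite competitor even when all $\alpha_j$ are pure and pairwise orthogonal, for example $\alpha_j=\proj{j}$.
\begin{itemize}
\item Fix $r\neq t$ and unit eigenvectors $\ket a,\ket b$ of $\alpha_r,\alpha_t$ with positive eigenvalues $\lambda_a,\lambda_b$.
\item Set $X:=\sqrt{p_rp_t\lambda_a\lambda_b}\,\ket a\bra b$ and
$$\sigma:=\rho_{\alpha,p}^{(n)}+\ket{r\cdots r}\bra{t\cdots t}\ox X+\ket{t\cdots t}\bra{r\cdots r}\ox X^\dagger.$$
\item Positivity reduces to the block $\begin{pmatrix}p_r\alpha_r&X\\X^\dagger&p_t\alpha_t\end{pmatrix}$. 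On $\ket a\oplus\ket b$ this block is the $2\times2$ matrix $\begin{pmatrix}p_r\lambda_a&\sqrt{p_rp_t\lambda_a\lambda_b}\\ \cdot&p_t\lambda_b\end{pmatrix}$, which has zero determinant and so is positive semidefinite. On the orthogonal complement of $\ket a\oplus\ket b$ the block coincides with the PSD diagonal blocks and $X$ contributes nothing.
\item A $k$-subset with $k\le n-2$ misses at least two of the $n$ parties, so it misses at least one of $A_1,\ldots,A_{n-1}$. The partial trace over that party annihilates $\ket r\bra t$, so every $k$-marginal is unchanged.
\item Since $X\neq0$, we have $\sigma\neq\rho_{\alpha,p}^{(n)}$, which proves (ii).
\end{itemize}
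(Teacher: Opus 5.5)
Your proposal is correct and follows the paper's proof closely. Necessity in (i) uses Lemma~\ref{lem:unique_channel_from_diagonal}(iii) together with $\sigma_{\Lambda,p}^{(n)}$, and sufficiency turns a compatible state into a channel in $\cA(\alpha_1,\ldots,\alpha_d)$ through the Choi congruence and then invokes uniqueness. Part (ii) uses a coherence between the $\ket{r\cdots r}$ and $\ket{t\cdots t}$ blocks that every partial trace over one of $A_1,\ldots,A_{n-1}$ annihilates.

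The only difference is in (ii). You build this competitor directly from eigenvectors of possibly mixed $\alpha_r,\alpha_t$, whereas the paper first uses (i) to reduce to pure outputs; your version is a harmless streamlining. One small correction to your aside: the case that genuinely needs this separate construction is pure, pairwise nonorthogonal outputs, not the orthogonal example you cite.
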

The proof of Proposition~\ref{pro:UDA_diagonal(n-1)} is given in
Appendix~\ref{app:proofs-multipartite-non-2UDA}.

Proposition~\ref{pro:UDA_diagonal(n-1)} shows a complete criterion for
this family. In particular, we obtain $(n-1)$-UDA states precisely from
pure, pairwise nonorthogonal output states, whereas no state in the
family is determined by marginals of order below $n-1$. Thus the family
exhibits a sharp separation between determination by
$(n-1)$-marginals and determination by lower-order marginals.

The two multipartite extensions play complementary roles. The
dimension arguments show when UDA is impossible because the range
contains marginal-preserving directions. The channel construction
identifies an explicit family in which the exact UDA order is governed
by a transparent output-state condition. Together with the
three-qubit criteria, these results provide the structural input for
the applications below.

\section{Applications to GME Certification}
\label{sec:applications-GME}

The UDA property of quantum states naturally meets the identifiability
requirement of tomography by marginals, for which estimating
lower-dimensional marginals can require fewer samples than
reconstructing an unrestricted global state
\cite{xin2017quantum,qiu2026mixed}. We now apply the preceding UDA
results to GME certification. We first determine exact GME detection
lengths and then construct an $n$-qubit family whose separable
2-marginals jointly certify GME. For $n\geq4$, this certification
occurs even though the same marginals do not uniquely determine the
global state.

\subsection{GME Detection Length from UDA Criteria}
\label{sec:application-GME-length}

The entanglement detection length (EDL), introduced in
Ref.~\cite{shi2025entangle} quantifies the minimum number of parties
that must be measured jointly in order to certify the genuine multipartite entanglement (GME) of a state.
For a collection $\cJ$ of subsets of $[n]$, we denote the set of
compatible global extensions by
$$
\cE_{\cJ}(\rho)
:=\{\sigma:\sigma\text{ is a state on }\cH,
\ \sigma_{A_{\cS}}=\rho_{A_{\cS}}
\text{ for every }\cS\in\cJ\}.
$$
We say that $\cJ$ detects the GME of $\rho$ if every state in
$\cE_{\cJ}(\rho)$ is GME. The GME detection length of a GME
state $\rho$ is therefore
$$
\ell_{\rm GME}(\rho)
:=\min_{\cJ:\,\cJ\text{ detects the GME of }\rho}
\ \max_{\cS\in\cJ}|\cS|.
$$

We first show a direct connection with UDA. The  1-marginals cannot
certify GME, since they are always compatible with the fully product
state $\bigotimes_{i=1}^n\rho_{A_i}$. On the other hand, if a GME state is
2-UDA, its full collection of 2-marginals determines the state and
hence certifies its
GME. Consequently, every GME state that is 2-UDA satisfies
$\ell_{\rm GME}(\rho)=2$.
Ref.~\cite{shi2025entangle} shows a complete EDL criterion for
symmetric states and an SDP upper bound for general states. We use the
results in this paper to obtain exact analytic values for several
generally nonsymmetric families. Part (i) below shows three families of
three-qubit mixed states, while part (ii) concerns the multipartite
channel construction in \eqref{eq:rho_sigma_channel_states}. The
symmetric special cases are covered by Ref.~\cite{shi2025entangle},
whereas the generally nonsymmetric families and the channel
formulation below do not appear there.
\begin{proposition}
\label{pro:GME_detection_length_families}

(i) Let $\rho$ be a three-qubit mixed state. Suppose  one of the
following conditions holds. Then $\rho$ is GME and $\ell_{\rm GME}(\rho)=2$.

(a) For some $b,c,d\in\bbC$ with $bcd\neq0$,
$
\cR(\rho)
=\Span\{\ket{000},
b\ket{100}+c\ket{010}+d\ket{001}\}.
$

(b) For some $0<\theta\leq\pi/4$,
$\cR(\rho)\subseteq\cC_\theta$ and $\rho^{T_A}\not\geq0$.

(c) For $u$ in \eqref{eq:u=ket001} and $v$ in
\eqref{eq:v=ket011}, $\cR(\rho)=\Span\{u,v\}$.

\noindent
(ii) Let $n\geq3$, $d,s\geq2$, and let
$p=(p_1,\ldots,p_d)$ satisfy $p_j>0$ and
$\sum_{j=1}^dp_j=1$. Let $\Lambda:\bbM_d\ra\bbM_s$ be a channel,
and let $\sigma_{\Lambda,p}^{(n)}$ be defined in
\eqref{eq:rho_sigma_channel_states}. We set
$\cQ=\Span\{\ket{j}^{\ox(n-1)}:j=1,\ldots,d\}$. If
$\sigma_{\Lambda,p}^{(n)}$, regarded as a bipartite state on
$\cQ\ox\bbC^s$, is entangled across $\cQ:A_n$, then it is GME and
$
\ell_{\rm GME}\big(\sigma_{\Lambda,p}^{(n)}\big)=n.
$
When $d=s=2$, the entanglement condition is equivalent to
$\big(\sigma_{\Lambda,p}^{(n)}\big)^{T_{A_n}}\not\geq0$.
\end{proposition}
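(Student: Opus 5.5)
The plan separates the two halves of each claim. First we show that $\rho$ is GME. Then we show $\ell_{\rm GME}(\rho)=2$ in part (i) and $\ell_{\rm GME}=n$ in part (ii).

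For part (i), once GME is known, the value $\ell_{\rm GME}(\rho)=2$ follows from 2-UDA, since 1-marginals never certify GME.
\begin{itemize}
\item In case (a), the range lies in $\cT$. In case (b), it lies in $\cC_\theta$. Proposition~\ref{pro:T_Ct_2UDA} then gives 2-UDA.
\item In case (c), the range equals that of $\rho_2$ in Example~\ref{exp:rank_two_GHZ_free_UDA}. So Lemma~\ref{le:lr+1-ls} and the rank-two test of Proposition~\ref{pro:rank_two_2UDA_parameter} give 2-UDA.
\end{itemize}

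Thus the work in part (i) is proving GME. A biseparable state is a mixture of product vectors across the cuts $A|BC$, $B|AC$, $C|AB$, and every such vector lies in $\cR(\rho)$. So it suffices to identify the biseparable vectors in $\cR(\rho)$ and show they cannot span it.
\begin{itemize}
\item In case (a), write $\alpha\ket{000}+\beta(b\ket{100}+c\ket{010}+d\ket{001})$. When $\beta\ne0$ and $bcd\ne0$, the coefficient matrices across every cut have rank two, because these are W-type vectors. Hence only multiples of $\ket{000}$ are biseparable, and they span a one-dimensional space, not the rank-two range.
\item In case (c), run the same minor computation as in Example~\ref{exp:two_dim_no_GHZ_product} for each of the three cuts. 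I expect it to show that no nonzero vector of $\Span\{u,v\}$ is biseparable.
\item In case (b), vectors in $\cC_\theta=\bbC^2\ox\Span\{\ket{00},\cos\theta\ket{01}+\sin\theta\ket{10}\}$ with $\theta>0$ are never product across $B|AC$ or $C|AB$ except when they are $A|BC$-products. I would check this through the $2\times4$ coefficient matrices. Then any biseparable decomposition uses only $A|BC$-product vectors, so $\rho$ is $A|BC$-separable and $\rho^{T_A}\ge0$, contradicting the hypothesis.
\end{itemize}
Some care is needed in case (b) to handle vectors that are product across two cuts. These are fully product vectors, which are in particular $A|BC$-product, so they cause no trouble.

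For part (ii), first show GME. The range of $\sigma:=\sigma_{\Lambda,p}^{(n)}$ lies in $\cQ\ox\bbC^s$. Any product vector in this space across a cut separating some $A_i$ with $i\le n-1$ from the rest has the form $\ket{j}^{\ox(n-1)}\ox\ket{w}$, because the first $n-1$ parties are perfectly correlated in the $\cQ$ basis. These vectors are product across $\cQ:A_n$. Cuts with $A_n$ alone are product across $\cQ:A_n$ by definition. Hence a biseparable decomposition would make $\sigma$ separable across $\cQ:A_n$, which contradicts the hypothesis. For $d=s=2$, $\cQ\cong\bbC^2$, and the Peres--Horodecki criterion gives the stated equivalence.

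For the length, Lemma~\ref{le:rho_sigma_compatible} shows that $\rho_{\Lambda,p}^{(n)}$ shares all $(n-1)$-marginals with $\sigma$. That state is a mixture of $\proj{j}^{\ox(n-1)}\ox\Lambda(\proj j)$, so it is biseparable. Every collection $\cJ$ with all $|\cS|\le n-1$ has the same restricted marginals, so no such $\cJ$ detects the GME. Conversely, $\cJ=\{[n]\}$ trivially detects it, which gives $\ell_{\rm GME}=n$.

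The main obstacle is the case analysis of biseparable vectors in $\cC_\theta$ and in $\Span\{u,v\}$.
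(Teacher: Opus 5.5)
Your proposal is correct and follows essentially the same route as the paper. In each case you prove GME by showing that the biseparable vectors in the range cannot span it: only multiples of $\ket{000}$ in (a), none in (c), only $A|BC$-products in (b), and only $\cQ:A_n$-products in (ii). You then get $\ell_{\rm GME}=2$ from 2-UDA, via Proposition~\ref{pro:T_Ct_2UDA} or via the range of $\rho_2$ together with Lemma~\ref{le:lr+1-ls}. You get $\ell_{\rm GME}=n$ from the fully separable $(n-1)$-compatible partner in Lemma~\ref{le:rho_sigma_compatible}.

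There are two small differences from the paper. In (b) you use only the easy direction (separable $\Rightarrow$ PPT), whereas the paper cites the $2\otimes2$ equivalence. In (a) your ``W-type'' remark should be backed by the explicit minors such as $-\beta^2bd$, which is what the paper computes.
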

The proof of Proposition~\ref{pro:GME_detection_length_families} is
presented in Appendix~\ref{app:proofs-GME-applications}.

 Proposition~\ref{pro:GME_detection_length_families} (i) shows 
three generally nonsymmetric families of three-qubit mixed states whose
GME detection length is exactly two. More generally,
Theorem~\ref{th:GHZ_free} shows that every GME
three-qubit mixed state of rank three or four with a GHZ-SLOCC-free
range has GME detection length two, while for rank-two states,
Proposition~\ref{pro:rank_two_2UDA_parameter} shows the corresponding
range criterion, i.e.  every GME state satisfying its two conditions also 
has $\ell_{\rm GME}=2$. These conclusions do not require permutation
symmetry and determine the EDL exactly rather than merely establishing
an upper bound. 
Proposition~\ref{pro:GME_detection_length_families} (ii) turns to the
$n$-partite construction in
Sec.~\ref{subsec:multipartite-non-2UDA} and exhibits the opposite
extreme, i.e. the coherent channel family has maximal GME detection length
$n$, since no collection of proper marginals can certify its GME.

\subsection{GME Certification from Separable Marginals}
\label{subsec:GME-separable-marginals}

We next show that separable 2-marginals can certify GME throughout an
$n$-qubit rank-two family. Although no individual 2-marginal contains
bipartite entanglement, their joint compatibility excludes every
biseparable global completion. Thus the GME is encoded in the
compatibility constraints among the marginals rather than in any
marginal separately. This phenomenon extends the separation between
local and global correlation structures considered in
Refs.~\cite{chen2014role,miklin2016emergent}. In contrast to the
three-qubit case below, GME certification for $n\geq4$ does not arise
from the 2-UDA property.

\begin{proposition}
\label{pro:GME-from-separable-marginals}
Let $n\geq3$ and let
$\boldsymbol{x}=(x_1,\ldots,x_n)$ satisfy $x_i>0$,
$\sum_{i=1}^nx_i=1$, and $x_r\neq x_s$ for some distinct
$r,s\in[n]$. For
$i\in[n]$, we set
$\ket{e_i}:=\ket{0}^{\ox(i-1)}\ox\ket{1}\ox\ket{0}^{\ox(n-i)}$
and define 
\begin{eqnarray}
\label{eq:w-x-emergent-GME}
&&\ket{w_{\boldsymbol{x}}}
:=\sum_{i=1}^n\sqrt{x_i}\ket{e_i},
\\
\label{eq:rho-px-emergent-GME}
&&\rho_{p;\boldsymbol{x}}
:=p\proj{w_{\boldsymbol{x}}}+(1-p)\proj{1}^{\ox n},
\qquad 0<p\leq p_*(\boldsymbol{x}),
\end{eqnarray}
where
$
p_*(\boldsymbol{x})
:=\min_{1\leq i<j\leq n}
\frac{1-x_i-x_j}{1-x_i-x_j+x_ix_j}.
$
Then all 2-marginals of $\rho_{p;\boldsymbol{x}}$ are separable and
jointly certify its GME. Equivalently, every $n$-qubit state
2-compatible with $\rho_{p;\boldsymbol{x}}$ is GME. Consequently,
$\ell_{\rm GME}(\rho_{p;\boldsymbol{x}})=2$. If $n\geq4$, however,
$\rho_{p;\boldsymbol{x}}$ is not 2-UDA.
\end{proposition}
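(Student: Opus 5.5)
The plan has three independent parts: separability of the 2-marginals, GME of every 2-compatible state, and failure of 2-UDA for $n\geq4$. The middle part carries the weight of the statement.

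\emph{Separability.} For a pair $i<j$, tracing out the other $n-2$ qubits gives
$$
\rho_{A_iA_j}=p\bigl[(1-x_i-x_j)\proj{00}+\proj{\phi_{ij}}\bigr]+(1-p)\proj{11},
\qquad
\ket{\phi_{ij}}=\sqrt{x_i}\ket{10}+\sqrt{x_j}\ket{01}.
$$
For two qubits, PPT is equivalent to separability. Under partial transposition, the only potentially negative block is the $2\times2$ block on $\{\ket{00},\ket{11}\}$, with diagonal entries $p(1-x_i-x_j)$ and $1-p$ and off-diagonal entry $p\sqrt{x_ix_j}$. Its determinant is nonnegative exactly when $(1-p)(1-x_i-x_j)\geq p\,x_ix_j$, that is, when $p\leq(1-x_i-x_j)/(1-x_i-x_j+x_ix_j)$. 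Minimizing over pairs gives precisely $p_*(\boldsymbol{x})$, so every 2-marginal is separable.

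\emph{GME of every compatible state.} Let $\sigma$ be 2-compatible with $\rho_{p;\boldsymbol{x}}$, and write $\Pi^{ij}_{ab}$ for the projector onto $\ket{ab}$ on qubits $i,j$. The 2-marginals fix three quantities:
\begin{itemize}
\item $\tr(\sigma\Pi^{ij}_{10})=p x_i$ and $\tr(\sigma\Pi^{ij}_{01})=px_j$;
\item $\tr(\sigma\Pi^{ij}_{11})=1-p$;
\item the coherence $c_{ij}:=\tr\bigl(\sigma(\ket{01}\bra{10})_{ij}\bigr)=p\sqrt{x_ix_j}$.
\end{itemize}
The Cauchy--Schwarz bound $|c_{ij}|\leq\sqrt{\tr(\sigma\Pi^{ij}_{10})\tr(\sigma\Pi^{ij}_{01})}$ is therefore saturated for every pair. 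Now suppose $\sigma=\sum_k\lambda_k\tau_k$ is biseparable, with each $\tau_k$ product across some cut $S_k|S_k^c$. Coherence is linear in the state, and $\sum_k\lambda_k\sqrt{a_kb_k}\leq\sqrt{\sum\lambda_ka_k\sum\lambda_kb_k}$. Hence saturation must hold in every component, with the ratios $a_k/b_k$ independent of $k$. Take a pair $i\in S_k$, $j\in S_k^c$. Product structure gives $c_{ij}=\langle\sigma^-_i\rangle\langle\sigma^+_j\rangle$, while $\tr(\tau_k\Pi_{10})\tr(\tau_k\Pi_{01})=P_i(1)P_i(0)P_j(1)P_j(0)$. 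Equality therefore forces the single-qubit marginals of $\tau_k$ on $i$ and $j$ to be pure. Combining this over all cross pairs with the fixed pair populations $P(11)=1-p$ and $P(10)=px_i$ should yield a contradiction. I expect the mechanism to be the following: pure coherent single-qubit marginals make $P(11)P(00)=P(10)P(01)$ across the cut, which fails because $P(00)=p(1-x_i-x_j)$. The case where $\tau_k$ has vanishing weight on a pair is where I would use $x_r\neq x_s$ and the common-ratio condition. This step is the main obstacle. I would first try it for $n=3$, where $\rho$ is 2-UDA, so the claim reduces to GME of $\rho$ itself. I would then push the componentwise equality argument to general $n$.

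Once GME of every compatible state is established, $\ell_{\rm GME}=2$ follows. Indeed, $\rho$ is itself compatible, hence GME, and 1-marginals never certify GME, as noted before Proposition~\ref{pro:GME_detection_length_families}.

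\emph{Non-2-UDA for $n\geq4$.} I would construct a second state explicitly. Keep $\proj{1}^{\ox n}$ and the single-excitation coherences. Then redistribute a small weight $\delta$ among $\ket{0}^{\ox n}$, weight-two and weight-$(n-1)$ computational states, and reduce the weights of $\proj{1}^{\ox n}$ and the $\proj{e_i}$ populations accordingly. The weights are chosen to solve the linear system preserving every single population $px_i+1-p$ and every pair population $1-p$. For $n\geq4$ the number of available diagonal weights exceeds the number of these constraints, so a small positive solution exists. No such slack is available for $n=3$. I would verify positivity by keeping the coherent part $p\proj{w_{\boldsymbol{x}}}$ intact up to a small rescaling that the $\ket{0}^{\ox n}$ weight compensates at the level of 2-marginals.
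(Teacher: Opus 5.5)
\textbf{Assessment.} Your separability computation is correct and coincides with the paper's. The GME step and the non-2-UDA step both have genuine gaps.

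\textbf{GME of every compatible state.} This step is not proved, and the mechanism you sketch cannot close it.
\begin{itemize}
\item The identity $P(00)P(11)=P(10)P(01)$ holds automatically for every product component. It is not preserved under mixing: already $\lambda\proj{0}^{\ox n}+(1-\lambda)\proj{1}^{\ox n}$ with $0<\lambda<1$ violates it. So its failure for $\sigma$ is no contradiction.
\item More decisively, your plan uses $x_r\neq x_s$ only in a degenerate subcase, yet the conclusion is false when all $x_i=1/n$. For $0<p\leq p_*(\boldsymbol{x})$, each 2-marginal is then a PPT, hence separable, state supported on the symmetric two-qubit subspace. It therefore equals $\sum_k\lambda_k\proj{\phi_k}^{\ox 2}$, and $\sum_k\lambda_k\proj{\phi_k}^{\ox n}$ is a fully separable state 2-compatible with $\rho_{p;\boldsymbol{x}}$. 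Any correct argument must use the inhomogeneity of $\boldsymbol{x}$ in its main step.
\end{itemize}
Your componentwise saturation, with common ratio and aligned phases, is the right start. It says exactly that $\ket{\kappa_{ij}}=\sqrt{x_i}\ket{01}-\sqrt{x_j}\ket{10}$ lies in the kernel of every pair marginal. But it must be used for all pairs, not only cross pairs. The paper proceeds as follows.
\begin{itemize}
\item The kernel conditions confine $\cR(\sigma)$ to $D_{\boldsymbol{x}}\operatorname{Sym}^n(\bbC^2)$, where $D_{\boldsymbol{x}}=\bigotimes_i(\proj{0}+\sqrt{x_i}\proj{1})$.
\item A symmetric vector that is product across some cut is fully product. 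Hence every component of a biseparable decomposition has the form $\bigotimes_i(a\ket0+b\sqrt{x_i}\ket1)$.
\item This is a one-parameter family in $t=|b/a|^2$, with $\ket1$-weights $q_i(t)=tx_i/(1+tx_i)$.
\item Choose $x_r<x_s$ and $j\notin\{r,s\}$. Equal $\ket{11}$ populations $1-p$ for the pairs $(r,j)$ and $(s,j)$ give $\int q_j(q_s-q_r)\,d\mu=0$ for the mixing measure $\mu$. The integrand is strictly positive on $(0,\infty)$.
\item Hence $\sigma$ is a mixture of $\proj{0}^{\ox n}$ and $\proj{1}^{\ox n}$. It carries no pair coherence, contradicting $p\sqrt{x_ix_j}>0$.
\end{itemize}

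\textbf{Non-2-UDA for $n\geq4$.} Your construction fails because the parameter count ignores positivity.
\begin{itemize}
\item The pair coherences $p\sqrt{x_ix_j}$ must be reproduced, and diagonal weights contribute nothing to them. So $p\proj{w_{\boldsymbol{x}}}$ cannot be rescaled and compensated by $\proj{0}^{\ox n}$.
\item Suppose the coherences between $\ket{e_i}$ and $\ket{e_j}$ are kept while the $\ket{e_i}$ populations are lowered by some $\delta\geq0$, $\delta\neq0$. Then the single-excitation block $p\proj{w_{\boldsymbol{x}}}-\operatorname{diag}(\delta)$ is not positive semidefinite.
\item A diagonal perturbation that is negative only at $\ket1^{\ox n}$ and has vanishing 2-marginals must be zero. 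Its $(0,0)$ and $(0,1)$ pair populations force it to vanish off $\ket1^{\ox n}$, and tracelessness then kills the last entry.
\end{itemize}
The missing idea is an off-diagonal direction. Take $H=\ket{w_{\boldsymbol{x}}}\bra{1}^{\ox n}+\ket{1}^{\ox n}\bra{w_{\boldsymbol{x}}}$. Its 2-marginals vanish because $\ket{e_i}$ and $\ket1^{\ox n}$ differ on $n-1\geq3$ qubits, which is exactly where $n\geq4$ enters. Moreover, $\rho_{p;\boldsymbol{x}}+\varepsilon H\geq0$ whenever $\varepsilon^2\leq p(1-p)$.
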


The proof of Proposition~\ref{pro:GME-from-separable-marginals} is
presented in Appendix~\ref{app:proofs-GME-applications}.
When $n=3$, the assumption that the weights are not all equal allows
us to verify both conditions in
Proposition~\ref{pro:rank_two_2UDA_parameter}. The one-qubit
determinant test excludes GHZ-LU vectors from the range, while the
marginal equations leave only a $\ket{000}\bra{111}$ coherence and
its adjoint as possible marginal-preserving directions, which
positivity excludes.
Hence 
$\rho_{p;\boldsymbol{x}}$ is 2-UDA. Thus, in this case, the separable
2-marginals not only certify GME but determine the global state
uniquely. Since the 1-marginals admit the fully product completion
$\rho_{A_1}\ox\rho_{A_2}\ox\rho_{A_3}$, we consequently have
$\ell_{\rm GME}(\rho_{p;\boldsymbol{x}})=2$. For example, we may choose
$\boldsymbol{x}=(1/14,2/7,9/14)$, for which
$p_*(\boldsymbol{x})=7/25$.

\section{Conclusion and Further Perspectives}
\label{sec:conclusion}

In this work, we developed a range-based framework for analyzing UDA states. For three-qubit states, we obtained canonical inclusions for
three- and four-dimensional GHZ-SLOCC-free ranges and showed that states with such ranges are 2-UDA. We derived a necessary and sufficient range criterion for every rank-two state
and reduced it to a finite quadratic-form test. Without assuming a canonical range form, we also obtained an
exact range-restricted spectrahedral criterion that decides each
prescribed state in the remaining rank-three and rank-four regime. Besides, every three-qubit state of rank at least five is shown to be not 2-UDA. 
We next extended the underlying methods to multipartite systems. For multiqubit systems, the spin flip sharpened this
comparison and established the threshold $\rank\rho\geq2^n-n$ for
$n\geq6$. It also reduced the four- and five-qubit threshold cases to
two explicit kernel configurations. For a channel-based multipartite
family, we characterized exactly when $(n-1)$-marginals determine the
state and showed that marginals of every lower order fail. 
Finally, we applied the UDA criteria to GME certification. We determined exact GME
detection lengths for several generally nonsymmetric families and for
a coherent channel construction. We also constructed an $n$-qubit
rank-two family whose separable 2-marginals jointly certify GME. For
$n\geq4$, these marginals certify GME without uniquely determining the
global state. This separation shows that joint marginal compatibility
can certify global entanglement beyond the correlations visible in any
individual marginal.

Many problems arising from this work can be further
explored.
First,  an analytic classification of rank-three and
rank-four ranges that contain GHZ-SLOCC vectors but no GHZ-LU vector remains to be analyzed.
Second, the multipartite extensions also suggest further structural questions.
We may seek higher-dimensional canonical forms and sharper rank
obstructions. The two kernel configurations in
\eqref{eq:remaining_rank_threshold_cases} are the only threshold cases
for $n=4,5$ that are not covered by the present results. Their
resolution requires information beyond dimension counting and provides
concrete test problems for the structure of multiqubit subspaces.
Finally, the GME applications motivate a complementary robustness question.
The present conclusions concern exact marginal data, and the criteria
do not by themselves control numerical conditioning or
finite-sample performance. It would be useful to derive robust versions
of the range criteria, quantify GME certification under noisy
marginals, and determine when separable marginals retain their
certifying power under perturbations \cite{shi2025entangle}.

\section*{Acknowledgment}
This work was supported by  the National Natural Science Foundation of
China under Grant 12471427, the National University of Singapore Research under Grant A-8003570-00-00 and China Scholarship
Council Program under Grant 202506020134.
The authors used ChatGPT 5.6 Sol (OpenAI) to improve the proofs of
Propositions~\ref{pro:GHZ_free_subspace_T_C},
\ref{pro:rank_two_2UDA_parameter}, and~\ref{pro:support-reduced-SDP},
and Theorem~\ref{th:five_dim_contains_GHZ_LU}, and to assist with
language refinement. All mathematical statements and proofs were independently verified by the authors, who take responsibility for the final manuscript.

\appendix

\section{Proofs for Sec.~\ref{subsec:ghz-slocc-free-range}}
\label{app:proofs-ghz-slocc-free-range}

\begin{resultproof}{Lemma~\ref{le:noSLOCC_dim<4=1}}
(i) 
For any
$|\psi\rangle=\sum_{i,j,k\in\{0,1\}}a_{ijk}|ijk\rangle\in\cV$, we have
$\ket{\phi}=\ket{000}+t\ket{\psi}\in\cV$ for every $t\in\bbC$. Since
$\cV$ contains no GHZ-SLOCC states, Lemma~\ref{lem:relative-invariance} shows
$\operatorname{Det}(\ket{\phi})=0$ for every $t\in\bbC$. The coefficient
of $t^2$ is $a_{111}^2$, so it must vanish. Hence $\cV$ is orthogonal to
$\ket{111}$. The coefficient of $t^3$ is
$4a_{011}a_{101}a_{110}$, so it must also vanish. Let
$\ell_{011},\ell_{101},\ell_{110}\in\cV^*$ be the corresponding
coordinate linear forms. Their product is the zero polynomial on
$\cV$. Since the polynomial ring on $\cV$ is an integral domain, one
of these linear forms vanishes identically. Up to a system permutation, we may
assume that $\ell_{011}=0$. Hence $\cV$ is orthogonal to $\ket{011}$,
which proves (i).

(ii) Let $\ket{x_1,x_2,x_3}$ and $\ket{y_1,y_2,y_3}$ be two linearly
independent product vectors in $\cV$. Let $r$ be the number of subsystems
on which $\ket{x_i}$ and $\ket{y_i}$ are linearly independent. If $r=3$,
then, up to SLOCC equivalence, the two product vectors are $\ket{000}$ and
$\ket{111}$, contradicting (i). Hence $r=1$ or $r=2$. Up to SLOCC
equivalence and a system permutation, these two possibilities reduce to
\begin{eqnarray}
\label{eq:lem16_ii_two_cases}
\ket{000},\ket{001}\in \cV,\qquad \text{or}\qquad \ket{000},\ket{011}\in \cV .
\end{eqnarray}

We first consider the case $\ket{000},\ket{001}\in \cV$. By applying (i) to $\ket{000}$ and $\ket{001}$, respectively, we obtain that $\cV$ is orthogonal to $\ket{111}$ and $\ket{110}$. Hence
\begin{eqnarray}
\label{eq:lem16_ii_case_a_support}
\cV\subset \Span\{\ket{000},\ket{001}\}\oplus \cW,
\end{eqnarray}
where $\cW=\Span\{\ket{010},\ket{011},\ket{100},\ket{101}\}$. For any $u=\sum_{i,j,k}u_{ijk}\ket{ijk}\in \cV$, Lemma~\ref{lem:relative-invariance} and the assumption on $\cV$ imply that $\operatorname{Det}(u)=0$. Since $u_{110}=u_{111}=0$, this identity becomes
\begin{eqnarray}
\label{eq:lem16_ii_case_a_det}
0=\operatorname{Det}(u)=(u_{010}u_{101}-u_{011}u_{100})^2 .
\end{eqnarray}
Let $\cU$ be the image of $\cV$ under the projection onto $\cW$. By
\eqref{eq:lem16_ii_case_a_det}, every vector in $\cU$ has zero
determinant as a vector in the bipartite space
$\Span\{\ket{01},\ket{10}\}\ox\bbC^2$. Hence $\cU$ is a subspace of
$\bbC^2\ox\bbC^2$ containing only product vectors. By
Lemma~\ref{lem:biseparable-product-loci}(ii), such a subspace has a
fixed factor and dimension at most two. Since
$\Span\{\ket{000},\ket{001}\}\subset\cV$, we have
\begin{eqnarray}
\label{eq:lem16_ii_case_a_dim}
\dim \cV=2+\dim \cU\le 4 .
\end{eqnarray}

We next consider the case $\ket{000},\ket{011}\in \cV$. By (i), $\cV$
is orthogonal to $\ket{111}$ and at least one of $\ket{101}$ and
$\ket{110}$. Up to interchanging the second and third subsystems, assume
that $\cV$ is orthogonal to $\ket{101}$. Applying (i) to the product
vector $\ket{011}$ shows that $\cV$ is orthogonal to $\ket{100}$. Hence
\begin{eqnarray}
\label{eq:lem16_ii_case_b_support}
\cV\subset \Span\{\ket{000},\ket{001},\ket{010},\ket{011},\ket{110}\}.
\end{eqnarray}
For any $u\in \cV$, Lemma~\ref{lem:relative-invariance} again implies that $\operatorname{Det}(u)=0$, which reduces to
\begin{eqnarray}
\label{eq:lem16_ii_case_b_det}
0=\operatorname{Det}(u)=(u_{001}u_{110})^2 .
\end{eqnarray}
Since the polynomial ring on $\cV$ is an integral domain, one of the
two coordinate functions in \eqref{eq:lem16_ii_case_b_det} vanishes
identically on $\cV$. Thus the projection of $\cV$ onto
$\Span\{\ket{001},\ket{110}\}$ has dimension at most one. By
\eqref{eq:lem16_ii_case_b_support}, the remaining part is contained in
$\Span\{\ket{000},\ket{010},\ket{011}\}$, which is three-dimensional. Therefore $\dim \cV\le 3+1=4$.
Combining the two cases in \eqref{eq:lem16_ii_two_cases}, we have $\dim \cV\le 4$.

(iii) Up to SLOCC equivalence and system permutation, it suffices to consider the two cases in \eqref{eq:lem16_ii_two_cases}.
We first consider the case $\ket{000},\ket{001}\in \cV$. Let $\cW$ and $\cU$ be defined as in the proof of (ii). Since $\Span\{\ket{000},\ket{001}\}\subset \cV$ and $\dim \cV=4$, the equality in \eqref{eq:lem16_ii_case_a_dim} yields $\dim \cU=2$. Moreover, since $\Span\{\ket{000},\ket{001}\}\subset \cV$, every vector in $\cU$ also belongs to $\cV$ after subtracting its $\Span\{\ket{000},\ket{001}\}$ part. Hence
\begin{eqnarray}
\label{eq:lem16_iii_case_a_decomp}
\cV=\Span\{\ket{000},\ket{001}\}\oplus \cU .
\end{eqnarray}
The space $\cU$ is a two-dimensional subspace of
$\bbC^2\ox\bbC^2$ containing only product vectors. By
Lemma~\ref{lem:biseparable-product-loci}(ii), all vectors in $\cU$
have one fixed local factor. Therefore one of the following holds,
\begin{eqnarray}
\label{eq:lem16_iii_U_forms}
&&\cU=\Span\{\a\ket{01}+\b\ket{10}\}\ox \bbC^2, \quad (\a,\b)\neq(0,0)
\\
\label{eq:lem16_iii_U_forms_2}
&&\cU=\Span\{\ket{01},\ket{10}\}\ox \Span\{\ket{\phi}\}, \quad \ket{\phi}\neq 0.
\end{eqnarray}

In the case \eqref{eq:lem16_iii_U_forms}, we have
\begin{eqnarray}
\label{eq:lem16_iii_first_subcase}
\cV=\Span\{\ket{00},\a\ket{01}+\b\ket{10}\}\ox \bbC^2 .
\end{eqnarray}
If $\a\b=0$, then \eqref{eq:lem16_iii_first_subcase} is, up to system permutation, 
$\Span\{\ket{000},\ket{001},\ket{100},\ket{101}\}$. If $\a\b\neq 0$, then local diagonal invertible operators on the first two systems transform $\a\ket{01}+\b\ket{10}$ into $\ket{01}+\ket{10}$. Hence, after moving the free qubit to the first system, $\cV$ is SLOCC equivalent to
\begin{eqnarray}
\label{eq:lem16_iii_C_type}
\bbC^2\ox \Span\{\ket{00},\ket{01}+\ket{10}\}.
\end{eqnarray}
In the case \eqref{eq:lem16_iii_U_forms_2}, a local invertible operator on the third system maps $\ket{\phi}$ to $\ket{0}$. Hence \eqref{eq:lem16_iii_case_a_decomp} becomes
\begin{eqnarray}
\label{eq:lem16_iii_T_type}
\Span\{\ket{000},\ket{001},\ket{010},\ket{100}\}.
\end{eqnarray}

We next consider the case $\ket{000},\ket{011}\in \cV$. By the proof of (ii), up to interchanging the second and third systems, $\cV$ is contained in
$\Span\{\ket{000},\ket{001},\ket{010},\ket{011},\ket{110}\}$, and either $\cV$ is orthogonal to $\ket{001}$ or $\ket{110}$. Since $\dim \cV=4$, $\cV$ is equal to one of the following two four-dimensional coordinate subspaces,
\begin{eqnarray}
\label{eq:lem16_iii_case_b_two_forms}
&&\Span\{\ket{000},\ket{010},\ket{011},\ket{110}\},
\\
&&\Span\{\ket{000},\ket{001},\ket{010},\ket{011}\}.
\end{eqnarray}
A bit flip on the second system maps the first subspace in \eqref{eq:lem16_iii_case_b_two_forms} to
$\Span\{\ket{000},\ket{001},\ket{010},\ket{100}\}$, and interchanging the first and second systems maps the second subspace to
\begin{eqnarray}
\label{eq:span_000,001,100,101}
\Span\{\ket{000},\ket{001},\ket{100},\ket{101}\}.    
\end{eqnarray}
Combining the above cases, up to SLOCC equivalence and system permutation, $\cV$ is one of \eqref{eq:lem16_iii_C_type}, \eqref{eq:lem16_iii_T_type} and \eqref{eq:span_000,001,100,101}.
This proves (iii).
\end{resultproof}

\begin{resultproof}{Lemma~\ref{le:noSLOCC_dim<4}(i)}
We prove the claim by contradiction. Suppose that $\dim \cV\ge 5$.
Then $\cV$ contains a five-dimensional subspace. By
Lemma~\ref{lem:product-vectors-from-dimension}, this subspace, and
hence $\cV$, contains a fully product vector. Up to SLOCC equivalence,
we may assume that this product vector is $\ket{000}$.

Let $W=\Span\{\ket{001},\ket{010},\ket{100},\ket{101},\ket{110}\}$. By Lemma \ref{le:noSLOCC_dim<4=1} (i), $\cV$ is orthogonal to both $\ket{011}$ and
$\ket{111}$. Hence $\cV$ is contained in the six-dimensional subspace
$\Span\{\ket{000}\}\oplus W$. The part of $\cV$ lying in
$W$ has dimension at least $\dim\cV-1\ge 4$. Equivalently,
$U:=\cV\cap W$ is a subspace of $W$ with $\dim U\ge 4$.

Now consider the two-dimensional subspace
$L=\Span\{\ket{100},\ket{101}\}$ of $W$. Since both $U$ and $L$ are
subspaces of the five-dimensional space $W$, the dimension formula yields
$\dim(U\cap L)\ge \dim U+\dim L-\dim W\ge 4+2-5=1$. Thus $U\cap L$
contains a nonzero vector. Therefore there exist a product vector $v=a\ket{100}+b\ket{101}\in U\cap L\subseteq \cV$, for $a,b\in \bbC$.
Moreover, $v$ is
linearly independent of $\ket{000}$. Hence $\cV$ contains two linearly independent
product vectors, namely $\ket{000}$ and $v$. This contradicts Lemma
\ref{le:noSLOCC_dim<4=1} (ii). Therefore the assumption $\dim\cV\ge 5$
is impossible, and the claim follows.
\end{resultproof}

To prove Lemma~\ref{le:noSLOCC_dim<4_product}(ii), we first establish
the following auxiliary fact.
\begin{lemma}
\label{le:bilinear_form_dim<=2}
Let $\cV=\bbC^4$,  $(x,y,z,w)\in \cV$ and nonzero $\alpha,\beta\in\bbC$. 
If $\cW\subset \cV$ is a linear subspace such that the quadratic form $q(x,y,z,w)=(\beta x-\alpha w)^2+4\alpha\beta yz$ vanishes identically on $\cW$, i.e., 
$q|_{\cW}\equiv 0$, then $\dim \cW\leq 2$. 
\end{lemma}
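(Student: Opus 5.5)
The approach is to show that $q$ is a nondegenerate quadratic form on $\bbC^4$ and then apply the standard bound on totally isotropic subspaces. Over $\bbC$, a nondegenerate quadratic form on an $N$-dimensional space admits totally isotropic subspaces of dimension at most $\lfloor N/2\rfloor$. For $N=4$ this bound is exactly $2$.

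\emph{Step 1: nondegeneracy.} Expand the form as
$q=\beta^2x^2-2\alpha\beta xw+\alpha^2w^2+4\alpha\beta yz$. Its Gram matrix is block diagonal with respect to the splitting into $\{x,w\}$ and $\{y,z\}$:
$$
\bma \beta^2 & -\alpha\beta\\ -\alpha\beta & \alpha^2\ema
\quad\text{and}\quad
\bma 0 & 2\alpha\beta\\ 2\alpha\beta & 0\ema .
$$
The first block has determinant $\alpha^2\beta^2-\alpha^2\beta^2=0$, so $q$ is degenerate. Its radical is spanned by the vector $(\alpha,0,0,\beta)$ in the $(x,w)$-plane, i.e.\ $x=\alpha$, $w=\beta$, which satisfies $\beta x-\alpha w=0$. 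The plan therefore has to be adjusted to handle this one-dimensional radical.

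\emph{Step 2: quotient by the radical.} Let $R=\Span\{(\alpha,0,0,\beta)\}$. Then $q$ descends to a nondegenerate quadratic form $\bar q$ on $\bbC^4/R\cong\bbC^3$. In coordinates $u=\beta x-\alpha w$, $y$, $z$ we have $\bar q=u^2+4\alpha\beta yz$, which is nondegenerate because $\alpha\beta\neq0$. If $q|_{\cW}\equiv0$, then the image $(\cW+R)/R$ is totally isotropic for $\bar q$, so its dimension is at most $\lfloor 3/2\rfloor=1$. Hence $\dim\cW\le\dim(\cW+R)\le 1+1=2$.

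\emph{Main obstacle.} The only real content is the isotropic bound in dimension three, and I would prove it directly rather than cite it. Suppose a $2$-dimensional subspace of $\bbC^3$ were totally isotropic for $\bar q$. The polarization of $\bar q$ vanishes on it, so the subspace lies in its own orthogonal complement. Nondegeneracy forces that complement to have dimension $3-2=1$, which is a contradiction. This completes the plan.
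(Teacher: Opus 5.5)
Your proof is correct and takes the same route as the paper. Both identify the one-dimensional radical spanned by $(\alpha,0,0,\beta)$ (the paper via the coordinates $p=\beta x+\alpha w$, $r=\beta x-\alpha w$, you via the Gram matrix), pass to the nondegenerate form $u^2+4\alpha\beta yz$ on the three-dimensional quotient, bound the isotropic image by $1$ using $\overline{W}\subseteq\overline{W}^{\perp}$, and add back at most one dimension from the radical; the only blemish is your opening claim that $q$ is nondegenerate, which you correct yourself in Step 1.
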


\begin{resultproof}{Lemma~\ref{le:bilinear_form_dim<=2}}
Let $r=\beta x-\alpha w$ and $p=\beta x+\alpha w$. Since $\alpha\beta\neq 0$, the variables $(p,r,y,z)$ form a linear coordinate system on $\cV$. In these coordinates, the quadratic form is
\begin{eqnarray}
\label{eq:q_pryz}
q&=&r^2+4\alpha\beta yz .
\end{eqnarray}
Let $B_q$ be the symmetric bilinear form associated with $q$, namely $B_q(u,v)=\frac{1}{2}[q(u+v)-q(u)-q(v)]$. For $u=(p,r,y,z)$ and $v=(p',r',y',z')$, by \eqref{eq:q_pryz} one has
\begin{eqnarray}
\label{eq:Bq_pryz}
B_q(u,v)&=&rr'+2\alpha\beta(yz'+y'z).
\end{eqnarray}
It follows from \eqref{eq:Bq_pryz} that the radical of $q$ is
\begin{eqnarray}
\label{eq:rad_q}
\cD_q:=\operatorname{rad}(q)&=&\{(p,r,y,z):r=y=z=0\}.
\end{eqnarray}
Thus $\cD_q$ is one-dimensional. We note that the variable $p$ does not occur in \eqref{eq:q_pryz}. Therefore, the $p$-axis is precisely the radical direction.

Let $\overline{\cV}=\cV/\cD_q$ be the quotient space. Since $\cD_q$ is the radical, $q$ descends to a quadratic form $\overline q$ on $\overline{\cV}$, defined by $\overline q(v+\cD_q)=q(v)$. This is well-defined because, for any $h\in \cD_q$, one has $q(v+h)=q(v)+2B_q(v,h)+q(h)=q(v)$. In the quotient coordinates $(r,y,z)$, the descended quadratic form is
\begin{eqnarray}
\label{eq:qbar_ryz}
\overline q(r,y,z)&=&r^2+4\alpha\beta yz .
\end{eqnarray}
Its associated bilinear form has matrix
$
\bma
1 & 0 & 0\\
0 & 0 & 2\alpha\beta\\
0 & 2\alpha\beta & 0
\ema.
$
The determinant of this matrix is $-4\alpha^2\beta^2\neq 0$. Hence $\overline q$ is nondegenerate on the three-dimensional vector space $\overline{\cV}$.

Suppose $\cW\subset \cV$ is a linear subspace with $q|_{\cW}\equiv 0$. We denote by
\begin{eqnarray}
\label{eq:Wbar_def}
\overline{\cW}&=&(\cW+\cD_q)/\cD_q\subset \overline{\cV}
\end{eqnarray}
the image of $\cW$ in the quotient space $\overline{\cV}$. Since $q$ vanishes identically on $\cW$, the descended quadratic form $\overline q$ vanishes identically on $\overline{\cW}$. Hence $\overline q|_{\overline{\cW}}\equiv 0$. 
Let $\overline B$ be the associated bilinear form of $\overline q$. For any $\overline u,\overline v\in\overline{\cW}$, since $\overline{\cW}$ is a linear subspace and $\overline q$ vanishes identically on it, one has $\overline q(\overline u)=\overline q(\overline v)=\overline q(\overline u+\overline v)=0$. Therefore $\overline B(\overline u,\overline v)=0$. Thus $\overline{\cW}\subset \overline{\cW}^\perp$ with respect to the nondegenerate bilinear form $\overline B$. 
Since $\overline B$ is nondegenerate on $\overline{\cV}$ and
$\dim\overline{\cV}=3$, we have
$\dim\overline{\cW}+\dim\overline{\cW}^\perp=3$. The inclusion
$\overline{\cW}\subset\overline{\cW}^\perp$ implies
$\dim\overline{\cW}\leq\dim\overline{\cW}^\perp$, and hence
$2\dim\overline{\cW}\leq3$. Therefore $\dim\overline{\cW}\leq1$.

Finally, by \eqref{eq:Wbar_def}, the natural map $\cW\to \overline{\cW}$ has kernel $\cW\cap \cD_q$. Hence, $\dim \cW=\dim(\cW\cap \cD_q)+\dim\overline{\cW}$. Since $\dim \cD_q=1$, we have $\dim(\cW\cap \cD_q)\leq 1$. From $\dim\overline{\cW}\leq 1$, we have $\dim \cW\leq 2$.
\end{resultproof}

\begin{resultproof}{Lemma~\ref{le:noSLOCC_dim<4_product}(ii)}
The proof is carried out separately for the two cases (i) $\dim \cV = 3$ and (ii) $\dim \cV = 4$.

(i) Suppose $\dim \cV=3$. We prove by contradiction. Suppose that $\cV$ contains no nonzero fully product vector.
First, we show that $\cV$ contains a state in the W SLOCC class.
We write $\bbP(\cV)=(\cV\setminus\{0\})/\bbC^\times$.
If $\bbP(\cV)$ contains no state in the W SLOCC class, then every nonzero vector in $\cV$ is
biseparable with respect to at least one of the three bipartitions by
the three-qubit SLOCC classification
\cite{dur2000three,miyake2003classification}. We have
$$
\bbP(\cV)\subseteq \Sigma_{A:BC}\cup \Sigma_{B:AC}\cup \Sigma_{C:AB},
$$
where, for example, $\Sigma_{A:BC}$ denotes the projective variety of states separable across the bipartition $A:BC$.
By Lemma~\ref{lem:biseparable-product-loci}(i), $\bbP(\cV)$ is
contained in one of these three varieties. Up to system permutation, we assume that
$\bbP(\cV)\subseteq \Sigma_{A:BC}$. 

By Lemma~\ref{lem:biseparable-product-loci}(ii), we have
$\cV=\ket{a_0}_A\ox \cW$ or
$\cV=\cU_A\ox \ket{\eta}_{BC}$.
The latter one is impossible since $\dim\cV=3>\dim\cH_A=2$. Hence, 
\begin{eqnarray}
\label{eq:V_fixed_A_factor}
\cV=\ket{a_0}_A\ox \cW,
\end{eqnarray}
where $\cW\subseteq \cH_{BC}\simeq \bbC^2\ox \bbC^2$ is a three-dimensional subspace. 
By Lemma~\ref{lem:product-vectors-from-dimension}, the space $\cW$
contains a two-qubit product vector. Eq.
\eqref{eq:V_fixed_A_factor} then shows
a fully product vector in $\cV$, a contradiction. Therefore
$\cV$ contains a state in the W SLOCC class.

Both the condition $\operatorname{Det}|_{\cV}=0$ and the property of containing no nonzero fully product vector are invariant under invertible local transformations. Thus, after applying a SLOCC transformation to the whole subspace, we may assume that
\begin{eqnarray}
\label{eq:standard_W_in_V}
\ket{W}:=\ket{001}+\ket{010}+\ket{100}\in \cV.
\end{eqnarray}

Let $x=\sum_{i,j,k=0}^1 x_{ijk}\ket{ijk}\in\cV$. Since $\cV$ is
linear, $\ket{W}+tx\in\cV$ for every $t\in\bbC$. Hence
$\operatorname{Det}(\ket{W}+tx)\equiv 0$.
Therefore the coefficient of $t$ shows $x_{111}=0$. Substituting this into the coefficient of $t^2$, we obtain
\begin{eqnarray}
\label{eq:h_and_q_zero}
q(x_{011},x_{101},x_{110}):=x_{011}^2-2x_{011}(x_{101}+x_{110})+x_{101}^2-2x_{101}x_{110}+x_{110}^2=0.
\end{eqnarray}
Let $\cW_1:=\Span\{\ket{011},\ket{101},\ket{110}\}$. We define the projection
\begin{eqnarray}
\label{eq:pi_projection_U}
\pi:\cV\longrightarrow \cW_1,\quad 
\pi(x)=x_{011}\ket{011}+x_{101}\ket{101}+x_{110}\ket{110}.
\end{eqnarray}
Since $\pi$ is linear, $\pi(\cV)$ is a linear subspace of $\cW_1$. Because
\eqref{eq:h_and_q_zero} holds for every $x\in\cV$, the quadratic form $q$
vanishes identically on $\pi(\cV)$.

The quadratic form $q(x_{011},x_{101},x_{110})$ has symmetric matrix
\begin{eqnarray}
\label{eq:q_dfg_matrix}
M_q=
\bma
1&-1&-1\\
-1&1&-1\\
-1&-1&1
\ema,
\quad \det M_q=-4\neq 0.
\end{eqnarray} 
For $u,v\in\pi(\cV)$ and $a,b\in\bbC$, the vector
$au+bv$ also belongs to $\pi(\cV)$. Hence,
$$
0=q(au+bv)=a^2q(u)+b^2q(v)+2ab\,u^TM_qv
=2ab\,u^TM_qv.
$$
Thus the symmetric bilinear form $B(u,v):=u^TM_qv$ vanishes on
$\pi(\cV)\times\pi(\cV)$, so
$\pi(\cV)\subseteq\pi(\cV)^\perp$. Since $B$ is nondegenerate by
$\det M_q\neq0$, we have
$\dim\pi(\cV)+\dim\pi(\cV)^\perp=3$. Therefore,
\begin{eqnarray}
\label{eq:dim_pi_leq_one}
\dim \pi(\cV)\leq 1.
\end{eqnarray}

If $\dim\pi(\cV)=0$, then
$\cV\subseteq\cT$ in \eqref{eq:cT}.
Inside $\cT$, we have the two-dimensional product subspace
$\cP_0:=\Span\{\ket{000},\ket{100}\}=\bbC^2\ox \ket{00}$.
Since $\dim\cV=3$, $\dim \cP_0=2$, and $\dim\cT=4$, we have
\begin{eqnarray}
\label{eq:V_intersect_P0}
\dim(\cV\cap \cP_0)&=&\dim \cV +\dim \cP_0-\dim(\cV+\cP_0)
\\
&\ge& \dim \cV +\dim \cP_0-\dim \cT
\\
&=& 1.
\end{eqnarray}
Thus $\cV$ contains a nonzero fully product vector, which is a contradiction. Hence,
\begin{eqnarray}
\label{eq:dim_pi_equal_one}
\dim\pi(\cV)=1.
\end{eqnarray}
Let
$\cV_0:=\cV\cap\cT=\ker\pi$.
By \eqref{eq:dim_pi_equal_one}, we have $\dim \cV_0=\dim\cV-\dim \pi(\cV)=2$. We choose $w\in \cV\setminus\cT$ and write
$w=A\ket{000}+B\ket{001}+C\ket{010}+D\ket{011}
+E\ket{100}+F\ket{101}+G\ket{110}$,
where $(D,F,G)\neq (0,0,0)$. For any
\begin{eqnarray}
\label{eq:u_in_V0}
u=a\ket{000}+b\ket{001}+c\ket{010}+e\ket{100}\in \cV_0,
\end{eqnarray}
we have $u+tw\in\cV$ for every $t\in\bbC$. The coefficient of $t^2$ in $\operatorname{Det}(u+tw)$ is
\begin{eqnarray}
\label{eq:Q_DFG_zero}
Q_{D,F,G}(b,c,e):=(De-Fc-Gb)^2-4FGbc=0,\quad \forall u\in \cV_0.
\end{eqnarray}

We consider the projection $\varpi:\cV_0\to \bbC^3$ defined by $\varpi(u)=(b,c,e)$. If $\dim\varpi(\cV_0)\leq 1$, then $\ker\varpi$ contains a nonzero vector. We thus obtain a nonzero vector of the form $a\ket{000}$ in $\cV_0$, which is fully product. Hence
\begin{eqnarray}
\label{eq:varpi_V0_dim_two}
\dim\varpi(\cV_0)=2.
\end{eqnarray}
The quadratic form $Q_{D,F,G}$ vanishes on a two-dimensional linear subspace of the $(b,c,e)$-space. With respect to the coordinates $(e,c,b)$, the symmetric matrix of $Q_{D,F,G}$ is
\begin{eqnarray}
\label{eq:M_DFG_matrix}
M(D,F,G)=
\bma
D^2&-DF&-DG\\
-DF&F^2&-FG\\
-DG&-FG&G^2
\ema.
\end{eqnarray}
Then $\det M(D,F,G)=-4D^2F^2G^2$. 
If $DFG\neq 0$, then $Q_{D,F,G}=0$ defines a smooth conic in $\bbP^2$,
which cannot contain a projective line \cite{harris1992algebraic}. This contradicts
\eqref{eq:varpi_V0_dim_two}, because $\bbP(\varpi(\cV_0))\subset\bbP^2$ is a
projective line and $Q_{D,F,G}\equiv0$ on $\varpi(\cV_0)$ implies
$\bbP(\varpi(\cV_0))\subset\{Q_{D,F,G}=0\}$. Hence
\begin{eqnarray}
\label{eq:DFG_zero}
DFG=0.
\end{eqnarray}
On the other hand, applying \eqref{eq:h_and_q_zero} to $w$ yields
\begin{eqnarray}
\label{eq:q_DFG_zero}
D^2-2D(F+G)+F^2-2FG+G^2=0.
\end{eqnarray}
Combining \eqref{eq:DFG_zero} and \eqref{eq:q_DFG_zero}, and using a system permutation if necessary, we may assume that $G=0$ and $D=F\neq 0$. Rescaling $w$, we set
\begin{eqnarray}
\label{eq:DF_normal_form}
G=0,\quad D=F=1.
\end{eqnarray}

Then \eqref{eq:Q_DFG_zero} becomes $Q_{1,1,0}(b,c,e)=(e-c)^2=0$. Therefore every $u\in \cV_0$ in \eqref{eq:u_in_V0} satisfies $e=c$, and hence
\begin{eqnarray}
\label{eq:V0_subset_final}
\cV_0\subset \cX_0:=\Span\{\ket{000},\ket{001},\ket{010}+\ket{100}\}.
\end{eqnarray}
The three-dimensional space $\cX_0$ contains the two-dimensional product subspace
$\cP_1:=\Span\{\ket{000},\ket{001}\}$.
Since $\dim \cV_0=2$, $\dim \cP_1=2$, and $\dim \cX_0=3$, we have
\begin{eqnarray}
\label{eq:V0_intersect_P1_simplified}
\dim(\cV_0\cap \cP_1)\geq 2+2-3=1.
\end{eqnarray}
Thus $\cV_0$ contains a nonzero fully product vector, which is a contradiction. Therefore $\cV$ must contain a nonzero fully product vector. 

(ii) Suppose $\dim\cV=4$. Consider $\cV$ across the bipartition $A:BC$.
Then $\cV\subseteq\cH_A\ox\cH_{BC}$. By
Lemma~\ref{lem:product-vectors-from-dimension}, $\cV$ contains a
product vector
$\ket{\phi}_A\ox\ket{\eta}_{BC}\in\cV$. If $\ket{\eta}_{BC}$ is a
product vector, then $\ket{\phi}_A\ox\ket{\eta}_{BC}$ is fully product.
Otherwise, $\ket{\eta}_{BC}$ is entangled. Up to LU equivalence, we may
write
\begin{eqnarray}
\label{eq:s=ket0_ox_a00+b11}
s=\ket{0}_A\ox (\a\ket{00}+\b\ket{11}), \quad \a\b\neq 0.
\end{eqnarray}
Write $\cV=\Span\{s\}\oplus\cU$, where $\dim\cU=3$. For any
\begin{eqnarray}
 \label{eq:u in cU}
u=\sum_{i,j,k}u_{i,j,k}\ket{ijk}\in \cU,   
\end{eqnarray}
the set $\{s+tu:t\in\bbC\}\subset\cV$ contains no GHZ-SLOCC states.
Thus, by Lemma~\ref{lem:relative-invariance},
$\operatorname{Det}(s+tu)\equiv0$ for every $t\in\bbC$. Consequently,
the coefficient of $t^2$ vanishes, and we obtain
\begin{eqnarray}
 \label{eq:alphaU111-betaU100}
(\a u_{111}-\b u_{100})^2+4\a\b u_{101} u_{110}=0, \quad \forall u\in \cU.
\end{eqnarray} 
For $t=0,1$, define the projection
$$
\pi_t:\cU\ra\cH_B\ox\cH_C,\qquad
\sum_{i,j,k}u_{i,j,k}\ket{i,j,k}
\longmapsto \sum_{j,k}u_{t,j,k}\ket{j,k}.
$$
Condition \eqref{eq:alphaU111-betaU100} depends only on $\pi_1(u)$. By
Lemma \ref{le:bilinear_form_dim<=2},
$\dim\pi_1(\cU)\le2$. From $\dim\cU=3$, we have 
$$
\dim\ker\pi_1=\dim\cU-\dim\pi_1(\cU)\ge1.
$$
Therefore, there is a nonzero 
$u=\sum_{j,k}u_{0,j,k}\ket{0,j,k} \in \cU$  
such that $\pi_1(u)=0$. By \eqref{eq:s=ket0_ox_a00+b11}, the two-dimensional space
$$
\cV_1:=\Span\{s,u\}
=\ket{0}_A\ox
\Span\Big\{\a\ket{00}+\b\ket{11},
\sum_{j,k}u_{0,j,k}\ket{j,k}\Big\}_{BC}
$$
contains a product vector $\ket{x}_B\ox\ket{y}_C$ in its two-qubit
factor by Lemma~\ref{lem:product-vectors-from-dimension}. Hence
$\ket{0}_A\ox\ket{x}_B\ox\ket{y}_C\in
\cV_1\subseteq\cV$ is fully product. This
completes the proof.
\end{resultproof}

\begin{resultproof}{Proposition~\ref{pro:GHZ_free_subspace_T_C}}
A direct evaluation of the Cayley hyperdeterminant shows that every
vector in $\cT$ or $\cC_\theta$ has zero hyperdeterminant. Hence, by
Lemma~\ref{lem:relative-invariance}, these
canonical subspaces contain no GHZ-SLOCC states.

(i) Suppose that $\cV$ contains a fully product vector. Up to LU
equivalence, we may take this vector to be $\ket{000}$. By
Lemma~\ref{le:noSLOCC_dim<4=1}(i), after a system permutation if
necessary, $\cV$ is orthogonal to $\ket{011}$ and $\ket{111}$. Thus
$$
\cV\subseteq\cU_0\oplus\cW_0,
\qquad
\cU_0:=\cH_A\ox\Span\{\ket{00}\},
\qquad
\cW_0:=\cH_A\ox\Span\{\ket{01},\ket{10}\}.
$$
For every $u=\sum_{i,j,k}u_{ijk}\ket{ijk}\in\cV$,
Lemma~\ref{lem:relative-invariance} and the assumption on $\cV$ imply
$\operatorname{Det}(u)=0$. Under the above coordinate restriction, this
identity reduces to
$
0=\operatorname{Det}(u)
=
\bigl(u_{001}u_{110}-u_{010}u_{101}\bigr)^2.
$
Consequently, under the natural identification
$\cW_0\simeq\bbC^2\ox\bbC^2$, the projection of $\cV$ onto
$\cW_0$ is a linear subspace consisting only of product vectors.
By Lemma~\ref{lem:biseparable-product-loci}(ii), such a subspace has
a fixed factor on one side.

There are therefore two cases. In the first case, for some nonzero
$\ket\phi\in\cH_A$,
$$
\cV\subseteq
\cU_0\oplus
\left(
\Span\{\ket\phi\}\ox\Span\{\ket{01},\ket{10}\}
\right),
$$
which is LU equivalent to a subspace of $\cT$. In the second case, for
some nonzero
$\ket\varphi\in\Span\{\ket{01},\ket{10}\}$,
$$
\cV\subseteq
\cH_A\ox
\Span\{\ket{00},\ket\varphi\}.
$$
Local phase transformations and, if necessary, an interchange of the
second and third systems bring
$\ket\varphi$ to
$\cos\theta\ket{01}+\sin\theta\ket{10}$ with
$0\leq\theta\leq\pi/4$. Hence $\cV$ is LU equivalent to a subspace
of $\cC_\theta$. 

(ii) If $\dim\cV=3$ or $4$, then
Lemma~\ref{le:noSLOCC_dim<4_product}(ii) guarantees that $\cV$
contains a fully product vector. The assertion therefore follows from
(i). When $\dim\cV=4$, both $\cT$ and $\cC_\theta$ are
four-dimensional, so the corresponding inclusion is an equality.
\end{resultproof}

\begin{resultproof}{Proposition~\ref{pro:T_Ct_2UDA}}
Let $\sigma$ be a three-qubit state that is 2-compatible with $\rho$.
We show that $\sigma=\rho$.

We first consider $\cR(\rho)\subseteq\cT$. The vector $\ket{11}$
belongs to the kernel of every 2-marginal of $\rho$. The same holds
for $\sigma$. We set
$
P_A=I_A\ox\ket{11}\bra{11}_{BC}$,
$P_B=\ket{1}\bra{1}_A\ox I_B\ox\ket{1}\bra{1}_C$, and
$P_C=\ket{11}\bra{11}_{AB}\ox I_C.$ 
Positivity and $\tr(P_X\sigma)=0$ for $X=A,B,C$ imply
$\cR(\sigma)\subseteq\ker P_X$, and hence
$$
\cR(\sigma)\subseteq\bigcap_{X=A,B,C}\ker P_X
=\Span\{\ket{000},\ket{001},\ket{010},\ket{100}\}=\cT.
$$
We set $H=\sigma-\rho$. Then $H\in\cN_2$ and $\cR(H)\subseteq\cT$.
With $e_0=\ket{000}$, $e_1=\ket{001}$, $e_2=\ket{010}$, and
$e_3=\ket{100}$, we write
$H=\sum_{i,j=0}^3h_{ij}\ket{e_i}\bra{e_j}$. Comparing the matrix
entries in $\tr_A(H)=\tr_B(H)=\tr_C(H)=0$, we have 
$$
h_{ij}=0\quad\text{for }(i,j)\neq(0,0),\qquad
h_{00}+h_{kk}=0\quad\text{for }k=1,2,3.
$$
Thus $h_{00}=0$, so $H=0$.

We next consider $\cR(\rho)\subseteq\cC_\t$. We write
$\cC_\t=\cH_A\ox\cU_{BC}(\t)$, where
$$
\cU_{BC}(\t)
=\Span\{\ket{00},\cos\t\ket{01}+\sin\t\ket{10}\}.
$$
Since $\sigma_{BC}=\rho_{BC}$, the ranges of both marginals are
contained in $\cU_{BC}(\t)$. From $\sigma\ge 0$, we have 
$\cR(\sigma)\subseteq\cH_A\ox\cU_{BC}(\t)$. We again set
$H=\sigma-\rho$ and write
$H=\sum_{i,j=0}^1\ket{i}\bra{j}_A\ox K_{ij}$, where
$K_{ij}:\cU_{BC}(\t)\ra\cU_{BC}(\t)$ is linear. The condition
$\tr_B(H)=0$ implies
$\tr_B(K_{ij})=0$ for every $i,j$.

We choose $\ket{\ps_0}=\ket{00}$ and
$\ket{\ps_1}=\cos\t\ket{01}+\sin\t\ket{10}$. Every
linear operator $K:\cU_{BC}(\t)\ra\cU_{BC}(\t)$ has the form
$K=a\proj{\ps_0}+b\proj{\ps_1}
+c\ket{\ps_0}\bra{\ps_1}+d\ket{\ps_1}\bra{\ps_0}$, and
$$
\tr_B(K)=
\bma
a+b\sin^2\t&c\cos\t\\
d\cos\t&b\cos^2\t
\ema_C.
$$
Since $\cos\t>0$, the equation $\tr_B(K)=0$ implies
$a=b=c=d=0$. Thus every $K_{ij}=0$, and hence $H=0$. In both cases,
$\sigma=\rho$, which proves that $\rho$ is 2-UDA.
\end{resultproof}

\begin{resultproof}{Theorem~\ref{th:GHZ_free}}
Since the three-qubit subspace $\cR(\r)$ contains no GHZ-SLOCC states,
Lemma~\ref{le:noSLOCC_dim<4}(i) implies
$\dim\cR(\r)\le4$, i.e., $\rank(\r)\le4$.

The claim holds if $\rank(\r)=1$, i.e. $\r=\proj{\ps}$. 
It is known that a three-qubit pure state $\ket{\ps}$ is 2-UDA if and only
if it is not LU equivalent to a GHZ-type state
$a\ket{000}+b\ket{111}$ with $ab\neq0$.
  If $\cR(\r)$ has no GHZ-SLOCC state, then $\ket{\ps}$ is not a GHZ-SLOCC state, and hence it is not LU equivalent to  a GHZ type state.   Therefore, $\r$ is 2-UDA. 

Next, assume that $\rank(\r)=3$ or $4$. By
Proposition~\ref{pro:GHZ_free_subspace_T_C}(ii), up to LU equivalence
and system permutation, $\cR(\r)$ is a subspace of $\cT$ in
\eqref{eq:cT} or $\cC_\t$ in \eqref{eq:cC_t}. By
Proposition~\ref{pro:T_Ct_2UDA}, $\r$ is 2-UDA.
\end{resultproof}

\section{Proofs for Sec.~\ref{subsec:rank-two-three-qubit-states}}
\label{app:proofs-rank-two-three-qubit-states}

\begin{appendixlemma}
\label{lem:generalized-schur-complement}
Let $\cH_1$ and $\cH_2$ be finite-dimensional Hilbert spaces. Suppose
$X$ and $D$ are Hermitian operators on $\cH_1$ and $\cH_2$,
respectively, and $B:\cH_1\ra\cH_2$ is linear. Then
$
\bma
X&B^\dagger\\
B&D
\ema\geq0
$
if and only if $D\geq0$, $\cR(B)\subseteq\cR(D)$, and
$X-B^\dagger D^+B\geq0$, where $D^+$ is the Moore-Penrose inverse of
$D$.
\end{appendixlemma}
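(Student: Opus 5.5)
The plan is to reduce to a congruence that splits the block matrix into a block-diagonal one. Write $M=\bma X&B^\dagger\\ B&D\ema$.

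\emph{Necessity.} Assume $M\geq0$. Compressing to the second block gives $D\geq0$. For the range inclusion, take $v\in\ker D$ and consider the test vector $(u,tv)$ for arbitrary $u\in\cH_1$ and real $t$. The quadratic form equals $\langle u,Xu\rangle+2t\,\mathrm{Re}\langle v,Bu\rangle$, because $\langle v,Dv\rangle=0$. This expression is linear in $t$ and nonnegative for all $t$, so $\mathrm{Re}\langle v,Bu\rangle=0$. Replacing $u$ by $iu$ gives $\langle v,Bu\rangle=0$. Hence $\cR(B)\perp\ker D$, that is, $\cR(B)\subseteq\cR(D)$, since $D$ is Hermitian. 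Once this inclusion holds, $DD^+B=B$ because $DD^+$ is the orthogonal projection onto $\cR(D)$.

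\emph{The congruence.} With $DD^+B=B$ available, set
$$S=\bma I&0\\ -D^+B&I\ema.$$
Using $D^+DD^+=D^+$, $(D^+)^\dagger=D^+$ and $DD^+B=B$, a direct multiplication gives
$$S^\dagger MS=\bma X-B^\dagger D^+B&0\\0&D\ema.$$
Since $S$ is invertible, $M\geq0$ if and only if this block-diagonal operator is positive semidefinite. In the necessity direction this yields $X-B^\dagger D^+B\geq0$.

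\emph{Sufficiency.} Assume the three conditions. The identity $DD^+B=B$ again holds, so the same congruence applies, and the block-diagonal form is positive semidefinite. Hence $M=(S^\dagger)^{-1}\,\mathrm{diag}(X-B^\dagger D^+B,\,D)\,S^{-1}\geq0$.

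The main obstacle is the range-inclusion step in the necessity direction, which requires the perturbation argument over $\ker D$. Everything else is routine Moore--Penrose algebra, which I would verify only briefly.
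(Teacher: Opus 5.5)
Your proof is correct and complete. The kernel-perturbation argument gives $\cR(B)\subseteq\cR(D)$ and hence $DD^+B=B$, and the invertible congruence by $S$ then reduces both directions to the positivity of $\mathrm{diag}(X-B^\dagger D^+B,\,D)$. The paper omits the proof and cites the result as the standard generalized Schur-complement criterion, and your argument is exactly that standard proof.
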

This is the standard generalized Schur-complement criterion, and its
proof is omitted.

\begin{resultproof}{Proposition~\ref{pro:rank_two_2UDA_parameter}}
	We first record two consequences of the spin flip in
	\eqref{eq:n-qubit-spin-flip}. Since $\Theta_3^2=-I$, one has
	\begin{equation}
		\label{eq:braketz=0}
		\braket{v}{\widetilde v}=0,
		\quad \forall \text{ three-qubit vector }\ket v.
	\end{equation}

	Since $\Theta_3$ is antiunitary, we have
	$\norm{\widetilde x}=\norm{x}$. We set
	$\ket{e_1}:=\ket{x}/\norm{x}$ and
	$\ket{e_2}:=\ket{\widetilde x}/\norm{x}$. By
	\eqref{eq:braketz=0}, these two vectors are orthonormal. Bessel's
	inequality states that the squared norm of the orthogonal projection of
	$\ket y$ onto their span does not exceed $\norm{y}^2$. Hence, 
	$$
	|\braket{x}{y}|^2+|\braket{\widetilde x}{y}|^2
	\leq \norm{x}^2\norm{y}^2.
	$$
	By \eqref{eq:delta_E_rank_two}, this inequality is exactly
	$\delta_{\mathsf E}\geq0$. Equality holds if and only if
	$\ket y\in\Span\{\ket x,\ket{\widetilde x}\}$. Since $\ket x$ and
	$\ket y$ are linearly independent and
	$\Theta_3\ket{\widetilde x}=-\ket x$, this equality condition is
	equivalent to $\cL=\widetilde{\cL}$. Therefore,
	\begin{equation}
		\label{eq:delta_zero_spin_invariant}
		\delta_{\mathsf E}=0
		\quad\Longleftrightarrow\quad
		\cL=\widetilde{\cL}.
	\end{equation}
	In particular, the alternatives $\delta_{\mathsf E}=0$ and
	$\delta_{\mathsf E}>0$ depend only on the range $\cL$.
	
	We next justify that the factorization entails no loss of generality.
	Let $\rho'$ be another rank-two state with
	$\cR(\rho')=\cL$, and write
	$\rho'=\mathsf E'\mathsf E'^\dagger$. Since $\mathsf E$ and
	$\mathsf E'$ have the same range, there is an invertible matrix
	$R\in GL(2,\bbC)$ such that $\mathsf E'=\mathsf E R$. From
	\eqref{eq:delta_E_rank_two}, we have
	$\delta_{\mathsf E'}=|\det R|^2\delta_{\mathsf E}$,
	$F'=R^{-1}F (R^\dagger)^{-1}$, 
	$G'=R^{-1}G R^{-T}$,   
	$\Phi_{\mathsf E'}(F',G')=\Phi_{\mathsf E}(F,G)$, and 
	$\det F'=\frac{\det F}{|\det R|^2}$.
	Since $R$ is invertible and $G'=(G')^T$, this transformation induces a
	bijection between $\cF_{\mathsf E}$ and $\cF_{\mathsf E'}$. It follows
	that the alternatives $\delta_{\mathsf E}=0$ and
	$\delta_{\mathsf E}>0$, as well as the sign condition on $\det F$, depend
	only on $\cL$. In particular, if $\mathsf E'$ is another factorization
	of the same $\rho$, then $R$ is unitary. This also agrees with the
	range invariance of the UDA property \cite{qiu2026mixed}.
	
	We will also use the following consequence of
	\eqref{eq:delta_zero_spin_invariant}. If $\delta_{\mathsf E}>0$, then
	\begin{equation}
		\label{eq:rank_two_spin_direct_sum}
		\cL\cap\widetilde{\cL}=\{0\}.
	\end{equation}
	Indeed, a nonzero vector in the intersection and its spin flip would be
	linearly independent vectors in the same intersection, forcing
	$\cL=\widetilde{\cL}$. Hence
	$
	\mathsf W_{\mathsf E}:=(\mathsf E,\widetilde{\mathsf E})
	\in\bbC^{8\times4}
	$
	has full column rank whenever $\delta_{\mathsf E}>0$.
	
	We first prove the ``only if'' part. Suppose that $\rho$ is 2-UDA.
	Condition (i) follows from
	Lemmas~\ref{le:3-qubit_UDA} and~\ref{le:lr+1-ls}. In fact, a GHZ-LU
	vector in $\cL$ determines a non-2-UDA pure state in the face
	containing $\rho$, so $\rho$ cannot be 2-UDA. 
	It remains to prove (ii). If $\delta_{\mathsf E}=0$, then the first
	alternative in (ii) already holds. It  remains to consider
	the case $\delta_{\mathsf E}>0$. If (ii) fails, then some
	$F\in\cF_{\mathsf E}$ satisfies $\det F>0$. We choose a
	symmetric $G$ for which
	$H=\Phi_{\mathsf E}(F,G)\in\cN_2$. Since a Hermitian $2\times2$
	matrix with positive determinant is definite, replacing $(F,G)$ by
	$(-F,-G)$ if necessary allows us to assume $F<0$. For all sufficiently
	small $t>0$,
	\begin{equation}
		\label{eq:small_t_Q_positive}
		Q_t=
		\bma
		I_2+tF&tG\\
		tG^\dagger&-t\overline F
		\ema
		\geq0.
	\end{equation}
	By Lemma~\ref{lem:generalized-schur-complement}, this follows because
	$-t\overline F>0$, and its Schur complement in $Q_t$ is
	$$
	I_2+t\left(
	F-G(-\overline F)^{-1}G^\dagger
	\right)>0
	$$
	for sufficiently small $t$. Therefore
	$\rho+tH=\mathsf W_{\mathsf E}Q_t\mathsf W_{\mathsf E}^\dagger$ is a state with the same
	2-marginals as $\rho$. Moreover, \eqref{eq:rank_two_spin_direct_sum}
	implies that $\mathsf W_{\mathsf E}$ has full column rank, so $F\neq0$ implies
	$H\neq0$. Hence $\rho+tH\neq\rho$, contradicting the assumption that
	$\rho$ is 2-UDA. This proves (ii), and thus completes the ``only if''
	part.
	
	We next prove the ``if'' part. Suppose (i) and (ii) hold. Let $\sigma$ be a state
	2-compatible with  $\rho$, and set $H=\sigma-\rho$. Then
	$H\in\cN_2$. Every Pauli term of an element of $\cN_2$ has weight
	three. Eqs.~\eqref{eq:n-qubit-spin-flip} and
	\eqref{eq:n-qubit-spin-flip-Pauli} therefore imply
	$
	\widetilde H=-H.
	$
	It follows that
	$\sigma+\widetilde\sigma=\rho+\widetilde\rho$, 
	and therefore
	\begin{equation}
		\label{eq:sigma_support_spin_sum}
		\cR(\sigma)\subseteq
		\cR(\rho+\widetilde\rho)
		=\cL+\widetilde{\cL}.
	\end{equation}
	
	Suppose first that $\delta_{\mathsf E}=0$. By
	\eqref{eq:delta_zero_spin_invariant} and
	\eqref{eq:sigma_support_spin_sum}, both $\cR(\sigma)$ and $\cR(H)$ are
	contained in the two-dimensional space $\cL$. If $H\neq0$, then $H$ is traceless
	and has the form
	$$
	H=\lambda
	\bigl(
	\ket{\phi_0}\bra{\phi_0}
	-\ket{\phi_1}\bra{\phi_1}
	\bigr)
	$$
	for some $\lambda>0$ and orthonormal vectors
	$\ket{\phi_0},\ket{\phi_1}\in\cL$. Since $H\in\cN_2$, these two pure
	states have the same 2-marginals. From Lemma~\ref{le:3-qubit_UDA},
	$\ket{\phi_0}$ and $\ket{\phi_1}$ are GHZ-LU states, contradicting
	(i). Thus $H=0$.
	
	Suppose next that $\delta_{\mathsf E}>0$. By
	\eqref{eq:rank_two_spin_direct_sum}, $\mathsf W_{\mathsf E}$ has full column rank.
	By writing $H$ in this basis and imposing
	$H=H^\dagger$ and $\widetilde H=-H$, we have  a unique representation
	\begin{equation}
		\label{eq:H_AB_rank_two}
		H=\Phi_{\mathsf E}(F,G),
		\qquad
		F=F^\dagger,\quad G=G^T.
	\end{equation}
	From $H\in\cN_2$, the definition in
	\eqref{eq:F_space_rank_two} implies 
	$F\in\cF_{\mathsf E}$. 
	Since $\rho=\mathsf E\mathsf E^\dagger$, we also have
	\begin{equation}
		\label{eq:Q_AB_positive}
		\sigma=\mathsf W_{\mathsf E}Q\mathsf W_{\mathsf E}^\dagger,
		\qquad
		Q=
		\bma
		I_2+F&G\\
		G^\dagger&-\overline F
		\ema
		\geq0.
	\end{equation}
	The full column rank of $\mathsf W_{\mathsf E}$ makes the positivity of $\sigma$
	equivalent to that of $Q$. In particular, $F\leq0$. If $F=0$, then
	$Q\geq0$ forces $G=0$, and hence $H=0$.
	
	It remains to exclude $\rank F=1$. In that case, we write
	$F=-uu^\dagger$. Positivity of $Q$ implies
	$\cR(G^\dagger)\subseteq\Span\{\overline u\}$. Since
	$G=G^T$, we then obtain $G=\beta uu^T$ for some $\beta\in\bbC$. Let 
	$\ket\psi=\mathsf Eu\in\cL$.  Eq.~\eqref{eq:H_AB_rank_two} becomes
	\begin{equation}
		\label{eq:H_rank_one_A}
		H=
		-\ket\psi\bra\psi
		+\ket{\widetilde\psi}\bra{\widetilde\psi}
		+\beta\ket\psi\bra{\widetilde\psi}
		+\overline\beta\ket{\widetilde\psi}\bra\psi.
	\end{equation}
	By \eqref{eq:braketz=0}, the first two vectors are linearly
	independent, and the coefficient matrix in
	\eqref{eq:H_rank_one_A} has determinant $-1-|\beta|^2$. Thus
	$\rank H=2$. The two orthogonal eigenvectors of $H$ have identical
	2-marginals. After a common local unitary, such a compatible pair
	differs only in the relative phase of a generalized GHZ state. Hence
	their span is $\Span\{\ket{000},\ket{111}\}$. Every nonzero vector in this span is
	either GHZ-LU or proportional to one of the two product endpoints.
	The first case contradicts (i). In the second, the cross terms
	in \eqref{eq:H_rank_one_A} vanish under every single-qubit partial
	trace, whereas the first two terms have a nonzero partial trace. This
	contradicts $H\in\cN_2$. Therefore $\rank F\neq1$.
	Consequently, a nonzero compatible perturbation would require $F<0$
	and hence $\det F>0$, which is contrary to (ii). Thus $H=0$, and 
	$\rho$ is 2-UDA.
\end{resultproof}

\section{Proofs for Sec. ~\ref{subsec:support-reduced-SDP}}
\label{app:proofs-support-reduced-SDP}

\begin{resultproof}{Proposition~\ref{pro:support-reduced-SDP}}
We first establish two properties of the generalized state inversion
in \eqref{eq:generalized-tripartite-state-inversion} that
will be used throughout the proof. For $Y\in\{A,B,C\}$, we write
$d_Y:=\dim\cH_Y$. With respect to a fixed orthonormal basis of
$\cH_Y$, we set
$
A_{pq}^{(Y)}
:=
\ket p\bra q-\ket q\bra p,
$ for $
1\leq p<q\leq d_Y.
$
For every linear operator $Z$ on $\cH_Y$, we have 
$
\mathcal I_Y(Z)
=
\sum_{p<q}
A_{pq}^{(Y)}Z^T\bigl(A_{pq}^{(Y)}\bigr)^\dagger.
$
Consequently, $\widetilde X$ is obtained by applying a completely
positive map to the global transpose $X^T$. Since transposition
preserves positive semidefiniteness,
$X\geq0$ implies $\widetilde X\geq0$. For
$\cS\subseteq[3]$, we use the marginal notation from
Sec.~\ref{sec:pre}, with $(A_1,A_2,A_3)=(A,B,C)$ and the
empty-subsystem conventions stated there. Expanding the three
local inverters in \eqref{eq:generalized-tripartite-state-inversion}, we
obtain that 
\begin{eqnarray}
\label{eq:generalized-inverter-expansion}
\widetilde X
=
\sum_{\cS\subseteq[3]}
(-1)^{|\cS|}X_{A_{\cS}}\ox I_{A_{\cS^c}},
\end{eqnarray}
where every term is embedded in the natural tensor order. If
$H\in\cN_2$, then all 1- and 2-marginals of $H$ vanish. Thus only the
term $\cS=[3]$ remains in
\eqref{eq:generalized-inverter-expansion}, which proves the identity
$\widetilde H=-H$ stated before Proposition~\ref{pro:support-reduced-SDP}.

(i) Since $\rho$ is positive definite on $\cL$, there are constants
$a,b>0$ such that
$
a\tau_{\cL}\leq\rho\leq b\tau_{\cL}.
$
Since the generalized state inversion is positive and linear, applying
it to the preceding inequalities yields
$a\widetilde{\tau_{\cL}}\leq\widetilde\rho
\leq b\widetilde{\tau_{\cL}}$. Hence
\begin{eqnarray}
\label{eq:rho-tau-state-inversion-comparison}
a\bigl(\tau_{\cL}+\widetilde{\tau_{\cL}}\bigr)
\leq
\rho+\widetilde\rho
\leq
b\bigl(\tau_{\cL}+\widetilde{\tau_{\cL}}\bigr).
\end{eqnarray}
Then the positive semidefinite operators  $\rho+\widetilde\rho$ and $\tau_{\cL}+\widetilde{\tau_{\cL}}$ have the same
kernel and the same range $\widehat{\cL}$. 
Let $\sigma$ be a state 2-compatible with $\rho$, and 
$H:=\sigma-\rho$. Then $H\in\cN_2$ and hence
$\widetilde H=-H$. It follows that
\begin{eqnarray}
\label{eq:rho-compatible-state-inversion-sum}
\sigma+\widetilde\sigma
=
\rho+\widetilde\rho.
\end{eqnarray}
Since $\widetilde\sigma\geq0$, we have
$0\leq\sigma\leq\rho+\widetilde\rho$. For positive semidefinite
operators $0\leq X\leq Y$, one has $\ker Y\subseteq\ker X$.
Eqs.~\eqref{eq:rho-tau-state-inversion-comparison} and
\eqref{eq:rho-compatible-state-inversion-sum} therefore imply
$\cR(\sigma)\subseteq\widehat{\cL}$.

(ii) By Lemma~\ref{le:lr+1-ls}, $\rho$ and $\tau_{\cL}$ have the same UDA
property. We characterize the states 2-compatible with
$\tau_{\cL}$. By part (i), every such state $\sigma$ satisfies
$\cR(\sigma)\subseteq\widehat{\cL}$. Therefore
$H:=\sigma-\tau_{\cL}$ belongs to
$\cN_2\cap\operatorname{Herm}(\widehat{\cL})$, and there is a unique
$x\in\bbR^m$ such that $H=\sum_jx_jH_j$. Using
\eqref{eq:support-reduced-isometry}, we then have
\begin{eqnarray}
\label{eq:compatible-state-SDP-coordinate}
\sigma
=
U_{\cL}
\left(
R_{\cL}+\sum_{j=1}^m x_jG_j
\right)
U_{\cL}^\dagger.
\end{eqnarray}
Thus $\sigma\geq0$ if and only if $x\in\Omega_{\cL}$ in
\eqref{eq:state-inversion-spectrahedron}. Conversely,
every $x\in\Omega_{\cL}$ defines through
\eqref{eq:compatible-state-SDP-coordinate} a positive semidefinite
operator of trace one, because every $H_j$ is traceless. It also has
the same 2-marginals as $\tau_{\cL}$ and equals $\tau_{\cL}$ if and
only if $x=0$. We conclude that $\rho$ is 2-UDA if and only if
$\Omega_{\cL}=\{0\}$.

It remains to prove that $\Omega_{\cL}$ is compact. We define the real
linear map
$$
\mathsf T_{\cL}:\bbR^m\ra\operatorname{Herm}(\bbC^{d_{\cL}}),
\qquad
\mathsf T_{\cL}(x):=\sum_{j=1}^m x_jG_j.
$$
This map is injective. Indeed, $\mathsf T_{\cL}(x)=0$ and the fact that
$\cR(H_j)\subseteq\widehat{\cL}$ for every $j$ imply
$
\sum_{j=1}^m x_jH_j
=
U_{\cL}\mathsf T_{\cL}(x)U_{\cL}^\dagger
=0,
$
and the real linear independence of $H_1,\ldots,H_m$ implies $x=0$.
Consequently, there is a constant $\gamma_{\cL}>0$ such that
$\|\mathsf T_{\cL}(x)\|_{\rm HS}\geq\gamma_{\cL}\|x\|_2$, where
$\|Y\|_{\rm HS}:=\sqrt{\tr(Y^\dagger Y)}$ denotes the
Hilbert--Schmidt norm.
For $x\in\Omega_{\cL}$, we set
$A(x):=R_{\cL}+\mathsf T_{\cL}(x)$. Since $A(x)\geq0$ and
$\tr A(x)=1$, we have $\|A(x)\|_{\rm HS}\leq1$. We also have
$\|R_{\cL}\|_{\rm HS}\leq1$, and hence
$
\gamma_{\cL}\|x\|_2
\leq
\|\mathsf T_{\cL}(x)\|_{\rm HS}
\leq2.
$
Thus $\Omega_{\cL}$ is bounded. It is closed as the inverse image of
the positive semidefinite cone under a continuous affine map, and
hence it is compact.

(iii) Let $x^*$ and $H^*$ be as assumed, and set
$\sigma^*:=\tau_{\cL}+H^*$. By the correspondence established in the
proof of (ii), $\sigma^*$ is a state distinct from and 2-compatible
with $\tau_{\cL}$. With $\epsilon$ as in (iii), we obtain
\begin{eqnarray}
\label{eq:original-state-SDP-witness-positive}
\rho^*
=
\rho+\epsilon H^*
=
\bigl(\rho-\epsilon\tau_{\cL}\bigr)
+\epsilon\sigma^*
\geq0,
\end{eqnarray}
because
$\rho-\epsilon\tau_{\cL}
=\rho-\frac12\lambda_{\min}^{+}(\rho)P_{\cL}\geq0$.
Moreover, $H^*$ is nonzero, traceless, and belongs to $\cN_2$.
Therefore $\rho^*$ is a state distinct from and 2-compatible with
$\rho$.
\end{resultproof}

\section{Proofs for Sec. ~\ref{subsec:three-qubit-rank-bound}}
\label{app:proofs-three-qubit-rank-bound}

\begin{resultproof}{Theorem~\ref{th:five_dim_contains_GHZ_LU}}
	We use the characteristic-class facts collected in
	Lemma~\ref{lem:characteristic-class-tools}.
	Let $\cH=(\bbC^2)^{\ox 3}$ and $X=(\mathbb{CP}^1)^3$. For $j=1,2,3$, let $\mathcal L_j$ be the pullback to $X$ of the tautological line bundle on the $j$-th factor, and let $h_j=c_1(\mathcal L_j^*)$. We set $h=h_1+h_2+h_3$. The integral cohomology ring of $X$ is
	\begin{eqnarray}
		\label{eq:cohomology_X_relative}
		H^*(X;\bbZ)=\bbZ[h_1,h_2,h_3]/(h_1^2,h_2^2,h_3^2),
		\qquad \int_Xh_1h_2h_3=1.
	\end{eqnarray}
	We define the smooth complex line bundles
	\begin{eqnarray}
		\label{eq:A_B_bundles_relative}
		\mathcal A=\mathcal L_1\ox\mathcal L_2\ox\mathcal L_3,
		\qquad
		\mathcal B=\mathcal L_1^\perp\ox\mathcal L_2^\perp\ox\mathcal L_3^\perp,
	\end{eqnarray}
	where the orthogonal complements are taken with respect to the standard Hermitian inner product on $\bbC^2$. Since $\mathcal L_j\oplus\mathcal L_j^\perp$ is trivial, we have $c_1(\mathcal L_j)=-h_j$ and $c_1(\mathcal L_j^\perp)=h_j$. Hence, for $\mathcal E=\mathcal A\oplus\mathcal B$,
	\begin{eqnarray}
		\label{eq:Chern_E_relative}
		c_1(\mathcal A)=-h,\qquad c_1(\mathcal B)=h,\qquad
		c_1(\mathcal E)=0,\qquad c_2(\mathcal E)=-h^2.
	\end{eqnarray}
	
	Let $\cW=\cH/\cV$, so $\dim\cW=3$. Fiberwise inclusion into $\cH$, followed by the quotient map $\cH\to\cW$, defines a complex bundle morphism $\Phi:\mathcal E\to X\times\cW$. Let $\pi:M=\mathbb P(\mathcal E)\to X$ be the bundle of complex lines, let $\mathscr T\subset\pi^*\mathcal E$ be the tautological line bundle, and set $\xi=c_1(\mathscr T^*)$. The morphism $\Phi$ induces a section $s$ of the complex rank-three bundle
	\begin{eqnarray}
		\label{eq:F_and_s_relative}
		\mathcal F=\mathscr T^*\ox\cW\simeq(\mathscr T^*)^{\oplus3},
		\qquad s(x,K)=\left.\Phi_x\right|_K,\quad (x,K)\in M,
	\end{eqnarray}
	where $K\subseteq\mathcal E_x$ is a complex line.
	
	For $x=(\ell_1,\ell_2,\ell_3)\in X$, a line in $\mathcal E_x=\mathcal A_x\oplus\mathcal B_x$ outside $\mathbb P(\mathcal A_x)\cup\mathbb P(\mathcal B_x)$ is spanned by a vector $au+bv$, where $ab\neq0$, $u\in\ell_1\ox\ell_2\ox\ell_3$, and $v\in\ell_1^\perp\ox\ell_2^\perp\ox\ell_3^\perp$. Consequently, a zero of $s$ outside the two endpoint sections
	$$D_{\mathcal A}=\mathbb P(\mathcal A),\qquad D_{\mathcal B}=\mathbb P(\mathcal B)$$
	yields the required GHZ-LU state. Suppose, to the contrary, that no required state exists. We then have
	\begin{eqnarray}
		\label{eq:zero_set_endpoints_relative}
		Z(s):=\{(x,K)\in M:s(x,K)=0\}
		\subset D_{\mathcal A}\sqcup D_{\mathcal B}.
	\end{eqnarray}
	
	We define the orthogonal-complement involution
	$$
	\iota:X\longrightarrow X,\qquad
	(\ell_1,\ell_2,\ell_3)\longmapsto
	(\ell_1^\perp,\ell_2^\perp,\ell_3^\perp).
	$$
	On each copy of $\mathbb{CP}^1\simeq S^2$, this is the antipodal antiholomorphic involution. Therefore,
	\begin{eqnarray}
		\label{eq:iota_on_h_relative}
		\iota^*h_j=-h_j,\qquad \iota^*h=-h,
	\end{eqnarray}
	and $\iota$ reverses the orientation of the complex threefold $X$.
	Since $\mathcal A_{\iota(x)}=\mathcal B_x$ and
	$\mathcal B_{\iota(x)}=\mathcal A_x$, the involution lifts to
	$\widehat\iota:M\to M$ by
	$$
	\widehat\iota\bigl(x,[u+v]\bigr)
	=\bigl(\iota(x),[v+u]\bigr),\qquad
	u\in\mathcal A_x,\quad v\in\mathcal B_x.
	$$
	It exchanges $D_{\mathcal A}$ and $D_{\mathcal B}$ and acts
	complex-linearly on every projective fiber. Hence $\widehat\iota$
	reverses the orientation of the real eight-dimensional manifold $M$.
	The induced maps on $\mathscr T$ and $\mathcal F$ are complex linear.
	The action on the factor $\cW$ is the identity, and $s$ is equivariant
	because $u+v=v+u$ in $\cH/\cV$.
	
	We choose disjoint closed tubular neighborhoods with smooth boundary
	\begin{eqnarray}
		\label{eq:endpoint_neighborhoods_relative}
		N_{\mathcal A}\subset M\setminus D_{\mathcal B},\qquad
		N_{\mathcal B}=\widehat\iota(N_{\mathcal A})\subset M\setminus D_{\mathcal A}
	\end{eqnarray}
	of $D_{\mathcal A}$ and $D_{\mathcal B}$, respectively, and set $N=N_{\mathcal A}\sqcup N_{\mathcal B}$ and $C=M\setminus\operatorname{int}N$. By \eqref{eq:zero_set_endpoints_relative}, the section $s$ is nowhere zero on $C$. Let $U_{\mathcal F}\in H^6(\mathcal F,\mathcal F\setminus M;\bbZ)$ be the Thom class. By Lemma~\ref{lem:characteristic-class-tools}(i), pulling back the Thom class along the map of pairs $s:(M,C)\to(\mathcal F,\mathcal F\setminus M)$ defines the relative Euler class
	\begin{eqnarray}
		\label{eq:relative_Euler_class}
		e(\mathcal F,s)=s^*U_{\mathcal F}\in H^6(M,C;\bbZ),
	\end{eqnarray}
	whose image in $H^6(M;\bbZ)$ is $e(\mathcal F)=c_3(\mathcal F)$. Excision shows
	\begin{eqnarray}
		\label{eq:relative_excision}
		H^6(M,C;\bbZ)&\simeq&H^6(N_{\mathcal A},\partial N_{\mathcal A};\bbZ)
		\oplus H^6(N_{\mathcal B},\partial N_{\mathcal B};\bbZ).
	\end{eqnarray}
	We denote the two components by $\varepsilon_{\mathcal A}$ and $\varepsilon_{\mathcal B}$, and define
	\begin{eqnarray}
		\label{eq:endpoint_correction_numbers}
		I_{\mathcal A}&=&\left\langle\left.\pi^*h\right|_{N_{\mathcal A}}\smile\varepsilon_{\mathcal A},
		[N_{\mathcal A},\partial N_{\mathcal A}]\right\rangle,
		\nonumber\\
		I_{\mathcal B}&=&\left\langle\left.\pi^*h\right|_{N_{\mathcal B}}\smile\varepsilon_{\mathcal B},
		[N_{\mathcal B},\partial N_{\mathcal B}]\right\rangle.
	\end{eqnarray}
	The relative-to-absolute map and excision imply
	\begin{eqnarray}
		\label{eq:endpoint_correction_identity}
		\left\langle\pi^*h\smile c_3(\mathcal F),[M]\right\rangle
		=I_{\mathcal A}+I_{\mathcal B}.
	\end{eqnarray}
	A perturbation of $s$ in $N_{\mathcal A}$, fixed near $\partial N_{\mathcal A}$, can be chosen transverse to the zero section. We transport it to $N_{\mathcal B}$ by $\widehat\iota$ and keep $s$ unchanged on $C$. The resulting equivariant section $\widetilde s$ is homotopic to $s$ relative to $C$ and has a zero set consisting of closed oriented surfaces
	\begin{eqnarray}
		\label{eq:relative_zero_surfaces}
		Z(\widetilde s)=\Sigma_{\mathcal A}\sqcup\Sigma_{\mathcal B},\qquad
		\Sigma_{\mathcal A}\subset\operatorname{int}N_{\mathcal A},\qquad
		\Sigma_{\mathcal B}\subset\operatorname{int}N_{\mathcal B}.
	\end{eqnarray}
	Poincar\'e--Lefschetz duality shows
	\begin{eqnarray}
		\label{eq:corrections_as_integrals}
		I_{\mathcal A}=\int_{\Sigma_{\mathcal A}}\pi^*h,
		\qquad I_{\mathcal B}=\int_{\Sigma_{\mathcal B}}\pi^*h.
	\end{eqnarray}
	Since $\widehat\iota$ reverses the orientation of $M$ and preserves the orientation of $\mathcal F$, it reverses the induced orientation of the zero surface. Thus
	\begin{eqnarray}
		\label{eq:orientation_Sigma_relative}
		[\Sigma_{\mathcal B}]=-\widehat\iota_*[\Sigma_{\mathcal A}].
	\end{eqnarray}
	
	The vertical tangent bundle is $T_{M/X}=\operatorname{Hom}(\mathscr T,\pi^*\mathcal E/\mathscr T)$. By $c_1(\mathcal E)=0$ and $c_1(\mathscr T)=-\xi$, we have $c_1(T_{M/X})=2\xi$. Moreover, $c_1(TX)=2h$. The tangent bundle sequence of $\pi$ therefore shows
	\begin{eqnarray}
		\label{eq:M_spin_relative}
		w_2(TM)=\pi^*w_2(TX)+w_2(T_{M/X})
		=2\pi^*h+2\xi=0.
	\end{eqnarray}
	For either $\Sigma_*$ in \eqref{eq:relative_zero_surfaces}, transversality identifies its real normal bundle with $\mathcal F|_{\Sigma_*}$. Since $c_1(\mathcal F)=3\xi$, we have $w_2(\mathcal F)=\xi\pmod2$. Every closed oriented surface satisfies $w_2(T\Sigma_*)=0$. The Whitney formula and the identity for $w_2$ in
	Lemma~\ref{lem:characteristic-class-tools}(ii), together with
	\eqref{eq:M_spin_relative}, then imply
	\begin{eqnarray}
		\label{eq:xi_even_on_Sigma_relative}
		\left.\xi\right|_{\Sigma_*}=0
		\quad\text{in}\quad H^2(\Sigma_*;\bbZ/2).
	\end{eqnarray}
	Consequently, the integral of $\xi$ over every component of $\Sigma_*$ is even.
	
	On $M\setminus D_{\mathcal B}$, projection onto $\mathcal A$ identifies $\mathscr T$ with $\pi^*\mathcal A$, whereas on $M\setminus D_{\mathcal A}$, projection onto $\mathcal B$ identifies $\mathscr T$ with $\pi^*\mathcal B$. It follows from \eqref{eq:Chern_E_relative} that
	\begin{eqnarray}
		\label{eq:xi_endpoint_charts_relative}
		\left.\xi\right|_{M\setminus D_{\mathcal B}}=\pi^*h,
		\qquad
		\left.\xi\right|_{M\setminus D_{\mathcal A}}=-\pi^*h.
	\end{eqnarray}
	We set $d=\int_{\Sigma_{\mathcal A}}\pi^*h$. Eqs.~\eqref{eq:endpoint_neighborhoods_relative}, \eqref{eq:xi_even_on_Sigma_relative}, and \eqref{eq:xi_endpoint_charts_relative} show that $d$ is even. By \eqref{eq:iota_on_h_relative} and \eqref{eq:orientation_Sigma_relative},
	\begin{eqnarray}
		\label{eq:relative_mod_four}
		I_{\mathcal B}
		=\int_{\Sigma_{\mathcal B}}\pi^*h
		=-\int_{\Sigma_{\mathcal A}}\widehat\iota^*\pi^*h
		=d=I_{\mathcal A}.
	\end{eqnarray}
	Hence the endpoint correction identity \eqref{eq:endpoint_correction_identity} satisfies
	\begin{eqnarray}
		\label{eq:endpoint_correction_mod_four}
		\left\langle\pi^*h\smile c_3(\mathcal F),[M]\right\rangle
		=I_{\mathcal A}+I_{\mathcal B}=2d\equiv0\pmod4.
	\end{eqnarray}
	
	Let $\mathcal Q_{\mathcal E}:=\pi^*\mathcal E/\mathscr T$. Since
	$c_1(\mathcal Q_{\mathcal E})=\xi$, the tautological exact sequence in
	Lemma~\ref{lem:characteristic-class-tools}(iii) and
	\eqref{eq:Chern_E_relative} yield
	\begin{eqnarray}
		\label{eq:projective_bundle_relation_relative}
		c_2(\pi^*\mathcal E)=c_1(\mathscr T)c_1(\mathcal Q_{\mathcal E})
		=-\xi^2=-\pi^*h^2,
		\qquad \xi^2=\pi^*h^2.
	\end{eqnarray}
	By Lemma~\ref{lem:characteristic-class-tools}(iii), $\xi$ restricts
	to $c_1(\mathcal O_{\mathbb{CP}^1}(1))$ on every fiber and
	$\pi_*(\xi)=1$. Eq.~\eqref{eq:projective_bundle_relation_relative} yields $\pi_*(\xi^3)=h^2$. On the other hand, $c_3(\mathcal F)=\xi^3$. Hence, by \eqref{eq:cohomology_X_relative},
	\begin{eqnarray}
		\label{eq:Chern_number_six_relative}
		\left\langle\pi^*h\smile c_3(\mathcal F),[M]\right\rangle
		&=&\int_M\pi^*h\,\xi^3
		=\int_Xh\,\pi_*(\xi^3)
		\nonumber\\
		&=&\int_Xh^3=6.
	\end{eqnarray}
	Eq.~\eqref{eq:Chern_number_six_relative} contradicts the divisibility by four in \eqref{eq:endpoint_correction_mod_four}. Therefore, the original section $s$ has a zero outside $D_{\mathcal A}\cup D_{\mathcal B}$.
	For the UDA consequence, let $\rho$ be a three-qubit state with
	$\rank(\rho)\geq5$, and choose a five-dimensional subspace
	$\cV_0\subseteq\cR(\rho)$. By the preceding result, $\cV_0$, and hence
	$\cR(\rho)$, contains a GHZ-LU vector. This vector determines a
	non-2-UDA pure state in the face containing $\rho$. Therefore, by
	Lemmas~ \ref{le:lr+1-ls} and \ref{le:3-qubit_UDA}, $\rho$ is not 2-UDA.
\end{resultproof}

\section{Proofs for Sec. ~\ref{subsec:high-rank-non-2UDA}}
\label{app:proofs-high-rank-non-2UDA}

\begin{resultproof}{Lemma~\ref{le:dimension_non_UDA_criterion}}
(i) Let $\cL=\cR(\rho)$. For each $i\in[n]$, we choose an orthonormal
Hermitian basis
$
F_0^{(i)}=\frac{I_{A_i}}{\sqrt{d_i}},
$ for $
F_1^{(i)},\ldots,F_{d_i^2-1}^{(i)},
$
where $F_a^{(i)}$ is traceless for $a\geq1$. For
$\boldsymbol{\alpha}=(\alpha_1,\ldots,\alpha_n)$ with
$0\leq\alpha_i\leq d_i^2-1$, we set 
$
F_{\boldsymbol{\alpha}}
:=
F_{\alpha_1}^{(1)}\ox\cdots\ox F_{\alpha_n}^{(n)}$,
and
$\supp(\boldsymbol{\alpha})
:=
\{i\in[n]:\alpha_i\neq0\}.
$ 
These product operators form an orthonormal real basis of
$\operatorname{Herm}(\cH)$. For $\cS\subseteq[n]$ with
$|\cS|=k$, the partial trace
$\tr_{A_{\cS^c}}F_{\boldsymbol{\alpha}}$ is nonzero if and only if
$\supp(\boldsymbol{\alpha})\subseteq\cS$.
Consequently, $H\in\cN_k$ if and only if all coefficients of $H$
whose supports have size at most $k$ vanish. Hence
\begin{eqnarray}
\label{eq:N_k_product_basis_dimension}
\cN_k
&=&\Span_{\bbR}\left\{
F_{\boldsymbol{\alpha}}:
|\supp(\boldsymbol{\alpha})|>k
\right\},
\nonumber\\
\dim_{\bbR}\cN_k
&=&
\sum_{\substack{\cS\subseteq[n]\\|\cS|>k}}
\prod_{i\in\cS}(d_i^2-1)
=
\prod_{i=1}^n d_i^2
-
\sum_{\substack{\cS\subseteq[n]\\|\cS|\leq k}}
\prod_{i\in\cS}(d_i^2-1).
\end{eqnarray}
The last equality follows from
$\prod_{i=1}^n d_i^2=\prod_{i=1}^n[1+(d_i^2-1)]$.

For every $H\in\operatorname{Herm}(\cL)$, hermiticity and
$\cR(H)\subseteq\cL$ imply that $Hx=0$ for every
$x\in\cL^\perp$. Thus $\operatorname{Herm}(\cL)$ is naturally
isomorphic to the space of $r\times r$ Hermitian matrices, and hence
$\dim_{\bbR}\operatorname{Herm}(\cL)=r^2$, while
$\dim_{\bbR}\operatorname{Herm}(\cH)=\prod_{i=1}^n d_i^2$.
Applying the dimension formula for two real linear subspaces, we
obtain
\begin{eqnarray}
\label{eq:Herm_L_N_k_intersection}
\dim_{\bbR}\bigl(\operatorname{Herm}(\cL)\cap\cN_k\bigr)
&\geq&
r^2+\dim_{\bbR}\cN_k-\prod_{i=1}^n d_i^2
\nonumber\\
&=&
r^2-
\sum_{\substack{\cS\subseteq[n]\\|\cS|\leq k}}
\prod_{i\in\cS}(d_i^2-1).
\end{eqnarray}
Under \eqref{eq:dimension_non_UDA_criterion}, there is therefore a
nonzero operator $H\in\operatorname{Herm}(\cL)\cap\cN_k$.

Let $\lambda_{\min}(\rho|_{\cL})>0$ be the smallest eigenvalue of the
restriction of $\rho$ to $\cL$. We denote the operator norm by
$\|\cdot\|_\infty$. For
$0<\epsilon<\lambda_{\min}(\rho|_{\cL})/\|H\|_\infty$, we have
$$
\rho+\epsilon H
\geq
\bigl(\lambda_{\min}(\rho|_{\cL})
-\epsilon\|H\|_\infty\bigr)P_{\cL}
\geq0,
$$
where $P_{\cL}$ is the projection onto $\cL$. Moreover,
$H\in\cN_k$ implies $\tr H=0$ and $\cM_k(H)=0$. Thus
$\rho+\epsilon H$ is a state distinct from $\rho$ and satisfies
$\cM_k(\rho+\epsilon H)=\cM_k(\rho)$. Therefore, $\rho$ is not
$k$-UDA.

(ii) We set $D:=2^n$ and write
\begin{eqnarray}
\label{eq:full-weight-Pauli-space}
\cF_n
:=
\cN_{n-1}
=
\Span_{\bbR}\left\{
P_{\boldsymbol{\alpha}}:
\operatorname{wt}(\boldsymbol{\alpha})=n
\right\}.
\end{eqnarray}
Thus $\dim_{\bbR}\cF_n=3^n$. By
\eqref{eq:n-qubit-spin-flip-Pauli}, every $H\in\cF_n$ satisfies
$\Theta_nH\Theta_n^{-1}=(-1)^nH$. We introduce the real linear space
\begin{eqnarray}
\label{eq:spin-parity-Hermitian-space}
\cA_n
:=
\left\{
H\in\operatorname{Herm}(\cH_n):
\Theta_nH\Theta_n^{-1}=(-1)^nH
\right\}.
\end{eqnarray}
Then $P_{\boldsymbol{\alpha}}\in \cA_n$ if and only if $\operatorname{wt}(\boldsymbol{\alpha})\equiv n \mod 2$, and hence 
\begin{eqnarray}
\label{eq:spin-parity-Hermitian-dimension}
\dim_{\bbR}\cA_n
&=&
\sum_{\substack{0\leq j\leq n\\j\equiv n \mod 2}}
\binom{n}{j}3^j
=
\frac{4^n+2^n}{2}
=
\frac{D(D+1)}{2}.
\end{eqnarray} 

On the other hand, the subspace $\widehat{\cK}$ is invariant under $\Theta_n$. We define
\begin{eqnarray}
\label{eq:spin-parity-annihilator-space}
\cW_{\widehat{\cK}}
:=
\left\{H\in\cA_n:H|_{\widehat{\cK}}=0\right\}.
\end{eqnarray}
We show the dimension of $\cW_{\widehat{\cK}}$ as follows. If $n$ is even, then
$\Theta_n^2=I$. We choose a $\Theta_n$-real orthonormal basis $e_1, ..., e_D$, such that $\Theta_n e_j=e_j$ for $j=1,2, ..., D$. Then $\Theta_n H \Theta_n^{-1}=\overline{H}$. From $H\in \cA_n$, we have $\overline{H}=H$. Combining with $H=H^\dg$, we obtain $H=\overline{H}=H^T$.   Therefore, in this basis, the elements of $\cA_n$ are real
symmetric matrices. The condition $H|_{\widehat{\cK}}=0$ implies
$He_j=0$ for $j=1,\ldots,t$, where
$t=\dim\widehat{\cK}$, and thus removes the first $t$
rows and columns. Hence
$\dim_{\bbR}\cW_{\widehat{\cK}}=(D-t)(D-t+1)/2$.
Suppose  $n$ is odd and  $D=2m$. Since
$\Theta_n^2=-I$, the $\Theta_n$-invariant space $\widehat{\cK}$ has even
dimension, which we denote by $t=2p$. We choose a Kramers orthonormal
basis $e_1,\Theta_ne_1,\ldots,e_m,\Theta_ne_m$ adapted to
$\widehat{\cK}$. Up
to an ordering of this basis, every element of $\cA_n$ has the form
$
H=
\bma
A&B\\
B^\dagger&-\overline A
\ema$, for 
$A=A^\dagger$,
and
$B=B^T$. 
This space has real dimension $m^2+m(m+1)=D(D+1)/2$. We choose the
first $p$ Kramers pairs to span $\widehat{\cK}$. The condition
$H|_{\widehat{\cK}}=0$ leaves the same block form on the $(D-t)$-dimensional
orthogonal complement. Therefore, in both parity cases, we have 
\begin{eqnarray}
\label{eq:spin-refined-codimension}
\dim_{\bbR}\cW_{\widehat{\cK}}
=
\frac{(D-t)(D-t+1)}{2},
\qquad
\operatorname{codim}_{\cA_n}\cW_{\widehat{\cK}}
=
\frac{t(2D-t+1)}{2}.
\end{eqnarray}

Since $\cF_n\subseteq\cA_n$, \eqref{eq:spin-refined-codimension}
implies
\begin{eqnarray}
\label{eq:full-weight-annihilator-lower-bound}
\dim_{\bbR}(\cF_n\cap\cW_{\widehat{\cK}})
&\geq & \dim _\bbR \cF_n+ \dim _\bbR \cW_{\widehat{\cK}}-\dim_\bbR \cA_n
 \\
&=&  \dim _\bbR \cF_n- \operatorname{codim}_{\cA_n}\cW_{\widehat{\cK}}
\nonumber \\
&=&3^n-\frac{t(2D-t+1)}{2}. \nonumber
\end{eqnarray}
Under the strict inequality in
\eqref{eq:spin-flip-refined-obstruction}, we choose a nonzero
$H\in\cF_n\cap\cW_{\widehat{\cK}}$. Let
$\cL:=\cR(\rho)=\cK^\perp$. Since
$H\cK=0$ and $H$ is Hermitian, we have
$H=P_{\cL}HP_{\cL}$. For all sufficiently small nonzero real
$\epsilon$, the operator $\rho+\epsilon H$ is positive semidefinite.
Moreover, $H\in\cN_{n-1}$ is traceless and has vanishing
$(n-1)$-marginals. Thus $\rho+\epsilon H$ is a state distinct from
and $(n-1)$-compatible with $\rho$.

 It remains to consider equality in
\eqref{eq:spin-flip-refined-obstruction} when $n$ is even. We consider
the compression space
$$
\left\{P_{\cK}HP_{\cK}:H\in\cF_n\right\}
\subseteq\operatorname{Herm}(\cK).
$$
Here $P_{\cK}$ denotes the orthogonal projection onto $\cK$.
Suppose first that this space contains a positive definite operator.
We choose $H\in\cF_n$ and write, relative to
$\cH_n=\cL\oplus\cK$, we have 
$
\rho=
\bma
\rho_{\cL}&0\\
0&0
\ema$,
and
$H=
\bma
A&B^\dagger\\
B&C
\ema,
$
where $
\rho_{\cL}>0$ and 
$ C>0.$
For all sufficiently small $\epsilon>0$, the lower-right block of
$\rho+\epsilon H$ is positive definite and its Schur complement is
$
\rho_{\cL}
+\epsilon\left(A-B^\dagger C^{-1}B\right)>0.
$
Hence $\rho+\epsilon H$ is a distinct compatible state.

Suppose now that the compression space contains no positive definite
operator. By finite-dimensional semidefinite separation, there is a
nonzero operator $\tau\geq0$ with $\cR(\tau)\subseteq\cK$ such that
\begin{eqnarray}
\label{eq:spin-refined-separator}
\tr(\tau H)=0,
\qquad
H\in\cF_n.
\end{eqnarray}
We set $\Xi:=\tau+\Theta_n\tau\Theta_n^{-1}$. Since $n$ is even,
$\Xi$ is a nonzero element of $\cA_n$ satisfying
$\cR(\Xi)\subseteq\widehat{\cK}$. Eqs.~
\eqref{eq:n-qubit-spin-flip-Pauli} and
\eqref{eq:spin-refined-separator} imply that
$\tr(\Xi H)=0$ for every $H\in\cF_n$. Indeed, the second summand
contributes
$\overline{\tr(\tau\Theta_n^{-1}H\Theta_n)}
=\overline{\tr(\tau H)}=0$.
The range condition for $\Xi$ also implies
$\tr(\Xi H)=0$ for every $H\in\cW_{\widehat{\cK}}$. Consequently,
both $\cF_n$ and $\cW_{\widehat{\cK}}$ are contained in the proper
hyperplane $\Xi^\perp:=\{H\in \cA_n: \tr(\Xi H)=0\}$, and hence $\dim_{\bbR} \Xi^\perp=\dim_\bbR \cA_n -1$. At
equality, 
\eqref{eq:spin-refined-codimension} yields
$$
\dim_{\bbR}\cF_n+\dim_{\bbR}\cW_{\widehat{\cK}}
=\dim_{\bbR}\cA_n.
$$
It follows that
$\dim_{\bbR}(\cF_n\cap\cW_{\widehat{\cK}})\geq1$. A nonzero
operator in this intersection produces a range-constrained
compatible perturbation by the preceding argument. This completes
the proof.
\end{resultproof}

\begin{resultproof}{Theorem~\ref{th:high_rank_dimension_obstruction}}
(i) We use the equal-local-dimension case of
Lemma~\ref{le:dimension_non_UDA_criterion}(i) with $d=2$.
Let $r=\rank\rho$, and suppose $1\le k\le n-2$.
Since $r\ge2^n-n$, we obtain
\begin{eqnarray}
\label{eq:dimension_gap_k_at_most_n_minus_2}
r^2-\sum_{j=0}^{k}\binom{n}{j}3^j
&\ge&
(2^n-n)^2-\sum_{j=0}^{n-2}\binom{n}{j}3^j
\nonumber\\
&=&
3^n+n3^{n-1}-2n2^n+n^2.
\end{eqnarray}
For every $n\ge3$, the inequalities
$3^{n-1}\ge2^n$ and $3^n\ge n2^n$ hold. It follows that the last
expression in \eqref{eq:dimension_gap_k_at_most_n_minus_2} is at least
$n^2>0$. Hence Lemma~\ref{le:dimension_non_UDA_criterion}(i) implies that
$\rho$ is not $k$-UDA for every $1\le k\le n-2$.

 (ii) Suppose $n\geq6$. Let $\cK:=\ker\rho$ and
$t:=\dim(\cK+\Theta_n\cK)$. The rank assumption implies
$\dim\cK\leq n$, and hence $t\leq2n$. The function
$t(2^{n+1}-t+1)/2$ is increasing for $0\leq t\leq2n$. We claim that
\begin{eqnarray}
\label{eq:spin-refined-rank-threshold-n-at-least-six}
3^n>
n\bigl(2^{n+1}-2n+1\bigr),
\qquad n\geq6.
\end{eqnarray}
For $n=6$, Eq.~\eqref{eq:spin-refined-rank-threshold-n-at-least-six}
is $729>702$. Suppose it holds for some $n\geq6$. We have
$$
3n(2^{n+1}-2n+1)
-(n+1)(2^{n+2}-2n-1)
=(n-2)2^{n+1}-4n^2+6n+1>0.
$$
The last expression is positive at $n=6$. The difference between its
values at $n+1$ and $n$ is $n2^{n+1}-8n+2>0$, so it strictly
increases. Therefore, the claim follows by induction. We now obtain
$$
3^n>
n(2^{n+1}-2n+1)
\geq
\frac{t(2^{n+1}-t+1)}2.
$$
Lemma~\ref{le:dimension_non_UDA_criterion}(ii) shows that $\rho$ is
not $(n-1)$-UDA.

Next, we consider the case $n=4$ and $\rank\rho\geq13$. We set
$\cK:=\ker\rho$. Then $\dim\cK\leq3$, and hence
$t:=\dim(\cK+\Theta_4\cK)\leq6$. The function
$t(2^5-t+1)/2$ is increasing for $0\leq t\leq6$, and
$
\frac{6(2^5-6+1)}2=81=3^4,
$
the right-hand side of
\eqref{eq:spin-flip-refined-obstruction} is at most $3^4$. It is
strictly smaller when $t<6$, while equality is covered because $n$ is
even. Lemma~\ref{le:dimension_non_UDA_criterion}(ii) therefore shows
that $\rho$ is not 3-UDA. 
Finally, we assume that $n=5$ and $\rank\rho\geq28$. We set
$\cK:=\ker\rho$. Then $\dim\cK\leq4$ and
$t:=\dim(\cK+\Theta_5\cK)\leq8$. The function
$t(2^6-t+1)/2$ is increasing for $0\leq t\leq8$, and
$
\frac{8(2^6-8+1)}2=228<243=3^5,
$
so Lemma~\ref{le:dimension_non_UDA_criterion}(ii) shows that $\rho$ is
not 4-UDA. In each case, equality of the $(n-1)$-marginals implies
equality of every $k$-marginal with $k\leq n-1$. Hence $\rho$ is not
$k$-UDA for any $1\leq k\leq n-1$. 
\end{resultproof}

\section{Proofs for Sec. ~\ref{subsec:multipartite-non-2UDA}}
\label{app:proofs-multipartite-non-2UDA}

\begin{resultproof}{Lemma~\ref{le:rho_sigma_compatible}}
By \eqref{eq:rho_sigma_channel_states}, the difference between the two states is
$$
\sigma_{\Lambda,p}^{(n)}-\rho_{\Lambda,p}^{(n)}
=\sum_{j\neq k}\sqrt{p_jp_k}
\big(\ket{j}\bra{k}\big)^{\ox(n-1)}
\ox\Lambda(\ket{j}\bra{k}).
$$
Tracing out any one of $A_1,\ldots,A_{n-1}$ eliminates every term in this sum. Tracing out $A_n$ also eliminates every term because $\Lambda$ is trace preserving and $\tr\Lambda(\ket{j}\bra{k})=0$ for $j\neq k$. Hence the two states have the same $(n-1)$-marginals. Since $n\geq 3$, they are also 2-compatible. Lemma \ref{lem:unique_channel_from_diagonal} characterizes when the choice of $\Lambda$ can make these states distinct for every $s\geq 2$.
\end{resultproof}

\begin{resultproof}{Lemma~\ref{lem:unique_channel_from_diagonal}}
(i) We define the measure-and-prepare channel $\Lambda_0$ by \eqref{eq:Lambda(X)=}. This channel belongs to the set in \eqref{eq:channel_set_diagonal}, so the set is nonempty.

(ii) For any $\Lambda\in\cA(\alpha_1,\ldots,\alpha_d)$,
we write $B_{j,k}:=\Lambda(\ket{j}\bra{k})$. Its Choi matrix is
\begin{eqnarray}
\label{eq:choi_block_matrix}
J(\Lambda)=\sum_{j,k=1}^d\ket{j}\bra{k}\ox B_{j,k}
=(B_{j,k})_{j,k=1}^d\geq 0.
\end{eqnarray}
Since $\Lambda$ is trace preserving and $B_{j,j}=\alpha_j$, we have
\begin{eqnarray}
\label{eq:trace_off_diagonal_zero}
B_{j,j}=\alpha_j,\qquad
\tr B_{j,k}=\delta_{j,k}.
\end{eqnarray}
We denote by $P_j$ the projection onto $\cR(\alpha_j)$. The positivity
in \eqref{eq:choi_block_matrix} implies
\begin{eqnarray}
\label{eq:support_restriction}
B_{j,k}=P_jB_{j,k}P_k,\qquad j,k=1,\ldots,d.
\end{eqnarray}
Indeed, for any $x\in\ker\alpha_j$, the positivity of the principal block with diagonal blocks $\alpha_j$ and $\alpha_k$ implies $B_{j,k}^{\dagger}x=0$. Applying the same argument to $y\in\ker\alpha_k$ implies $B_{j,k}y=0$, which proves \eqref{eq:support_restriction}.

For $j\neq k$, we define the linear space
\begin{eqnarray}
\label{eq:off_diagonal_support_kernel}
\cG_{j,k}
:=\{X\in\bbM_s:\ X=P_jXP_k,\ \tr X=0\}.
\end{eqnarray}
Let $r_j:=\dim\cR(\alpha_j)$. The space $P_j\bbM_sP_k$ has complex
dimension $r_jr_k$, and $\cG_{j,k}$ is the kernel of the trace
functional restricted to this space. Consequently,
$\cG_{j,k}=\{0\}$ if and only if $r_j=r_k=1$ and this trace
functional is nonzero. If $\alpha_j=\ket{a_j}\bra{a_j}$ and
$\alpha_k=\ket{a_k}\bra{a_k}$ are pure, then $P_j\bbM_sP_k$ is
spanned by $\ket{a_j}\bra{a_k}$, whose trace is
$\braket{a_k}{a_j}$. Thus
\begin{eqnarray}
\label{eq:kernel_zero_criterion}
\cG_{j,k}=\{0\}
\quad\Longleftrightarrow\quad
\alpha_j,\alpha_k\text{ are pure and }\tr(\alpha_j\alpha_k)>0.
\end{eqnarray}
If the condition in (ii) holds, \eqref{eq:support_restriction}, \eqref{eq:trace_off_diagonal_zero}, and \eqref{eq:kernel_zero_criterion} imply that $B_{j,k}=0$ for all $j\neq k$. Therefore $\Lambda=\Lambda_0$ in \eqref{eq:Lambda(X)=}, which proves uniqueness.

(iii) Suppose the condition in (ii) fails. By
\eqref{eq:kernel_zero_criterion}, there exist $r\neq t$ and a nonzero
$X\in\cG_{r,t}$. More explicitly, if some $\alpha_r$ is mixed,
we choose any $t\neq r$. Then $r_rr_t\geq 2$, so the kernel in
\eqref{eq:off_diagonal_support_kernel} is nonzero. If all $\alpha_j$
are pure, the failure of the condition shows an orthogonal pair, for
which \eqref{eq:kernel_zero_criterion} again applies.

For $z\in\bbC$, we define
\begin{eqnarray}
\label{eq:perturbed_choi}
J_z
=J(\Lambda_0)+z\ket{r}\bra{t}\ox X
+\bar z\ket{t}\bra{r}\ox X^\dagger.
\end{eqnarray}
Since $X=P_rXP_t$, there is an operator
$C:\cR(\alpha_t)\ra\cR(\alpha_r)$ such that
$X=\alpha_r^{1/2}C\alpha_t^{1/2}$. For
$|z|\|C\|_\infty\leq 1$, the only modified principal block of
\eqref{eq:perturbed_choi} is positive because
$$
\bma
\alpha_r&zX\\
\bar zX^\dagger&\alpha_t
\ema
=
\bma
\alpha_r^{1/2}&0\\
0&\alpha_t^{1/2}
\ema
\bma
P_r&zC\\
\bar zC^\dagger&P_t
\ema
\bma
\alpha_r^{1/2}&0\\
0&\alpha_t^{1/2}
\ema
\geq 0.
$$
Hence $J_z\geq 0$. Moreover, $\tr X=0$ by
\eqref{eq:off_diagonal_support_kernel}, so the partial trace of $J_z$
over the output system is $I_d$. Thus $J_z$ is the Choi matrix of a
channel $\Lambda_z\in\cA(\alpha_1,\ldots,\alpha_d)$.
For every sufficiently small nonzero $z$, we have
$\Lambda_z(\ket{r}\bra{t})=zX\neq 0$. By varying $z$, we obtain
infinitely many distinct channels. This also proves the necessity in
(ii) and completes the proof.
\end{resultproof}

\begin{resultproof}{Proposition~\ref{pro:UDA_diagonal(n-1)}}
(i) We first prove necessity by contraposition. We define
$\Lambda_0:\bbM_d\ra\bbM_s$ by \eqref{eq:Lambda(X)=}. Suppose the
condition in (i) fails. By Lemma~\ref{lem:unique_channel_from_diagonal}
(iii), there are a channel
$\Lambda\in\cA(\alpha_1,\ldots,\alpha_d)$ and indices
$r\neq t$ such that $\Lambda(\ket{r}\bra{t})\neq 0$. We define
\begin{eqnarray}
\label{eq:sigma_alpha_p_n}
\sigma_{\Lambda,p}^{(n)}
=\sum_{j,k=1}^d\sqrt{p_jp_k}
\big(\ket{j}\bra{k}\big)^{\ox(n-1)}
\ox\Lambda(\ket{j}\bra{k}).
\end{eqnarray}
By \eqref{eq:rho_sigma_channel_states}, the operator in \eqref{eq:sigma_alpha_p_n} is a state. Lemma \ref{le:rho_sigma_compatible} shows that $\sigma_{\Lambda,p}^{(n)}$ and $\rho_{\alpha,p}^{(n)}$ have the same $(n-1)$-marginals. They are distinct because the $(r,t)$ block in \eqref{eq:sigma_alpha_p_n} is nonzero. Hence $\rho_{\alpha,p}^{(n)}$ is not $(n-1)$-UDA.

We next prove sufficiency. Suppose every $\alpha_j$ is pure and
$\tr(\alpha_j\alpha_k)>0$ for any $j\neq k$. Let $\tau$ be a state
having the same $(n-1)$-marginals as $\rho_{\alpha,p}^{(n)}$. We
define
$\cQ:=\Span\{\ket{j\cdots j}:j=1,\ldots,d\}
\subset(\bbC^d)^{\ox(n-1)}$, and denote by $P_{\cQ}$ the projection
onto $\cQ$. The equality of the marginal on
$A_1\cdots A_{n-1}$ shows
\begin{eqnarray}
\label{eq:tau_support_Q}
\tau=(P_{\cQ}\ox I_s)\tau(P_{\cQ}\ox I_s).
\end{eqnarray}
Therefore, there exist $B_{j,k}\in\bbM_s$ such that
\begin{eqnarray}
\label{eq:tau_block_decomposition}
\tau
=\sum_{j,k=1}^d
\ket{j\cdots j}\bra{k\cdots k}\ox B_{j,k}.
\end{eqnarray}
For any $1\leq \ell\leq n-1$, tracing out $A_\ell$ in \eqref{eq:tau_block_decomposition} and comparing the result with the corresponding marginal of \eqref{eq:rho_alpha_p_n} shows $B_{j,j}=p_j\alpha_j$. Comparing the marginals on $A_1\cdots A_{n-1}$ shows
\begin{eqnarray}
\label{eq:B_constraints_UDA}
B_{j,j}=p_j\alpha_j,\qquad
\tr B_{j,k}=p_j\delta_{j,k}.
\end{eqnarray}
We define the linear map $\Gamma:\bbM_d\ra\bbM_s$ by
\begin{eqnarray}
\label{eq:Gamma_from_tau}
\Gamma(\ket{j}\bra{k})
=\frac{B_{j,k}}{\sqrt{p_jp_k}},
\qquad j,k=1,\ldots,d.
\end{eqnarray}
Let $V:\bbC^d\ra(\bbC^d)^{\ox(n-1)}$ be the isometry defined by $V\ket{j}=\ket{j\cdots j}$, and let $D:=\sum_{j=1}^d p_j\ket{j}\bra{j}$. By \eqref{eq:tau_support_Q}, the Choi matrix of $\Gamma$ satisfies
\begin{eqnarray}
\label{eq:Choi_Gamma_positive}
J(\Gamma)
=(D^{-1/2}V^\dagger\ox I_s)\tau
(VD^{-1/2}\ox I_s)\geq 0.
\end{eqnarray}
Thus $\Gamma$ is completely positive. Eqs.~\eqref{eq:B_constraints_UDA} and \eqref{eq:Gamma_from_tau} show that $\tr\Gamma(\ket{j}\bra{k})=\delta_{j,k}$ and $\Gamma(\ket{j}\bra{j})=\alpha_j$. Therefore, $\Gamma$ is a channel in the set \eqref{eq:channel_set_diagonal}. Lemma \ref{lem:unique_channel_from_diagonal} (ii) implies that $\Gamma=\Lambda_0$. It follows from \eqref{eq:Gamma_from_tau} that $B_{j,k}=0$ for $j\neq k$. Substituting this into \eqref{eq:tau_block_decomposition} and using \eqref{eq:B_constraints_UDA}, we obtain $\tau=\rho_{\alpha,p}^{(n)}$. Hence $\rho_{\alpha,p}^{(n)}$ is $(n-1)$-UDA.

(ii) If $k=n-1$ and the condition in (i) holds, the conclusion follows from (i). Conversely, suppose $\rho_{\alpha,p}^{(n)}$ is $k$-UDA for a fixed $1\leq k\leq n-1$. Since equality of all $(n-1)$-marginals implies equality of all $k$-marginals, $k$-UDA implies $(n-1)$-UDA. Part (i) then shows that every $\alpha_j$ is pure. We write $\alpha_j=\ket{a_j}\bra{a_j}$.

It remains to exclude $1\leq k\leq n-2$. We define $\ket{\phi_j}:=\ket{j\cdots j}\ox\ket{a_j}$. For any $r\neq t$ and $0<|\epsilon|\leq 1$, we define
\begin{eqnarray}
\label{eq:rho_epsilon_lower_order}
\sigma_\epsilon
=\rho_{\alpha,p}^{(n)}
+\epsilon\sqrt{p_rp_t}\ket{\phi_r}\bra{\phi_t}
+\bar\epsilon\sqrt{p_rp_t}\ket{\phi_t}\bra{\phi_r}.
\end{eqnarray}
The vectors $\ket{\phi_j}$ are mutually orthogonal. On $\Span\{\ket{\phi_r},\ket{\phi_t}\}$, the modified coefficient matrix in \eqref{eq:rho_epsilon_lower_order} has determinant $p_rp_t(1-|\epsilon|^2)\geq 0$, so $\sigma_\epsilon$ is a state. It is distinct from $\rho_{\alpha,p}^{(n)}$. Every $k$-marginal with $k\leq n-2$ is obtained by tracing out at least one of $A_1,\ldots,A_{n-1}$, and the two off-diagonal terms in \eqref{eq:rho_epsilon_lower_order} vanish under this partial trace because $r\neq t$. Thus $\sigma_\epsilon$ and $\rho_{\alpha,p}^{(n)}$ have the same $k$-marginals, so $\rho_{\alpha,p}^{(n)}$ is not $k$-UDA. Therefore, $k=n-1$, and the condition in (i) follows again from (i).
\end{resultproof}

\section{Proofs for Section~\ref{sec:applications-GME}}
\label{app:proofs-GME-applications}

\begin{resultproof}{Proposition~\ref{pro:GME_detection_length_families}}
(i.a) We set $\ket w=b\ket{100}+c\ket{010}+d\ket{001}$ and
$\cL=\cR(\rho)=\Span\{\ket{000},\ket w\}$. For
$\ket\psi=\alpha\ket{000}+\beta\ket w$ with $\beta\neq0$, suitable
minors across $A:BC$, $B:AC$, and $C:AB$ are
$-\beta^2bd$, $-\beta^2bc$, and $-\beta^2cd$, respectively. Hence
$\ket\psi$ is GME, and the only biseparable vectors in $\cL$ are
multiples of $\ket{000}$. These vectors do not span $\cL$, so a state
with range $\cL$ cannot be biseparable. Thus $\rho$ is GME. Moreover,
$\cL\subseteq\cT$, so
Proposition~\ref{pro:T_Ct_2UDA} shows that $\rho$ is 2-UDA, and therefore $\ell_{\rm GME}(\rho)=2$.

(i.b) For $0<\theta\leq\pi/4$, a direct rank-one test shows that every
vector in $\cC_\theta$ that is product across $B:AC$ or $C:AB$ is
fully product. Thus every biseparable vector in $\cC_\theta$ is
product across $A:BC$, and a state whose range is contained in $\cC_\theta$ is
biseparable if and only if it is separable across $A:BC$. Under the
identification
$\Span\{\ket{00},\cos\theta\ket{01}+\sin\theta\ket{10}\}\simeq\bbC^2$,
the positive-partial-transpose (PPT) criterion for $2\ox2$ states
\cite{horodecki1996separability} shows that $\rho$ is GME if and only
if $\rho^{T_A}\not\geq0$. Proposition~\ref{pro:T_Ct_2UDA} again shows
that $\rho$ is 2-UDA. Hence, $\ell_{\rm GME}(\rho)=2$.

(i.c) For $\ket\psi=\alpha u+\beta v$, the minors in columns $(1,2)$
and $(2,4)$ of the coefficient matrices across $A:BC$ and $B:AC$ are
$-\alpha^2$ and $-\beta^2$, while those across $C:AB$ are
$-\alpha^2$ and $-4\beta^2$. Hence $\Span\{u,v\}$ contains no nonzero
biseparable vector, and every state with this range is GME. The state
$\rho_2$ in Example~\ref{exp:rank_two_GHZ_free_UDA} is 2-UDA and has
this range. By Lemma~\ref{le:lr+1-ls}, $\rho$ is also 2-UDA, and thus $\ell_{\rm GME}(\rho)=2$.

(ii) We have
$\cR(\sigma_{\Lambda,p}^{(n)})\subseteq\cQ\ox\bbC^s$. We first
observe that every biseparable vector in this subspace is product
across $\cQ:A_n$. Indeed, this is immediate for the bipartition
$A_1\cdots A_{n-1}:A_n$. Any other bipartition separates some of the
first $n-1$ systems. If we write a vector in the subspace as
$\sum_j\ket{j}^{\ox(n-1)}\ket{x_j}$, productness across such a
bipartition forces at most one $\ket{x_j}$ to be nonzero, since the
labels on both sides are mutually orthogonal. The vector is then
product across $\cQ:A_n$ as well. It follows that entanglement across
$\cQ:A_n$ implies that $\sigma_{\Lambda,p}^{(n)}$ is GME.

By Lemma~\ref{le:rho_sigma_compatible},
$\sigma_{\Lambda,p}^{(n)}$ has the same $(n-1)$-marginals as
$\rho_{\Lambda,p}^{(n)}$. The latter is fully separable, and equality
of the $(n-1)$-marginals also shows equality of every proper
marginal. Hence no collection of proper marginals detects the GME of
$\sigma_{\Lambda,p}^{(n)}$. The full state does detect it, and thus
$\ell_{\rm GME}(\sigma_{\Lambda,p}^{(n)})=n$. Finally, when $d=s=2$,
the effective state on $\cQ\ox\bbC^2$ is a two-qubit state, so the PPT
criterion proves the last assertion.
\end{resultproof}

\begin{resultproof}{Proposition~\ref{pro:GME-from-separable-marginals}}
We first prove that all 2-marginals are separable. For $i<j$, we set
$s_{ij}:=1-x_i-x_j$ and
$\ket{v_{ij}}:=\sqrt{x_i}\ket{10}+\sqrt{x_j}\ket{01}$. The
$(i,j)$-marginal of $\rho_{p;\boldsymbol{x}}$ is
$$
(\rho_{p;\boldsymbol{x}})_{A_iA_j}
=p s_{ij}\proj{00}
+p\proj{v_{ij}}
+(1-p)\proj{11}.
$$
Its partial transpose is positive semidefinite if and only if
$
p(1-p)s_{ij}-p^2x_ix_j\geq0,
$
which is equivalent to
$p\leq s_{ij}/(s_{ij}+x_ix_j)$. Hence the assumption
$p\leq p_*(\boldsymbol{x})$ and the PPT criterion for two-qubit states
\cite{horodecki1996separability} show that every 2-marginal is
separable.

We next prove that these separable 2-marginals jointly certify GME.
Let $\sigma$ be a state 2-compatible with
$\rho_{p;\boldsymbol{x}}$. The kernel of
$(\rho_{p;\boldsymbol{x}})_{A_iA_j}$ is generated by
$$
\ket{\kappa_{ij}}
:=\sqrt{x_i}\ket{01}-\sqrt{x_j}\ket{10}.
$$
Since $\sigma\geq0$ and $(\rho_{p;\boldsymbol{x}})_{A_iA_j}=\sigma_{A_iA_j}$, we have 
\begin{eqnarray}
\label{eq:compatible-support-filtered-symmetric}
\cR(\sigma)\subseteq\cK_{\boldsymbol{x}}
:=\bigcap_{i<j}\ker\bigl(
\bra{\kappa_{ij}}_{A_iA_j}\ox I_{A_{[n]\setminus\{i,j\}}}
\bigr).
\end{eqnarray}
We define
$D_i:=\ket0\bra0+\sqrt{x_i}\ket1\bra1$ and
$D_{\boldsymbol{x}}:=D_1\ox\cdots\ox D_n$. The relations in
\eqref{eq:compatible-support-filtered-symmetric} show that
$$
\cK_{\boldsymbol{x}}
=D_{\boldsymbol{x}}\operatorname{Sym}^n(\bbC^2),
$$
where
$$
\operatorname{Sym}^n(\bbC^2)
=\left\{\ket{\phi}\in(\bbC^2)^{\ox n}:
F_{ij}\ket{\phi}=\ket{\phi}\text{ for all }1\leq i<j\leq n\right\},
$$
Let $F_{A_iA_j}$ denote the flip operator on
$\cH_{A_i}\ox\cH_{A_j}$, and let $F_{ij}$ denote its natural
extension to the $n$-qubit Hilbert space. Thus $F_{ij}$ exchanges
$A_i$ and $A_j$ in the preceding definition.
 A symmetric pure state that is product across a bipartition is fully
product. Indeed, suppose
$\ket{\phi}=\ket{u}_{A_{\cS}}\ox\ket{v}_{A_{\cS^c}}$ for
$\varnothing\subsetneq\cS\subsetneq[n]$, and choose $i\in\cS$ and
$j\in\cS^c$. Symmetry
implies that all one-qubit reduced states are the same state $\tau$,
while the product form implies that the reduced state on $A_iA_j$ is
$\tau\ox\tau$. We set
$P^-_{A_iA_j}:=(I_{A_iA_j}-F_{A_iA_j})/2$. Then
$$
0=\tr\bigl[P^-_{A_iA_j}(\tau\ox\tau)\bigr]
=\frac12\bigl(1-\tr\tau^2\bigr).
$$
Thus $\tau$ is pure and
$\ket{\phi}=(a\ket0+b\ket1)^{\ox n}$ for some $a,b\in\bbC$. Since
$D_{\boldsymbol{x}}$ is
invertible and local, every
biseparable vector in $\cK_{\boldsymbol{x}}$ therefore has the form
\begin{eqnarray}
\label{eq:filtered-symmetric-product-vector}
\bigotimes_{i=1}^n
\bigl(a\ket0+b\sqrt{x_i}\ket1\bigr)
\end{eqnarray}
for some $a,b\in\bbC$.

Suppose, to the contrary, that $\sigma$ is biseparable. We may
decompose it into normalized states of the form in
\eqref{eq:filtered-symmetric-product-vector}. Let $\ket{\varphi_i}$ be
the normalized state proportional to $a\ket0+b\sqrt{x_i}\ket1$. For
$t=|\frac{b}{a}|^2\in[0,\infty]$, we have 
$
q_i(t):=\abs{\braket{1}{\varphi_i}}^2
=\frac{tx_i}{1+tx_i},
$
with the usual values at $t=0$ and $t=\infty$. Thus, for a probability
measure $\mu$ induced by this decomposition, the parameter of  $\ket{11}$
of the $(i,j)$-marginal is
$\int q_i(t)q_j(t)\,d\mu(t)$. The target marginals require
\begin{eqnarray}
\label{eq:pair-population-compatible-GME}
\int q_i(t)q_j(t)\,d\mu(t)=1-p,
\qquad i<j.
\end{eqnarray}
We choose $r,s\in[n]$ such that $x_r<x_s$ and choose
$j\notin\{r,s\}$. Applying 
\eqref{eq:pair-population-compatible-GME} for $(s,j)$ and $(r,j)$, we have
$
\int q_j(t)\bigl(q_s(t)-q_r(t)\bigr)\,d\mu(t)=0.
$
Indeed,
$
q_j(t)\bigl(q_s(t)-q_r(t)\bigr)
$
 is strictly positive for $0<t<\infty$ and vanishes at the two
endpoints. Since $\mu$ is a positive measure, the vanishing integral
implies that $\mu$ is supported on $\{0,\infty\}$. The endpoint $t=0$
corresponds to $b=0$ in
\eqref{eq:filtered-symmetric-product-vector}, while $t=\infty$
corresponds to $a=0$. Thus every pure-state projector in the preceding
biseparable decomposition is either $\proj{0}^{\ox n}$ or
$\proj{1}^{\ox n}$. Consequently,
$
\sigma=\lambda\proj{0}^{\ox n}+(1-\lambda)\proj{1}^{\ox n}
$
for some $0\leq\lambda\leq1$. Such a mixture has no
$\ket{10}\bra{01}$ term in any 2-marginal, whereas the corresponding
term of $(\rho_{p;\boldsymbol{x}})_{A_iA_j}$ equals
$p\sqrt{x_ix_j}>0$. This contradiction shows that every state
2-compatible with $\rho_{p;\boldsymbol{x}}$ is GME. Therefore, the
full collection of 2-marginals certifies the GME of
$\rho_{p;\boldsymbol{x}}$. On the other hand, its 1-marginals are
compatible with the fully product state
$
\bigotimes_{i=1}^n(\rho_{p;\boldsymbol{x}})_{A_i},
$
and hence no collection of 1-marginals can certify GME. We thus obtain
$\ell_{\rm GME}(\rho_{p;\boldsymbol{x}})=2$.

We finally prove that this certification does not arise from the
2-UDA property when $n\geq4$. We set
$$
H:=\ket{w_{\boldsymbol{x}}}\bra{1}^{\ox n}
+\ket{1}^{\ox n}\bra{w_{\boldsymbol{x}}}.
$$
The two computational-basis vectors in each summand differ on $n-1$
qubits, and hence every 2-marginal of $H$ vanishes. Thus
$H\in\cN_2$. Moreover, for every nonzero real $\varepsilon$ with
$|\varepsilon|\leq\sqrt{p(1-p)}$, the restriction of
$\rho_{p;\boldsymbol{x}}+\varepsilon H$ to
$\Span\{\ket{w_{\boldsymbol{x}}},\ket{1}^{\ox n}\}$ is
$
\bma
p&\varepsilon\\
\varepsilon&1-p
\ema
\geq0.
$
Therefore $\rho_{p;\boldsymbol{x}}+\varepsilon H$ is a state distinct
from and 2-compatible with $\rho_{p;\boldsymbol{x}}$, where we have
also used $\tr H=0$. Hence
$\rho_{p;\boldsymbol{x}}$ is not 2-UDA. By the preceding part, every
such compatible state is nevertheless GME. Thus the 2-marginals
certify the entanglement property without determining the global state. 
\end{resultproof}

\bibliographystyle{unsrt}
\bibliography{UDA}
\end{document}